\renewenvironment{proof}{\begin{IEEEproof}}{\end{IEEEproof}}
\theoremstyle{definition}
\newtheorem{definition}{Definition}
\newtheorem{notation}{Notations}
\theoremstyle{plain}
\newtheorem{lemma}{Lemma}
\newtheorem*{lemma*}{Lemma}
\newtheorem{cor}{Corollary}
\newtheorem{theorem}{Theorem}
\newtheorem*{theorem*}{Theorem}
\newtheorem{proposition}{Proposition}
\newtheorem*{proposition*}{Proposition}
\newtheorem{fact}{Fact}
\theoremstyle{remark}
\newtheorem{remark}{Remark}
\newtheorem{example}{Example}
\setlist[enumerate,1]{label={(\arabic*)}}
\title{Compositional Taylor expansion in cartesian differential categories}
\author{
\IEEEauthorblockN{Walch Aymeric}
\IEEEauthorblockA{
\textit{Université Paris Cité, CNRS, IRIF}\\
F-75013 Paris, France \\
walch@irif.fr}}
\crefname{proposition}{\text{prop.}}{\text{props.}}
\Crefname{proposition}{\text{Prop.}}{\text{Props.}}
\crefname{definition}{\text{def.}}{\text{defs.}}
\Crefname{definition}{\text{Def.}}{\text{Defs.}}
\crefname{axiom}{\text{ax.}}{\text{ax.}}
\Crefname{axiom}{\text{Ax.}}{\text{Ax.}}
\crefname{theorem}{\text{thm.}}{\text{thms.}}
\Crefname{theorem}{\text{Thm.}}{\text{Thms.}}
\crefname{cor}{\text{cor.}}{\text{cor.}} 
\Crefname{cor}{\text{Cor.}}{\text{Cor.}}
\crefname{remark}{\text{rem.}}{\text{rem.}}
\Crefname{remark}{\text{Rem.}}{\text{Rem.}}
\crefname{notation}{\text{not.}}{\text{not.}}
\Crefname{notation}{\text{Not.}}{\text{Not.}}
\crefname{section}{\text{sec.}}{\text{sec.}}
\Crefname{section}{\text{Sec.}}{\text{Sec.}}
\crefname{figure}{\text{fig.}}{\text{fig.}}
\Crefname{figure}{\text{Fig.}}{\text{Fig.}}
\crefname{equation}{}{}
\Crefname{equation}{}{}
\newcommand{\version}[2]{#2} 
\begin{document}
\maketitle

\begin{abstract}
    This paper provides a compositional approach to Taylor expansion, in the 
    setting of cartesian differential categories.
    Taylor expansion is captured here by a functor
    that generalizes the tangent bundle functor to higher order derivatives.
    The fundamental properties of Taylor expansion then boils down to naturality 
    equations that turns this functor into a monad. This monad provides a categorical
    approach to higher order dual numbers and the jet bundle construction 
    used in automated differentiation. 
\end{abstract}


The combination of the theory of the differential calculus with the theory 
of programming languages has seen a tremendous growth in the last decades,
most notably in the fields of automated differentiation (AD)~\cite{Baydin17}
and of the differential $\lambda$-calculus~\cite{EhrhardRegnier02}.
Both AD and the differential $\lambda$-calculus aim at computing the derivative
of a program in a compositional way, this compositionality 
is crucial to scale those methods to complex assemblies of programs.
Because of this interplay between derivatives
and composition, category theory provides a strong mathematical 
basis for both of those fields.
Categorical semantics provides critical proofs
methods of correctness of the AD algorithm\cite{Mazza21,Huot23}, and the 
differential lambda calculus is deeply related to the 
categorical semantics of Linear Logic (LL) from its 
very inception~\cite{EhrhardRegnier02,Manzonetto12}.
Among those categorical approaches, cartesian differential 
categories~\cite{Blute09} provide a direct 
axiomatization of derivatives in any cartesian category. As such, they 
serve as a framework to understand the differential calculus through the lenses 
of compositionality. 

The compositionality of the derivative is expressed by the
\emph{chain rule}: $\derive{(g \comp f)}
= \derive[f(x)]{g} \cdot \derive{f}$.
%
The issue of the chain rule is that it is not entirely compositional, because
the derivative $\derivenoarg{(g \comp f)}$ also depends on $f$, and not only on 
$\derivenoarg{g}$ and $\derivenoarg{f}$. 
For this reason, one often consider the \emph{tangent bundle} 
operator $\T$ that intuitively maps $f : X \arrow Y$
to the function $\T f : (x, u) \mapsto (f(x), \derive{f} \cdot u)$. This operator
exists in any cartesian differential category, and the chain rule boils down 
to a functoriality equation on $\T$: 
$\T(g \comp f) = \T g \comp \T f$. 
Furthermore, the other axioms of the differential calculus 
(such as the linearity of the derivative or the symmetry of the higher 
order derivatives) turn out to be equivalent to 
naturality equations~\cite{Cockett14,Walch23} 
that turn $\T$ into a \emph{monad}~\cite{Walch23} whose algebraic 
structure is similar to that of dual numbers widely used in 
AD.
This suggests that differentiation is an effect in the sense 
of Moggi~\cite{Moggi91}, further cementing the 
use of category theory as a strong mathematical 
foundation for the differential calculus.

\textbf{Contribution of the paper:}
This article provides a similar compositional 
approach to Taylor expansion, in any cartesian differential category.
%
%
The computation of the Taylor expansion of a composition $g \comp f$
is not straightforward because
it requires to compute the higher order derivatives
$\hod{(g \comp f)} n$.
These derivatives are given by the Fa\'a di Bruno 
formula~\cite{Fraenkel78} \cref{eq:faa-di-bruno} which is 
notoriously combinatorial.
We solve this issue by defining an operator $\Tn$ 
that maps $f : X \arrow Y$ to $\Tn f : X \times X^n
    \arrow Y \times Y^n$
such that the $i$-th component of $\Tn f(x, u_1, \ldots, u_n)$ 
provides the term of order $\varepsilon^i$ in the Taylor expansion of 
$f(x + \varepsilon u_1 + \cdots + \varepsilon^n u_n)$.
The order $n$ Taylor expansion of $f$
can be recovered from $\Tn f$ simply
by computing $\Tn f(x, u, 0, \ldots, 0)$, but generalizing 
this to any vector $(x, u_1, \ldots, u_n)$ is the key to turn this 
into a compositional operation such that 
$\Tn (g \comp f) = \Tn g \comp \Tn f$.
We prove that $\Tn$ is a functor, and satisfies naturality 
equations that are clear analogues of the naturality equations of 
the tangent bundle $\T$. These equations turn
$\Tn$ into a monad, whose algebraic structure is 
similar to that of higher order dual numbers recently used in AD~\cite{Szirmay20}.
We explore some consequences of these observations, such as expressing the 
partial Taylor expansion with regard to one variable in terms 
of monad strengths, or by combining multiple Taylor expansions using 
distributive laws~\cite{Beck69}.

\textbf{Proofs methods:}
The combinatorics of Taylor expansion usually
involves quite a lot of heavy lifting.
A key contribution of this article is to hide all this combinatorics behind 
categorical equations. In particular, 
we will heavily rely on functoriality and naturality equations on the iterated tangent 
bundle functor $\T^n f : X^{2^n} \arrow Y^{2^n}$ 
rather than direct style equations involving the 
higher order derivatives $\hod{f} n : X \times X^{n} \arrow Y$.
More precisely, we prove that $\T^n$ has a canonical monad 
structure that arises from a 
\emph{distributive law}~\cite{Beck69} of the monad $\T$ on itself.
%
Then, we express $\Tn f : X \times X^{n} \arrow Y \times Y^{n}$ using 
$\T^n f$ instead of $\hod{f}{n}$. To this end, we 
generalize to any cartesian differential category the notion of 
\emph{higher order directional derivative} of~\cite{Huang06}, 
that we tweak with coefficients in order to erase redundancy.
%
Then, it is quite straightforward to check that the 
functoriality and naturality equations on $\T^n$ induce 
similar equations on $\Tn$.
This process also yields a family of morphisms $(\Stree)_X : \Tn X \arrow \T^n X$ 
that turns out to be a monad morphism~\cite{Maranda66}, or monad 
transformer~\cite{Liang95}, from $\Tn$ to $\T^n$.
This gives an insight on the link between the 
higher order derivatives as studied in~\cite{Lemay18,Garner21}
and Taylor expansion.

\textbf{Links with the differential $\lambda$-calculus:}
%
The operator $\Tn$  is directly 
inspired from the \emph{coherent Taylor expansion} of~\cite{EhrhardWalch25}.
In this article, Ehrhard and Walch provide 
a new categorical semantics of the differential $\lambda$-calculus~\cite{EhrhardRegnier02}
and of Taylor expansion~\cite{Ehrhard08} that accounts 
for the deterministic nature of computation, as opposed to the previously known 
models of~\cite{Manzonetto12}. This determinism is modeled by a notion of partial sums 
that prevents the summation of incompatible information (such as 
summing \texttt{true} and \texttt{false} in the type of boolean). 
Taylor expansion is axiomatized in their work as a bimonad, 
whose monad structure is an infinitary counterpart of our monad $\Tn$.
There are however two main differences with our work. First, 
they deal with infinite partial sums which are inherently positives in 
the sense that $x+y = 0$ implies $x = y = 0$. 
This clashes with the cartesian differential categories arising from 
traditional analysis. More importantly, Ehrhard and Walch do not provide any explicit 
description of their Taylor expansion in terms of derivatives outside of examples.
This sharply contrasts with our work in which we explicitly define Taylor expansion
from the derivative, in any cartesian differential category.
We explore in \cref{sec:taylor-expansion-def,sec:infinitary-sums} the relation 
between their Taylor expansion and ours.
In particular, we use our construction $\Tn$ 
to prove that any model of the differential lambda 
calculus and of Taylor expansion~\cite{Manzonetto12} is an instance 
of coherent Taylor expansion~\cite{EhrhardWalch25}. This solves an open problem 
of~\cite{EhrhardWalch25}, and ensures that coherent Taylor expansion is a 
generalization of the previous axiomatization.

\textbf{Related work:} The idea behind $\Tn$ is very similar to higher 
order dual numbers~\cite{Szirmay20} 
and jet bundles that have found recent applications in AD for computing 
higher order derivatives~\cite{Betancourt18,Huot22}.
It is quite likely that a generalization of the content of this 
article to tangent categories~\cite{Cockett14} would provide 
a categorical abstraction of jet bundles. 
%

\textbf{Outline of the paper: } \Cref{sec:cdc} reintroduces 
necessary material on cartesian differential categories and 
higher order derivatives.
\Cref{sec:tangent-bundle} introduces the tangent bundle $\T$,
and reviews how the axioms of cartesian differential categories 
boil down to functoriality and naturality equations on $\T$.
\Cref{sec:iterated-monad} exhibits the canonical monad 
structure of $\T^n$.
\Cref{sec:taylor-expansion} defines $\Tn$, explains why it corresponds to 
a Taylor expansion, and proves its functoriality as a consequence of the
functoriality of $\T^n$.
\Cref{sec:taylor-expansion-monad} proves that
$\Tn$ is a monad, by using the monad structure of $\T^n$.
\Cref{sec:taylor-expansion-sound} proves that any functor satisfying 
the same naturality equations as $\Tn$ is the
Taylor expansion associated to a derivative, 
meaning that the categorical equations of $\Tn$ exactly
capture Taylor expansions.
Finally, \cref{sec:infinitary-sums} considers infinitary 
Taylor expansions in cartesian differential categories 
featuring arbitrary countable sums.
We prove that in that setting, the derivative induces a coherent Taylor 
expansion as defined in \cite{EhrhardWalch25}. Thus, every model of 
\cite{Manzonetto12} is a model of \cite{EhrhardWalch25}.

\section{Cartesian differential categories} \label{sec:cdc}

We recall necessary material on cartesian differential categories,
with the specificity that the homset are not only commutative monoids but 
$\semiring$-semimodules over a commutative semiring $\semiring$.
We mainly recall material from~\cite{Garner21}, with the difference
that we use the word \emph{additive} instead of linear 
in order to stick to the conventions of~\cite{Blute09,EhrhardWalch25}.
A commutative semiring $\semiring$ is a set equipped with 
two commutative monoid structures $(\semiring, 0, +)$ and $(\semiring, 1, \cdot)$ 
such that $a \cdot 0 = 0$ and $a \cdot (b+c) = a \cdot b + a \cdot c$.
For example, the set of natural numbers $\N$ is a commutative semiring, with 
the usual addition and multiplication of natural numbers. For any semiring
$\rig$, there is a canonical semiring morphism 
$\phi : \N \arrow \rig$ given by $\phi(n) = 1_{\rig} + \cdots + 1_{\rig}$.

\begin{definition} \label{def:multiplicative-inverse}
A semiring $\rig$ has multiplicative inverses for integers if for all $n \in \N^*$, 
there exists an element $\frac{1}{n} \in \rig$ such that $\frac{1}{n} \cdot \phi(n) = 1_{\rig}$.
\end{definition}

A $\semiring$-semimodule consists of the data of a commutative monoid $M$
and a multiplicative action $\_ \cdot \_ : \semiring \times M \arrow M$
that respects addition in each variable, and such that 
for all $a, b \in \semiring$ and $x \in M$, 
$(a \cdot b) \cdot x = a \cdot (b \cdot x)$ (this is why the multiplicative
action has the same name as the multiplication in $\semiring$).
A map $f : M \arrow L$ between two $\semiring$-semimodules is 
$\semiring$-additive if $f(x+y) = f(x) + f(y)$ and 
$f(a \cdot x) = a \cdot f(x)$ (the case 
$a = 0$ ensures that $f(0) = 0$).

\begin{notation}
For any cartesian category $\cat$, we write the categorical product 
$X_0 \times X_1$, the pairing of $f \in \cat(X, Y_0)$ with $g \in \cat(X, Y_1)$
as $\prodPair{f}{g} \in \cat(X, Y_0 \times Y_1)$, and the projections as
$\proj_0 \in \cat(X_0 \times X_1, X_0)$ and $\proj_1 \in \cat(X_0 \times X_1, X_1)$.
We write $X^{\times n}$ and $f^{\times n}$ the cartesian product of $X$ and $f$ 
with itself taken $n$-times.
\end{notation}

\begin{definition}
A left $\semiring$-additive category 
is a category $\cat$ such that each hom-set $\cat(X, Y)$
has a $\semiring$-semimodule structure and such that for all object $Z$ of $\cat$ and
$f \in \cat(X, Y)$, the map 
$\_ \comp f : \cat(Y, Z) \arrow \cat(X, Z)$ is $\semiring$-additive.

A morphism $h \in \cat(Y, Z)$ is $\semiring$-additive if for all object $X$,
the map $h \comp \_ : \cat(X, Y) \arrow \cat(X, Z)$ is $\semiring$-additive. 
A $\semiring$-additive category is a left $\semiring$-additive 
category in which all the morphisms are $\semiring$-additive.  

A cartesian left $\semiring$-additive category is a left $\semiring$-additive 
category that is cartesian and such that the projections of the cartesian product 
are $\semiring$-additive. This last condition means that the $\semiring$-module structure is 
compatible with the tupling, in the sense that
$\prodPair{f_0}{g_0} + \prodPair{f_1}{g_1} 
= \prodPair{f_0+f_1}{g_0 + g_1}$ 
and for all $r \in \semiring$, 
$r \cdot \prodPair{f}{g} = \prodPair{r \cdot f}{r \cdot g}$. 
\end{definition}
%

Given any cartesian left $\semiring$-additive category,
it is straightforward to check that there exists 
a $\semiring$-additive category $\catAdd$ whose objects are 
the objects of $\cat$, and whose morphisms are the 
$\semiring$-additive morphisms of $\cat$.
The composition of $f \in \catAdd(X, Y)$ with $g \in \catAdd(Y, Z)$
is $g \comp f$, but we write this composition 
$g \compl f$ to stress the "linear" nature of the morphisms.
By $\semiring$-additivity of the projections, $\catAdd$ is
also cartesian. 


\begin{definition}
A map $f \in \cat(X_0 \times \cdots \times X_n, Y)$ is $k$-additive 
in an argument $i$ if the map 
$g \mapsto f \comp (\prod_{l=0}^{i-1} \id_{X_l} \times g \times 
\prod_{l=i+1}^n \id_{X_l})$
is $\semiring$-additive.
\end{definition}

Let $\Sn$ be a functor on $\catAdd$ defined as 
$\Sn X = X \times X^{\times n}$ and $\Sn f = f \times f^{\times n}$.
In particular, let $\S = \Sn[1]$, so that $\S X = X \times X$ and 
$\S f = f \times f$.
For all $i \in \interval{0}{n}$, we write $\Sproj_i 
= \proj_i : \Sn \naturalTrans \idfun$
($\Sproj_i$ and $\proj_i$ are the same, but we use different 
names to stress their respective role).
Intuitively, an element of $\Sn X$ is a vector 
$(x, u_1, \ldots, u_n)$, where $x$ is the base point, and $u_i$ is an order 
$i$ variation around $x$. 
The following definition corresponds to
Def 2.4 of~\cite{Garner21}, with a slight variation on 
\ref{def:cdc-linear} and \ref{def:cdc-schwarz} that are equivalent to 
items (vi) and (vii) of~\cite{Garner21} thanks to Prop. 4.2
of~\cite{Cockett14}. 
\begin{definition} \label{def:cdc}
A derivative in a cartesian left $\semiring$-additive category $\cat$ 
is a family of operators
$\d : \cat(X, Y) \arrow \cat (\S X, Y)$ for all objects $X, Y$ such that \begin{itemize}
\item[\ref{def:cdc-projections}] for any $\proj_i \in \cat(X_0 \times \cdots \times X_n, X_i)$, 
$\d \proj_i = \proj_i \compl \Sproj_1$; \labeltext{$\d$-proj}{def:cdc-projections}
\item[\ref{def:cdc-additive}] $\d$ is $\semiring$-additive; \labeltext{$\d$-sum}{def:cdc-additive}
\item[\ref{def:cdc-chain}] $\d \id = \Sproj_1$ and $\d(g \comp f) = \d g \comp \prodPair{f \comp \Sproj_0}{\d f}$
\labeltext{$\d$-chain}{def:cdc-chain};
\item[\ref{def:cdc-left-additive}] $\d f$ is $\semiring$-additive in its second argument 
\labeltext{$\d$-add}{def:cdc-left-additive};
\item[\ref{def:cdc-linear}] $\d \d f \comp \prodtuple{x, 0, 0, u} 
= \d f \comp \prodtuple{x, u}$; \labeltext{$\d$-lin}{def:cdc-linear}
\item[\ref{def:cdc-schwarz}] $\d \d f \comp \prodtuple{x, u, v, w} =
\d \d f \comp \prodtuple{x, v, u, w}$. \labeltext{$\d$-sym}{def:cdc-schwarz}
\end{itemize}
A cartesian $\rig$-differential category is a cartesian 
left $\semiring$-additive category equipped with a derivative. 
\end{definition}
We refer to sec. 2.2 of~\cite{Garner21} for a more involved description 
of those axioms.
As explained in~\cite{Garner21}, any monoid is a $\N$-semimodule, so 
the cartesian differential 
categories of~\cite{Blute09} based
on left additive categories where the hom-sets are only assumed to be monoids are
instances of this definition with $\semiring = \N$. 
The most fundamental example of cartesian differential category is the 
category $\SMOOTH$ whose objects are the Euclidean vector spaces 
$\R^n$ and whose maps are the smooth functions. The derivative 
$\d$ in $\SMOOTH$ is the usual derivative of functions, 
$\d f(x, u) = \derive{f} \cdot u$. 

\begin{proposition}[Lemma 2.6 of \cite{Lemay18}] \label{prop:D-pairing}
    For any operator $\d$ that satisfies 
    \ref{def:cdc-projections} and \ref{def:cdc-chain}, 
    $\d \prodPair{f}{g} = \prodPair{\d f}{\d g}$.
\end{proposition}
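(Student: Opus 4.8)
The plan is to exploit the fact that $f$ and $g$ can each be recovered from the pairing $\prodPair{f}{g}$ by post-composing with a projection, and then to push the derivative through that post-composition using the chain rule \ref{def:cdc-chain}. Concretely, writing $h = \prodPair{f}{g}$, I would start from the identities $f = \proj_0 \comp h$ and $g = \proj_1 \comp h$, which hold by the universal property of the cartesian product in $\cat$.

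First I would differentiate $f = \proj_0 \comp h$. Applying \ref{def:cdc-chain} gives $\d f = \d \proj_0 \comp \prodPair{h \comp \Sproj_0}{\d h}$. Then I would substitute \ref{def:cdc-projections}, which yields $\d \proj_0 = \proj_0 \compl \Sproj_1$, so that $\d f = \proj_0 \comp \Sproj_1 \comp \prodPair{h \comp \Sproj_0}{\d h}$. The key simplification is the computation $\Sproj_1 \comp \prodPair{h \comp \Sproj_0}{\d h} = \d h$, which is just the second projection equation for the product defining $\S$. This collapses the expression to $\d f = \proj_0 \comp \d h$. The identical argument with $\proj_1$ in place of $\proj_0$ gives $\d g = \proj_1 \comp \d h$.

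Finally I would assemble the two equations: since $\d f = \proj_0 \comp \d \prodPair{f}{g}$ and $\d g = \proj_1 \comp \d \prodPair{f}{g}$, the universal property of the product gives
\[
\prodPair{\d f}{\d g} = \prodPair{\proj_0 \comp \d \prodPair{f}{g}}{\proj_1 \comp \d \prodPair{f}{g}} = \d \prodPair{f}{g},
\]
using the eta-law $\prodPair{\proj_0 \comp k}{\proj_1 \comp k} = k$ for any morphism $k$ into a product. This closes the proof.

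I do not anticipate a genuine obstacle here, since the argument is a short diagram chase that uses only the two hypothesized axioms together with the product laws. The one point requiring care is bookkeeping of the two distinct families of projections, $\proj_i$ on the codomain product $Y_0 \times Y_1$ and $\Sproj_i$ on the source doubling $\S X$; in particular one must not conflate $\Sproj_1$ (which selects the derivative component $\d h$) with the codomain projection $\proj_i$ (which selects $f$ or $g$). Once these roles are kept separate, every step is forced.
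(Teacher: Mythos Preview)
Your argument is correct and is exactly the standard proof of this fact. The paper itself does not supply a proof of this proposition; it is stated with a citation to \cite{Lemay18}, so there is nothing to compare against beyond noting that your derivation matches the expected one.
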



\begin{definition}
A map $f \in \cat(X, Y)$ is $\d$-linear if $\d f = f \comp \Sproj_1$. 
\end{definition}
In $\SMOOTH$, this definition means that $\derive{f} \cdot u = f(u)$
so the $\d$-linear maps are the linear maps between Euclidian spaces.
There exists a category $\catLin$ with the same objects as
$\cat$, whose morphisms are the $\d$-linear morphisms of $\cat$, 
and whose composition coincides with the composition in $\cat$.
It follows from~\cref{prop:D-pairing} that $\catLin$ is cartesian, with the same projections
and tupling as in $\cat$.
By \ref{def:cdc-left-additive}, any $\d$-linear morphism is also 
$\semiring$-additive, but the converse is not true in general, see~\cite{Blute09}.
Thus, $\catLin$ is a subcategory of $\catAdd$. We write the composition 
of $f$ with $g$ as $g \compl f$, as in $\catAdd$.
$\SMOOTH$ is a particular case in which $\LIN{\SMOOTH} = \ADD{\SMOOTH}$.

\begin{definition}
Given $f \in \cat(X_0 \times \cdots \times X_n, Y)$ and $i \in \interval{0}{n}$,
the $i$-th partial derivative of $f$ is the morphism
$\d_i f \in \cat(X_0 \times \cdots \times X_n \times X_i, Y)$ defined as
$\d_i f = \d f \comp \prodtuple{\proj_0, \ldots, \proj_n, 0^{\times i}, \proj_{n+1}, 0^{\times n-i}}$.

A morphism $f \in \cat(X_0 \times \cdots, \times X_n, Y)$ is $\d$-linear in its 
$i$-th argument if $\d_i f = f \comp \prodtuple{\proj_0, \ldots, 
\proj_{i-1}, \proj_{n+1}, \proj_{i+1}, \ldots, \proj_n}$.
\end{definition}

In $\SMOOTH$, this partial derivative $\d_i$ coincides with the usual 
partial derivative $\partial_i$, this is an immediate consequence of the fact 
that for all smooth map $f : X_0 \times \cdots \times X_n \arrow Y$, 
$\derive{f} \cdot (u_0, \ldots, u_n) = \sum_{i=0}^n \partial_i f(x) \cdot u_i$.

\begin{definition}
The $n$-th derivative of $f \in \cat(X, Y)$
is the map $\hod{f}{n} = (\d_1)^n f \in \cat(\Sn X, Y)$.
\end{definition}

Let us detail the relationship between the 
$n$-th derivative $\hod{f}{n} \in \cat(\Sn X, X)$
and the $n$-th \emph{total} derivative 
$\d^n f \in \cat(\S^n X, Y)$.
For any set $I$, let $\upart{I}$ be the set of
partitions of $I$, and $\opart I$ the set of ordered partitions of 
$I$. More precisely, an element of $\upart I$ is a set 
$\{I_1, \ldots, I_k\}$ and an element of 
$\opart I$ is a vector $(I_1, \ldots, I_n)$ such that
$I_1, \ldots, I_n$ is a partition of $I$.
This notation is based on (unordered) Bell numbers, 
that give the numbers of (unordered) partitions of a set.

\begin{notation} \label{notation:set-projection}
    Let $\intsegment{n} = \{1, \ldots, n\}$. 
    There is a bijection between words $w \in \{0, 1\}^*$ of length $\length{w} = n$ 
    and subsets of $I_w \subseteq \intsegment{n}$, given by 
    $I_w = \{i \in \intsegment{n} \st w_i = 1\}$.
    The weight $\weight{w}$ is the number 
    of $1$ in $w$ (and not its length as it is often standard), so that 
    the weight $\weight w$ matches the size $\cardinal{I_w}$ of its corresponding set. 
    We will often consider sets and words up to this bijection.
    For example, $\upart{w}$ is a set whose elements are the
    sets $\{w^1, \ldots, w^k\}$ of words of weight at least one 
    and such that 
    for all $i \in \intsegment{\length{w}}$, 
    $\sum_{j=1}^k w_i^k = w_i$ 
    For any word $w = w_1 \cdots w_n$ of length $n$, we write
    $\Sproj_w = \Sproj_{w_1} \compl \cdots \compl \Sproj_{w_n}$.
    Let $\zeroword$ and $\oneword$ be the words 
    $0 \cdots 0$ and $1 \cdots 1$ 
    of length $n$ (the context in which this notation is used 
    will make clear that this corresponds to a word).
\end{notation}

Intuitively, $S^n X$ can be seen as the set of 
polynomials of degree $1$ over $X$, with 
$n$ formal indeterminates $\formalvar_1, \ldots, \formalvar_n$ 
such that for all $i$, $\formalvar_i^2 = 0$. The projection 
$\Sproj_{w_1 \cdots w_n}(a)$ provides the coefficient in $a$ of the 
monomial $\formalvar_1^{w_1} \cdots \formalvar_n^{w_n}$.

\begin{proposition}[Lemma 3.2 of~\cite{Garner21}] \label{prop:higher-order-derivative}
For all $f \in \cat(X, Y)$: \begin{enumerate}
    \item $\hod{f}{n}$ is symmetric and 
    $\d$-linear in its last $n$ variables;\label{prop:higher-order-derivative-1}
    \item $\d^n f = \sum_{\{w^1, \ldots, w^k\} \in \upart{\intsegment n}} \hod{f}{k}
    \comp \prodtuplenosize{\Sproj_{\zeroword}, \Sproj_{w^1}, \ldots, \Sproj_{w^k}}$,
    this is well-defined by symmetry of the $\hod{f}{k}$; 
    \label{prop:higher-order-derivative-2}
    \item if $\rig$ has multiplicative inverses for integers, then
    $\d^n f = \sum_{(w^1, \ldots, w^k) \in \opart{\intsegment n}} 
    \frac{1}{\factorial{k}}\hod{f}{k}
    \comp \prodtuplenosize{\Sproj_{\zeroword}, \Sproj_{w^1}, \ldots, \Sproj_{w^k}}$.
    \label{prop:higher-order-derivative-3}
\end{enumerate}
\end{proposition}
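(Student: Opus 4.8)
The plan is to prove \ref{prop:higher-order-derivative-1} and \ref{prop:higher-order-derivative-2} by induction on $n$, and then to deduce \ref{prop:higher-order-derivative-3} from \ref{prop:higher-order-derivative-2} by a purely combinatorial regrouping of the sum. The engine of both inductions is the fact that $\hod{f}{k+1}$ is obtained by differentiating $\hod{f}{k}$ once more in its base point, which by \ref{prop:higher-order-derivative-1} introduces a fresh order-$1$ direction symmetric to the existing ones. Throughout I would also use the bookkeeping identity $\Sproj_{w} \compl \Sproj_b = \Sproj_{wb}$, which says that postcomposing a length-$n$ word projection with the outer projection $\Sproj_b$ of $\S^{n+1}X = \S(\S^n X)$ simply appends the letter $b$, i.e. decides membership of the new index $n+1$.

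For \ref{prop:higher-order-derivative-1}, the base case $\hod{f}{1} = \d f$ is handled directly: $\d$-linearity in the direction variable is exactly \ref{def:cdc-linear} together with the additivity \ref{def:cdc-left-additive}. For the inductive step I would show that one further differentiation preserves both properties. The $\d$-linearity in the newly created direction is again \ref{def:cdc-linear}; linearity in the previously existing direction slots is preserved since differentiation in the base point commutes with those slots; and the symmetry exchanging the new direction with any old one is precisely the Schwarz axiom \ref{def:cdc-schwarz}, while symmetry among the old directions is inherited from the induction hypothesis. This is the standard argument turning the two ``second-order'' axioms into full symmetry and multilinearity of $\hod{f}{n}$.

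For \ref{prop:higher-order-derivative-2}, the base case $n=0$ reads $\d^0 f = f = \hod{f}{0}$, matching the unique empty partition of $\emptyset$. For the inductive step I would compute $\d^{n+1} f = \d(\d^n f)$ by applying $\d$ termwise, which is legitimate by \ref{def:cdc-additive}. Each summand has the form $\hod{f}{k} \comp h$ with $h = \prodtuplenosize{\Sproj_{\zeroword}, \Sproj_{w^1}, \ldots, \Sproj_{w^k}}$ a tuple of projections; since projections are $\d$-linear (\ref{def:cdc-projections}) and $\d$ commutes with pairing (\cref{prop:D-pairing}), $h$ is $\d$-linear, so the chain rule \ref{def:cdc-chain} gives $\d(\hod{f}{k} \comp h) = \d\hod{f}{k} \comp \prodPair{h \comp \Sproj_0}{h \comp \Sproj_1}$. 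Using $\Sproj_{w}\compl\Sproj_b = \Sproj_{wb}$, the base point becomes the appended-$0$ words and the variation the appended-$1$ words. I would then expand $\d\hod{f}{k}$ using \ref{def:cdc-left-additive} to split the variation into its components, together with \ref{prop:higher-order-derivative-1}: varying a direction slot $j$ re-inserts its variation by $\d$-linearity, replacing $\Sproj_{w^j 0}$ by $\Sproj_{w^j 1}$ and thus contributing the term for the partition in which $n+1$ is inserted into block $w^j$; varying the base point raises the order and contributes $\hod{f}{k+1}$ with the new singleton block $\{n+1\}$, corresponding to the word $\zeroword 1$. Since every partition of $\intsegment{n+1}$ arises from a unique partition of $\intsegment{n}$ either by creating $\{n+1\}$ as a new block or by inserting $n+1$ into a unique existing block, these contributions enumerate each partition of $\intsegment{n+1}$ exactly once, closing the induction.

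Finally, \ref{prop:higher-order-derivative-3} follows by grouping ordered partitions according to their underlying unordered partition: an unordered partition with $k$ blocks admits exactly $\factorial{k}$ orderings, and by the symmetry of $\hod{f}{k}$ from \ref{prop:higher-order-derivative-1} all $\factorial{k}$ corresponding summands are equal. Summing them with coefficient $\frac{1}{\factorial{k}}$ and using $\frac{1}{\factorial{k}} \cdot \phi(\factorial{k}) = 1_{\rig}$ (which requires the multiplicative inverses of \cref{def:multiplicative-inverse}) collapses each group to the single term of \ref{prop:higher-order-derivative-2}. I expect the main obstacle to be the inductive step of \ref{prop:higher-order-derivative-2}: matching the analytic dichotomy ``differentiate the base point versus differentiate a direction slot'' to the combinatorial recursion on set partitions, and in particular verifying that $\d$-linearity of $\hod{f}{k}$ in its direction slots makes the second kind of term land exactly on the block-insertion partition with the correct projection word $\Sproj_{w^j 1}$.
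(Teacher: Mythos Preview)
Your proposal is correct. For \ref{prop:higher-order-derivative-3} you take exactly the route the paper indicates: the paper's proof simply says that \ref{prop:higher-order-derivative-3} is a direct consequence of \ref{prop:higher-order-derivative-2}, and your regrouping of ordered partitions by their underlying unordered partition, using the symmetry from \ref{prop:higher-order-derivative-1} and the inverse $\frac{1}{\factorial{k}}$, is precisely that direct consequence. For \ref{prop:higher-order-derivative-1} and \ref{prop:higher-order-derivative-2} the paper gives no argument at all and defers to Lemma~3.2 of~\cite{Garner21}; your self-contained induction is the standard one, and your identification of the inductive mechanism for \ref{prop:higher-order-derivative-2} --- that differentiating the base slot of $\hod{f}{k}$ creates the new singleton block $\{n+1\}$ while differentiating the $j$-th direction slot, by $\d$-linearity, inserts $n+1$ into block $w^j$, matching the recursion on set partitions --- is exactly right. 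So you supply strictly more detail than the paper does, with no divergence in method where the paper does say something.
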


\begin{proof}
Proofs for \cref{prop:higher-order-derivative-1,prop:higher-order-derivative-2}
are in~\cite{Garner21}. \Cref{prop:higher-order-derivative-3} is 
a direct consequence of \cref{prop:higher-order-derivative-2}.
\end{proof}

For example, \cref{prop:higher-order-derivative-2,prop:higher-order-derivative-3} 
for $n=2$ implies that
\begin{equation} \label{eq:d-two}
    \d \d f \comp \prodtuple{x, u, v, w} 
= \hod f 2 \comp \prodtuple{x, u, v} + \d f \comp \prodPair{x}{w} 
\end{equation}
which can be understood in $\SMOOTH$ by the fact that the derivative of the 
map $g : (x, u) \mapsto \derivenoarg{f}(x) \cdot u$
is equal to 
\begin{align*} 
    g'(x, u) \cdot (v, w) &= \partial_0 g(x, u) \cdot v + 
\partial_1 g(x, u) \cdot w \\
&= \hod f 2 (x) \cdot (u, v) + \derivenoarg f(x) \cdot w
\end{align*}
by definition of $\hod f 2(x)$
and linearity of $u \mapsto \derivenoarg{f}(x) \cdot u$.

\begin{example} \label{ex:wrel}
Cartesian differential categories notably include the models of the differential 
$\lambda$-calculus \cite{Bucciarelli10} and its 
syntactical Taylor expansion \cite{Manzonetto12}. 
One example of such model is 
the weighted relational model $\klWREL$~\cite{Laird13}.
Let $\rig$ be a \emph{complete} semiring (a semiring with arbitrary
countable sums), for example
$\rig = \overline{\R_{\geq 0}} = \R_{\geq 0} \cup \infty$.
A finite multiset over $A$ is a function $m : A \arrow \N$ such that 
$\{a \in A \st m(a) \neq 0 \}$ is finite.
The objects of the category $\klWREL$ (it arises as the coKleisli category of 
the $\oc$ comonad of a model of Linear Logic) 
are the (countable) sets,
and the morphism from $A$ to $B$ are the matrices
$\rig^{\mfin(A) \times B}$, where $\mfin(A)$ is the set of all finite 
multisets over $A$. 
%

The category $\WREL$ is cartesian, the product 
of $A$ and $B$ is the disjoint union: 
$A \uplus B = \{(0, a) \st a \in A\} \cup \{(1, b) \st b \in B\}$.
In particular, we have $\mfin{(A \uplus B)} = \mfin(A) \times \mfin(B)$.
The category $\WREL$ is a cartesian differential category.
For all $f \in \klWREL(A, B)$, the higher order derivative 
$\hod f n \in \klWREL(\uplus_{i=0}^n A, B)$ can be described 
by a matrix $\hod f n \in \rig^{\mfin(A) \times \mfin(A)^n \times B}$ 
such that
\[ f_{m, [a_1], \ldots, [a_n], b}
= \frac{\factorial{(m + [a_1, \ldots, a_n])}}{\factorial{m}} 
    f_{m+[a_1, \ldots, a_n], b} \]
and all other coefficients are set to $0$. 
In the above, $[a]$ is the multiset with one element, and 
$\factorial{m} = \prod_{a \in A} \factorial{m(a)}$.
A morphism $s \in \klWREL(\uplus_{i=1}^n A_i, B)$
is $\d$-linear in a coordinate $i$ if and only if 
$s_{p_1, \ldots, p_n} \neq 0 \imply p_i = [a_i]$ 
for some $a_i \in A_i$, so $\d$-linearity coincides with the linearity 
of linear logic.
\end{example}

\section{The tangent bundle construction} \label{sec:tangent-bundle}

We now introduce the tangent bundle functor $\T$ and 
prove that the equations of derivatives
are equivalent to naturality equations that turns $\T$ 
into a monad.
The functorialization of differentiation can be found in two 
(orthogonal) generalizations of cartesian differential categories: 
tangent categories~\cite{Rosicky84,Cockett14} and cartesian coherent 
differential categories~\cite{Walch23} that differ by their action on objects.
%
%
Both of those theories end up identical when restricted 
to cartesian differential categories, with action on object 
$\T X = X \times X$.
%

There is a bijection between 
operators $\d : \cat(X, Y) \arrow \cat(\S X, Y)$ (not assumed 
to be derivatives) and operators
$\T : \cat(X, Y) \arrow \cat(\S X, \S Y)$ such that 
$\Sproj_0 \comp \T f = f \comp \Sproj_0$.
\begin{equation} \label{eq:tangent-bundle-cdc}
    \T f = \prodPair{f \comp \Sproj_0}{\d f} \qquad
    \d f = \Sproj_1 \comp \T f
\end{equation}

\begin{proposition} \label{prop:tangent-bundle-functor}
The following assertions are equivalent. \begin{enumerate}
    \item $\d$ satisfies \ref{def:cdc-chain}: 
    $\d \id = \Sproj_1$ and $\d (g \comp f) = \d g \comp \T f$.
    \item $\T$ is a functor with action on objects $\T X = \S X$.
\end{enumerate}
\end{proposition}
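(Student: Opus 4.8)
The plan is to prove the equivalence by showing each direction, using the bijection from \cref{eq:tangent-bundle-cdc} that relates $\d$ and $\T$. The key observation is that $\T$ being a functor amounts to two equations: $\T \id = \id$ and $\T(g \comp f) = \T g \comp \T f$, and I want to translate each into the corresponding clause of \ref{def:cdc-chain}.

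First I would establish the identity case. Since $\T \id = \prodPair{\id \comp \Sproj_0}{\d \id} = \prodPair{\Sproj_0}{\d \id}$, and the identity on $\S X$ is $\prodPair{\Sproj_0}{\Sproj_1}$, the equation $\T \id = \id$ holds if and only if $\d \id = \Sproj_1$. This handles the first half of \ref{def:cdc-chain} immediately.

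For the composition case, I would compute $\T g \comp \T f$ using the definition $\T h = \prodPair{h \comp \Sproj_0}{\d h}$. The first component of $\T g \comp \T f$ is $g \comp \Sproj_0 \comp \T f = g \comp f \comp \Sproj_0$, using the defining property $\Sproj_0 \comp \T f = f \comp \Sproj_0$; this matches the first component of $\T(g \comp f) = \prodPair{(g\comp f) \comp \Sproj_0}{\d(g \comp f)}$ automatically. The second component of $\T g \comp \T f$ is $\d g \comp \T f$, so comparing the second components shows that $\T(g \comp f) = \T g \comp \T f$ holds if and only if $\d(g \comp f) = \d g \comp \T f$, which is exactly the chain rule as restated in the proposition. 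The main subtlety to get right is the bookkeeping of which component projection applies where, and verifying that the first-component equation is automatically satisfied (so imposes no constraint), leaving the functoriality of $\T$ genuinely equivalent to the two clauses of \ref{def:cdc-chain}.

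Putting these together, $\T$ is a functor precisely when both $\d \id = \Sproj_1$ and $\d(g \comp f) = \d g \comp \T f$ hold, which is the statement of \ref{def:cdc-chain}. I do not expect any serious obstacle here: once the bijection of \cref{eq:tangent-bundle-cdc} and its compatibility condition $\Sproj_0 \comp \T f = f \comp \Sproj_0$ are in hand, the proof is a direct component-wise comparison. The only mild care needed is to note that the condition $\Sproj_0 \comp \T f = f \comp \Sproj_0$ is part of the bijection's defining data, so it can be freely used when matching first components.
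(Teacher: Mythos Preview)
Your proposal is correct and follows essentially the same approach as the paper: a component-wise comparison of $\T(g \comp f)$ with $\T g \comp \T f$ (and of $\T\id$ with $\id$), observing that the first components always agree so the equivalence reduces to the second components. The paper's proof is more terse but identical in substance.
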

\begin{proof} Observe that 
$\T (g \comp f) = \prodPair{g \comp f \comp \Sproj_0}{\d (g \comp f)}$ 
and $\T g \comp \T f = \prodPair{g \comp \Sproj_0}{\d g} \comp 
\prodPair{f \comp \Sproj_0}{\d f} 
= \prodPair{g \comp f \comp \Sproj_0}{\d g \comp \T f}$
so $\T (g \comp f) = \T g \comp \T f$ if and only if 
$\d (g \comp f) = \d g \comp \T f$. 
Similarly, $\T \id = \id$ if and only if 
$\d \id = \Sproj_1$.
\end{proof}

This functorial point of view on the derivative is 
used in~\cite{Lemay18} to provide an alternative to the 
heavy combinatorics of the Fa\'a di Bruno formula.
%
Indeed, by definition $\d^n f = \Sproj_{\oneword} \comp \T^n f$,
so by functoriality of $\T^n$ it immediately follows that 
\begin{equation} \label{eq:faa-di-bruno-functor}
\d^n (g \comp f) = \d^n g \comp \T^n f .
\end{equation}
The simplicity of this equation sharply contrasts with 
the combinatorial Fa\'a di Bruno formula 
(cor. 3.2.3 of~\cite{Cockett11})
\begin{equation} \label{eq:faa-di-bruno}
\hod{(g \comp f)}{n} = \! \! \! \! \! \! \! \! \! \! \! \! \! 
\sum_{\{I_1, \ldots, I_k\} \in \upart {\intsegment{n}}} \! \! \!
\! \! \! \! \! \! \hod{g}{k} \comp \prodtuple{\hod{f}{\emptyset}, 
\hod{f}{I_1}, \ \ldots \ , \hod{f}{I_k}} 
\end{equation}
where for any set $I = \{i_1, \ldots, i_m\} \subseteq \intsegment{n}$, 
$\hod{f}{I} = 
\hod{f}{m} \comp \prodtuple{\Sproj_0, \Sproj_{i_1}, \ldots, \Sproj_{i_m}}$
(by symmetry of $n$-th derivative, this value does not depend
on the choice of ordering of $I$).

\begin{proposition} \label{prop:T-linear}
    A morphism $f \in \cat(X, Y)$ is $\d$-linear 
iff $\T f = \S f$. Thus, $\T$ coincides with $\S$
when restricted to $\catLin$.
\end{proposition}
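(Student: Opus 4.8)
The plan is to unfold the definitions of $\T f$ and $\S f$ and compare them componentwise using the universal property of the binary product. First I would recall that $\S f = f \times f = \prodPair{f \comp \Sproj_0}{f \comp \Sproj_1}$, since $\Sproj_0$ and $\Sproj_1$ are the two projections of $\S X = X \times X$, and that by \cref{eq:tangent-bundle-cdc} we have $\T f = \prodPair{f \comp \Sproj_0}{\d f}$. Both morphisms are thus pairings sharing the \emph{same} first component $f \comp \Sproj_0$, so by the universal property of the cartesian product the equality $\T f = \S f$ holds if and only if their second components agree, that is, if and only if $\d f = f \comp \Sproj_1$. This last equation is precisely the definition of $f$ being $\d$-linear, which establishes the stated equivalence.

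For the second assertion, I would note that $\catLin$ is by definition the subcategory whose morphisms are exactly the $\d$-linear morphisms, and that $\T$ and $\S$ already agree on objects, since $\T X = \S X = X \times X$. Because every $\d$-linear morphism is $\semiring$-additive, $\catLin$ is a subcategory of $\catAdd$, on which $\S$ is already a functor, so $\S$ restricts to $\catLin$; the first part of the proof then gives $\T f = \S f$ for every morphism $f$ of $\catLin$. Hence the two functors coincide on $\catLin$.

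Since the whole argument is just a comparison of two pairings, I do not expect any genuine obstacle. The only point deserving a moment's care is to observe that the first components of $\T f$ and $\S f$ are literally equal, so that the equivalence reduces cleanly to the condition on the second components, namely $\d f = f \comp \Sproj_1$.
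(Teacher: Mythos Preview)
Your proof is correct and follows essentially the same approach as the paper: both compare $\T f = \prodPair{f \comp \Sproj_0}{\d f}$ and $\S f = \prodPair{f \comp \Sproj_0}{f \comp \Sproj_1}$ componentwise, reducing the equivalence to $\d f = f \comp \Sproj_1$. Your version is slightly more explicit about the second assertion, but the argument is identical.
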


\version{}{
\begin{proof}
If $f$ is $\d$ linear then 
$\T f = \prodPair{f \comp \Sproj_0}{\d f} 
= \prodPair{f \comp \Sproj_0}{f \comp \Sproj_1} 
= \S f$. Conversely, if $\T f = \S f$ then 
$\d f = \Sproj_1 \comp \T f = \Sproj_1 \comp \S f = f \comp \Sproj_1$. 
\end{proof}}

We now define natural transformations in $\catAdd$ and $\catLin$.
\begin{equation} 
    \begin{split}
    \Ssum = \Sproj_0 + \Sproj_1 &: \S \naturalTrans \idfun \\
    \text{for all } r \in \rig, \homothety r = r \cdot \id &: \idfun \naturalTrans \idfun
    \end{split}
\end{equation}
that consists of a sum operation and the homotheties.
These are natural transformations by $\semiring$-additivity 
of the morphisms in $\catAdd$.
We also define the following natural transformations
\begin{equation} \label{eq:cdc-natural-equations}
\begin{split}
    \Sinjz = \prodPair{\id}{0} &: \idfun \naturalTrans \S \\
    \SmonadSum = \prodPair{\Sproj_0 \compl \Sproj_0}{\Sproj_0 \compl \Sproj_1
    + \Sproj_1 \compl \Sproj_0} &: \S^2 \naturalTrans \S \\
    \Sswap = \prodPair{\prodPair{\Sproj_0 \compl \Sproj_0}{\Sproj_0 \compl \Sproj_1}}
    {\prodPair{\Sproj_1 \compl \Sproj_0}{\Sproj_1 \compl \Sproj_1}} 
    &: \S^2 \naturalTrans \S^2 \\
    \Slift = \prodPair{\prodPair{\Sproj_0}{0}}{\prodPair{0}{\Sproj_1}} 
    &: \S \naturalTrans \S^2 \\
    \text{for all } r \in \semiring, \Sscale r = \prodPair{\Sproj_0}{r \cdot \Sproj_1} &: \S \naturalTrans 
    \S
\end{split}
\end{equation}
The naturality follows from the fact that the pairing 
$\prodPair{\alpha}{\beta}$ of two natural transformations 
$\alpha, \beta : F \naturalTrans G$ (where $F, G$ are functors whose codomain 
is $\catAdd$ or $\catLin$) is a natural transformation 
$F \naturalTrans \S G$, and that their sum is a natural transformation 
$F \naturalTrans G$. 
Observe that those natural transformations are natural with regard to 
$\S$, and not $\T$. In fact, the naturality 
with regard to $\T$ of the families of~\cref{eq:cdc-natural-equations} corresponds \emph{exactly}
to the axioms of cartesian differential categories, as discovered in~\cite{Walch23}.
The forward implication of this observation was first discovered in 
sec. 2.4 of~\cite{Cockett14}
in the proof that every cartesian differential category is a tangent category.
\begin{theorem} \label{thm:tangent-bundle-natural} 
    Let $\d : \cat(X, Y) \arrow \cat(\S X, Y)$ be an operator
    that satisfies \ref{def:cdc-projections} and \ref{def:cdc-chain}.
    Then: 
    \begin{enumerate}
        \item $\d$ satisfies \ref{def:cdc-additive} iff $\T \Ssum = \S \Ssum$ and 
        $\T (\homothety r) = \S (\homothety r)$ for all $r \in \rig$ (that is, $\homothety r$ and $\Ssum$ are 
        $\d$-linear);
        \item $\d$ satisfies \ref{def:cdc-left-additive} iff 
        $\Sinjz : \idfun \naturalTrans \T$, $\SmonadSum : \T^2 \naturalTrans \T$
        and $\Sscale{r} : \T \naturalTrans \T$ (for all 
        $r \in \semiring$) are natural in $\cat$\label{thm:tangent-bundle-natural-additive};
        \item Assuming that $\Sinjz : \id \naturalTrans \T$ is natural in $\cat$, 
        $\d$ satisfies \ref{def:cdc-linear} iff $\Slift : \T \naturalTrans \T^2$
        is natural in $\cat$;
        \item $\d$ satisfies \ref{def:cdc-schwarz} iff $\Sswap : \T^2 \naturalTrans \T^2$ is natural
        in $\cat$.
    \end{enumerate}
    Thus, there is a bijection between derivative operators $\d$ and functors 
$\T$ such that $\T X = \S X$, $\T \proj_i = \S \proj_i$, 
$\T \Ssum = \S \Ssum$, $\T \homothety r = \S \homothety r$ and such that 
$\Sproj_0$, $\Sinjz$, $\SmonadSum$, $\Sscale{r}$ (for all $r \in \semiring$), 
$\Slift$ and $\Sswap$ are natural in $\cat$ with 
regard to $\T$.
\end{theorem}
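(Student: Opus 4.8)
The plan is to transport every axiom of \cref{def:cdc} through the bijection \cref{eq:tangent-bundle-cdc}, under which an operator $\d$ corresponds to an operator $\T$ satisfying $\T X = \S X$ and $\Sproj_0 \comp \T f = f \comp \Sproj_0$, and to recognise each axiom as a property of $\T$. \Cref{prop:tangent-bundle-functor} already identifies \ref{def:cdc-chain} with functoriality of $\T$, and \ref{def:cdc-projections} is literally the assertion that each $\proj_i$ is $\d$-linear, hence equivalent to $\T \proj_i = \S \proj_i$ by \cref{prop:T-linear}. For the four remaining clauses I would unfold the stated functoriality or naturality equation using the explicit formula $\T f = \prodPair{f \comp \Sproj_0}{\d f}$ and read off the induced equation on $\d$. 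The recurring simplification is that $\Sproj_0 \comp \T f = f \comp \Sproj_0$ holds by construction, so the base-point component of each square is automatically valid and it suffices to post-compose the square with the higher projections $\Sproj_w$.

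For item (1), I would first rewrite, via \cref{prop:T-linear}, the conditions $\T \Ssum = \S \Ssum$ and $\T \homothety r = \S \homothety r$ as the $\d$-linearity of $\Ssum$ and of the homotheties. The forward implication then applies \ref{def:cdc-chain} to the factorisations $f + g = \Ssum \comp \prodPair{f}{g}$ and $r \cdot f = \homothety r \comp f$: combining the $\d$-linearity of $\Ssum$ with $\d \prodPair{f}{g} = \prodPair{\d f}{\d g}$ from \cref{prop:D-pairing} yields $\d(f+g) = \d f + \d g$, and similarly $\d(r \cdot f) = r \cdot \d f$, which is \ref{def:cdc-additive}. For the converse I would compute $\d \Ssum$ and $\d \homothety r$ directly from \ref{def:cdc-projections} and \ref{def:cdc-additive} and check that they equal $\Ssum \comp \Sproj_1$ and $\homothety r \comp \Sproj_1$.

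For items (2)--(4) I would first record the componentwise shape of the iterated functor: writing a generalised point of $\S^2 X$ as $\prodtuple{x, u, v, w}$, repeated use of \ref{def:cdc-chain} and \cref{prop:D-pairing} on $\T^2 f = \T(\T f)$ gives
\[ \T^2 f \comp \prodtuple{x, u, v, w} = \prodtuple{f \comp x,\ \d f \comp \prodtuple{x, u},\ \d f \comp \prodtuple{x, v},\ \d \d f \comp \prodtuple{x, u, v, w}}. \]
Substituting this into each naturality square and dropping the base component produces exactly one clause per transformation. Naturality of $\Sinjz$ gives $\d f \comp \prodtuple{x, 0} = 0$; of $\SmonadSum$ gives $\d f \comp \prodtuple{x, u+v} = \d f \comp \prodtuple{x, u} + \d f \comp \prodtuple{x, v}$; and of $\Sscale r$ gives $\d f \comp \prodtuple{x, r \cdot u} = r \cdot (\d f \comp \prodtuple{x, u})$, so the three hold simultaneously iff $\d f$ is $\semiring$-additive in its second argument, which is \ref{def:cdc-left-additive}. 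Naturality of $\Sswap$, which exchanges $u$ and $v$, reduces on the top component to $\d \d f \comp \prodtuple{x, u, v, w} = \d \d f \comp \prodtuple{x, v, u, w}$, i.e. \ref{def:cdc-schwarz}. Finally, unfolding $\Slift$ yields two conditions, $\d f \comp \prodtuple{x, 0} = 0$ and $\d \d f \comp \prodtuple{x, 0, 0, u} = \d f \comp \prodtuple{x, u}$: the first is precisely naturality of $\Sinjz$ (explaining the hypothesis of item (3)), and the second is \ref{def:cdc-linear}.

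The main obstacle is the bookkeeping behind the displayed description of $\T^2 f$: one must track, through the chain rule \ref{def:cdc-chain}, which outer projection feeds which copy of the derivative, so that the two off-diagonal components come out as $\d f \comp \prodtuple{x, v}$ and $\d f \comp \prodtuple{x, u}$ with the correct variables; everything else is routine substitution. To conclude the final bijection I would simply assemble the pieces: under \cref{eq:tangent-bundle-cdc}, with $\Sproj_0$ natural built into the correspondence, \cref{prop:tangent-bundle-functor}, the reformulation of \ref{def:cdc-projections}, and items (1)--(4) match the six axioms of \cref{def:cdc} one-for-one with functoriality of $\T$, the equalities $\T \proj_i = \S \proj_i$, $\T \Ssum = \S \Ssum$, $\T \homothety r = \S \homothety r$, and naturality of $\Sinjz, \SmonadSum, \Sscale r, \Slift, \Sswap$. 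Hence the bijection restricts to the stated correspondence.
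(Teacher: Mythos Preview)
Your proposal is correct and follows essentially the same route as the paper's proof in Appendix~A: compute the components of $\T^2 f$ explicitly (the paper's \cref{lemma:T-double-explicit}), then read off each naturality square component by component, observing that the base-point components are automatic and the top component recovers the corresponding axiom of \cref{def:cdc}. The only cosmetic difference is that the paper works throughout with projection expressions $\Sproj_i \compl \Sproj_j \comp \T^2 f$ whereas you phrase the same computation via generalised elements $\prodtuple{x,u,v,w}$; in particular your displayed formula for $\T^2 f$ is exactly the paper's lemma rewritten in that style, and its derivation also needs \ref{def:cdc-projections} (to get $\T \Sproj_0 = \S \Sproj_0$), not only \ref{def:cdc-chain} and \cref{prop:D-pairing}.
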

This theorem and its proof, 
\version{see sec. A of the long version 
of this paper~\cite{Walch25-vlong}}{see \cref{appendix:tangent-bundle}},
are a slight adaptation of thm. 7 of~\cite{Walch23}. The difference is
that~\cite{Walch23} deals with partial 
sums but does not feature a $\semiring$-module structure. 
The proof idea is similar to that of
\cref{prop:tangent-bundle-functor}: 
the equations of \cref{def:cdc} correspond 
to the rightmost projections of the equations on $\T$, 
while the other projections always hold by definition of $\T$.

\begin{proposition} \label{prop:tangent-bundle-monad}
$(\S, \Sinjz, \SmonadSum)$ is a monad on $\catAdd$ and 
$\catLin$, meaning that the following diagram 
commute.
\[ 
\begin{tikzcd}[row sep = small]
	\S & {\S^2} & \S \\
	& \S
	\arrow["{\Sinjz \S}", from=1-1, to=1-2]
	\arrow[Rightarrow, no head, from=1-1, to=2-2]
	\arrow["\SmonadSum", from=1-2, to=2-2]
	\arrow["{\S \Sinjz}"', from=1-3, to=1-2]
	\arrow[Rightarrow, no head, from=1-3, to=2-2]
\end{tikzcd} \quad 
\begin{tikzcd}[row sep = small]
	{\S^3} & {\S^2} \\
	{\S^2} & \S
	\arrow["{\S \SmonadSum}", from=1-1, to=1-2]
	\arrow["{\SmonadSum \S}"', from=1-1, to=2-1]
	\arrow["\SmonadSum", from=1-2, to=2-2]
	\arrow["\SmonadSum"', from=2-1, to=2-2]
\end{tikzcd} \]
$(\T, \Sinjz, \SmonadSum)$ is a monad on $\cat$ (same diagrams, replacing
$\S$ by $\T$).
\end{proposition}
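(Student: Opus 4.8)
The plan is to establish the three monad laws---the two unit laws $\SmonadSum \comp (\Sinjz \S) = \id = \SmonadSum \comp (\S \Sinjz)$ and the associativity law $\SmonadSum \comp (\S \SmonadSum) = \SmonadSum \comp (\SmonadSum \S)$---first for $\S$ on $\catAdd$, and then to transfer them to $\T$. Since $\Sinjz$ and $\SmonadSum$ are built solely from projections, pairings and the additive structure, I would verify each law by a direct computation: because a morphism into a product is determined by its composites with $\Sproj_0$ and $\Sproj_1$, it suffices to precompose both sides of each diagram with the projections and simplify. Tracking a generic element, $\SmonadSum$ acts as $((x,u),(v,w)) \mapsto (x, u+v)$ and $\Sinjz$ as $x \mapsto (x,0)$, so the unit laws reduce to $u + 0 = u = 0 + u$, and associativity reduces to the associativity of $+$ (both composites collapse the eight coordinates of $\S^3 X$ to the base point together with a single sum of three of them, the two bracketings of which agree). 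The naturality of $\Sinjz$ and $\SmonadSum$ as transformations of $\S$ in $\catAdd$ was already noted before \cref{eq:cdc-natural-equations}, and since every morphism appearing is $\d$-linear, hence lies in $\catLin \subseteq \catAdd$, the identical computation is valid in $\catLin$.

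For the monad $(\T, \Sinjz, \SmonadSum)$ on $\cat$, the key point is that the two structure maps are $\d$-linear: $\Sinjz = \prodPair{\id}{0}$, and each component of $\SmonadSum$ is a pairing of sums of composites of projections, all of which are $\d$-linear by \ref{def:cdc-projections}, while $\d$-linearity is closed under composition, under sums (by \ref{def:cdc-additive}), and under pairing (by \cref{prop:D-pairing}). Hence \cref{prop:T-linear} yields $\T \Sinjz = \S \Sinjz$ and $\T \SmonadSum = \S \SmonadSum$, while right-whiskering gives $\Sinjz \T = \Sinjz \S$ and $\SmonadSum \T = \SmonadSum \S$ because $\T$ and $\S$ agree on objects. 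Consequently the three coherence diagrams for $\T$ are literally the same equations, between the very same $\d$-linear morphisms, as those already verified for $\S$, and therefore hold. The data is a well-typed monad because \cref{thm:tangent-bundle-natural} (item (2), which uses \ref{def:cdc-left-additive}) guarantees that $\Sinjz : \idfun \naturalTrans \T$ and $\SmonadSum : \T^2 \naturalTrans \T$ are natural with respect to $\T$.

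The only step with genuine content is associativity, where the main care is bookkeeping the whiskering conventions---distinguishing $\S \SmonadSum$, which applies $\SmonadSum$ inside each outer summand, from $\SmonadSum \S$, which combines the two outer copies of $\S$ over the base object $\S X$---and checking that both unfold to the same value; the unit laws are immediate. The conceptual crux is thus the observation that $\Sinjz$ and $\SmonadSum$ are $\d$-linear, as this is precisely what lets the laws for $\T$ be inherited verbatim from those for $\S$ rather than requiring a fresh computation involving the derivative $\d$.
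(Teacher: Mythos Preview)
Your proposal is correct and follows essentially the same approach as the paper: both verify the monad laws for $\S$ by checking projections (the paper writes this as $\Sproj_i \compl \SmonadSum = \sum_{i_1+i_2=i}\Sproj_{i_1}\compl\Sproj_{i_2}$ and unfolds associativity to a sum over $i_1+i_2+i_3=i$, which is exactly your element-level computation), and both transfer the result to $\T$ by noting that the structure maps are $\d$-linear (\cref{prop:T-linear}) so the coherence diagrams are literally the same morphisms, with naturality coming from \cref{thm:tangent-bundle-natural}.
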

\begin{proof} 
Let us prove the rightmost diagram.
\begin{align*} 
    \Sproj_i \compl \SmonadSum_X \compl \SmonadSum_{\S X} &= 
\sum_{i_1 + i_2 = i} \Sproj_{i_1} \compl \Sproj_{i_2} \compl \SmonadSum_{\S X}
= \sum_{j_1 + j_2 + j_3 = i} \Sproj_{j_1} \compl \Sproj_{j_2} \compl \Sproj_{j_3} \\
\Sproj_i \compl \SmonadSum_X \compl \S \SmonadSum_{X} &= 
\sum_{i_1 + i_2 = i} \Sproj_{i_1} \compl \SmonadSum_{X} \compl \Sproj_{i_2} 
= \sum_{j_1 + j_2 + j_3 = i} \Sproj_{j_1} \compl \Sproj_{j_2} \compl \Sproj_{j_3} 
\end{align*}
so $\SmonadSum \compl \SmonadSum \S = \SmonadSum \compl \S \SmonadSum$.
The proof of the left diagram is very similar, and we conclude that
$(\S, \Sinjz, \SmonadSum)$ is a monad. 
We also conclude that $(\T, \Sinjz, \SmonadSum)$ is a monad on $\cat$
by naturality of $\Sinjz : \idfun \naturalTrans \T$ and 
$\SmonadSum : \T^2 \naturalTrans \T$ (see \cref{thm:tangent-bundle-natural}) 
and because $\T$ coincides with $\S$ on $\catLin$ by~\cref{prop:T-linear}.
\end{proof}
The \emph{Kleisli category} $\kleisliT$ of the monad $\T$ is the category
whose objects are the objects of $\cat$, and such that 
$\kleisliT(X, Y) = \cat(X, \T Y)$. The composition of $f \in \kleisliT(X, Y)$
with $g \in \kleisliT(Y, Z)$ is given by $\SmonadSum \comp \T g \comp f$.
A morphism $f \in \kleisliT(X, Y)$ corresponds to a polynomial 
$f_0 + f_1 \formalvar$ with coefficients in $\cat(X, Y)$ and a formal indeterminate 
$\formalvar$ such that $\formalvar^2 = 0$. The composition of $f$ with 
$g = g_0 + g_1 \formalvar$ is equal to 
$g_0 \comp f_0 + (g_1 \comp f_0 + \d{g_0} \comp \prodPair{f_0}{f_1}) \formalvar$.
This monad captures the concept of dual numbers widely used 
in automated differentiation~\cite{Baydin17}.

\begin{notation}
    Define 
    $\kronecker i j = 1$ if $i=j$, and $\kronecker i j = 0$ otherwise.
\end{notation}

\begin{proposition} \label{prop:tangent-bundle-distributive}
    The natural transformation $\Sswap$ is a distributive law~\cite{Beck69} of 
the monad $\S$ over itself. This means that the following diagrams commute 
in $\catAdd$ and $\catLin$.
      \[ \begin{tikzcd}[row sep = small]
        \S \\
        {\S \S} & {\S \S}
        \arrow["{\Sinjz \S}"', from=1-1, to=2-1]
        \arrow["\Sswap"', from=2-1, to=2-2]
        \arrow["{\S \Sinjz}", from=1-1, to=2-2]
      \end{tikzcd} \quad 
      \begin{tikzcd}[row sep = small]
        {\S^2 \S} & {\S \S \S} & {\S \S^2} \\
        {\S \S} && {\S \S}
        \arrow["{\S \Sswap}", from=1-1, to=1-2]
        \arrow["{\Sswap \S}", from=1-2, to=1-3]
        \arrow["{\SmonadSum \S}"', from=1-1, to=2-1]
        \arrow["\Sswap"', from=2-1, to=2-3]
        \arrow["{\S \SmonadSum}", from=1-3, to=2-3]
      \end{tikzcd} \]
      \vspace{-0.5em}
      \[ \begin{tikzcd}[row sep = small]
        \S \\
        {\S \S} & {\S \S}
        \arrow["{\Sinjz \S}"', from=1-1, to=2-1]
        \arrow["\Sswap", from=2-2, to=2-1]
        \arrow["{\S \Sinjz}", from=1-1, to=2-2]
      \end{tikzcd} \quad 
      \begin{tikzcd}[row sep = small]
        {\S^2 \S} & {\S \S \S} & {\S \S^2} \\
        {\S \S} && {\S \S}
        \arrow["{\S \Sswap}"', from=1-2, to=1-1]
        \arrow["{\Sswap \S}"', from=1-3, to=1-2]
        \arrow["{\SmonadSum \S}"', from=1-1, to=2-1]
        \arrow["\Sswap", from=2-3, to=2-1]
        \arrow["{\S \SmonadSum}", from=1-3, to=2-3]
      \end{tikzcd} \]
      Similarly, $\Sswap$ is a distributive law of $\T$ over itself 
      in $\cat$.
\end{proposition}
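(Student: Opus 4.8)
The plan is to reduce every one of the four diagrams to an identity between the coordinate projections $\Sproj_w$, exploiting that a morphism into $\S^n$ is completely determined by its $2^n$ components $\Sproj_{w_1} \compl \cdots \compl \Sproj_{w_n}$ (an element of $\S^n X$ is nothing but the tuple of its coordinates). Thus, to check that two parallel natural transformations with codomain $\S^2$ agree it suffices to check agreement after postcomposition with each $\Sproj_a \compl \Sproj_b$ for $a, b \in \{0,1\}$, and likewise for $\S^3$ with length-three words.

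First I would record the elementary calculus of projections that drives everything: $\Sproj_i \compl \Sinjz = \kronecker i 0 \cdot \id$, the identity $\Sproj_i \compl \SmonadSum = \sum_{i_1 + i_2 = i} \Sproj_{i_1} \compl \Sproj_{i_2}$ already used in the proof of \cref{prop:tangent-bundle-monad}, and $\Sproj_i \compl \Sproj_j \compl \Sswap = \Sproj_j \compl \Sproj_i$, which is immediate from the defining pairing of $\Sswap$. For the whiskered transformations I would combine these with the naturality of $\Sproj_i : \S \naturalTrans \idfun$, namely $\Sproj_i \compl \S g = g \compl \Sproj_i$. Read on coordinate words this yields a convenient dictionary: $\Sswap \S$ transposes the two outermost indices and $\S \Sswap$ the two innermost; $\SmonadSum \S$ merges the two outermost indices (summing over the splittings of the merged index) while $\S \SmonadSum$ merges the two innermost; and a leading $\S(-)$ always commutes past the outermost projection.

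With this dictionary the two unit triangles are immediate. For the first one, postcomposing with $\Sproj_a \compl \Sproj_b$ and using $\Sproj_i \compl \Sproj_j \compl \Sswap = \Sproj_j \compl \Sproj_i$ turns $\Sswap \comp (\Sinjz \S)$ into $\Sproj_b \compl \Sproj_a \compl (\Sinjz \S) = \kronecker a 0 \cdot \Sproj_b$, which is exactly the $(a,b)$-coordinate of $\S \Sinjz$; the second triangle is symmetric. The two multiplication hexagons are the only substantial computation: projecting the first one onto $\Sproj_a \compl \Sproj_b$, the path $\Sswap \comp (\SmonadSum \S)$ reduces to $\sum_{a_1 + a_2 = a} \Sproj_b \compl \Sproj_{a_1} \compl \Sproj_{a_2}$, and the path $(\S \SmonadSum) \comp (\Sswap \S) \comp (\S \Sswap)$ reduces to the same sum after applying in turn the merge rule for $\S \SmonadSum$ and the two transposition rules; the remaining hexagon is handled identically. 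The agreement of the two expressions rests only on commutativity and associativity of the sum in the $\semiring$-semimodules, so no genuinely combinatorial argument survives — this is precisely the gain of pushing everything through the projections.

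Finally, I would transfer the result from $\S$ to $\T$ exactly as in \cref{prop:tangent-bundle-monad}. The components of $\Sinjz$, $\SmonadSum$ and $\Sswap$ are built from projections and tuplings, hence are $\d$-linear, so by \cref{prop:T-linear} the functors $\S$ and $\T$ agree on them and on all the whiskered maps occurring in the diagrams; consequently the four commuting diagrams are literally the same equalities of morphisms of $\cat$ that were just verified. Since $\Sinjz$, $\SmonadSum$ (by \cref{prop:tangent-bundle-monad}) and $\Sswap$ (by \cref{thm:tangent-bundle-natural}) are natural with respect to $\T$, these equalities assemble into a distributive law of $\T$ over itself. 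The main obstacle here is not the algebra but the bookkeeping of the whiskerings: one must pin down once and for all which layer of $\S^3$ each of $\Sswap \S$, $\S \Sswap$, $\SmonadSum \S$, $\S \SmonadSum$ acts on, after which all four diagrams collapse to triviality.
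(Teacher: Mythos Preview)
Your proposal is correct and follows essentially the same projection-based approach as the paper. The one difference is that the paper only computes the first unit triangle and the first multiplication hexagon explicitly and then derives the remaining two diagrams from the involution $\Sswap \compl \Sswap = \id$, whereas you redo the second pair ``identically''; this is a minor shortcut you might adopt.
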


\begin{proof} We prove that $\Sswap_X \compl (\Sinjz \S)
    = \S \Sinjz$ as follows.
    \begin{alignat*}{3} \Sproj_i \compl \Sproj_j \compl \Sswap_X \compl \Sinjz_{\S X}
    &=  \Sproj_j \compl \Sproj_i \compl \Sinjz 
    &&= \kronecker{0}{i} \Sproj_j \\
    \Sproj_i \compl \Sproj_j \compl \S \Sinjz_{X} 
    &= \Sproj_i \compl \Sinjz_{X}  \compl \Sproj_j 
    &&= \kronecker{0}{i} \Sproj_j
    \end{alignat*}
    We prove that $\Sswap \compl (\SmonadSum \S) = 
    (\S \SmonadSum) \compl (\Sswap \S) \compl \S \Sswap$ as follows.
    \[
        \Sproj_i \compl \Sproj_j \compl \Sswap_X \compl \SmonadSum_{\S X}
    = \Sproj_j \compl \Sproj_i \compl \SmonadSum_{\S X} 
    = \Sproj_j \compl \smashoperator{\sum_{i_1 + i_2 = i}}  \Sproj_{i_1} \compl \Sproj_{i_2} 
    =  \smashoperator{\sum_{i_1 + i_2 = i}} \Sproj_{j} \compl \Sproj_{i_1} \compl \Sproj_{i_2} \] 
    \begin{multline*}
        \Sproj_i \compl \Sproj_j \compl \S \SmonadSum_X \compl 
    \Sswap_{\S X} \compl \S \Sswap_X = 
    \Sproj_i \compl \SmonadSum_X \compl \Sproj_j \compl \Sswap_{\S X} \compl \S \Sswap_X \\ 
    = \sum_{i_1 + i_2 = i} \Sproj_{i_1} \compl \Sproj_{i_2} \compl \Sproj_j \compl 
    \Sswap_{\S X} \compl \S \Sswap_X 
    = \sum_{i_1 + i_2 = i} \Sproj_j \compl \Sproj_{i_1} \compl \Sproj_{i_2} 
    \end{multline*}
    The two other diagrams hold upon observing that $\Sswap$ is 
    an involution: $\Sswap \compl \Sswap = \id$.
    We conclude that $\Sswap$ is also a distributive law of $\T$ over itself, 
    by naturality of $c : \T \T \naturalTrans \T \T$ (\cref{thm:tangent-bundle-natural})
    and because $\T$ coincides with $\S$ on $\catLin$.
\end{proof}

\section{The iterated tangent bundle monad} \label{sec:iterated-monad}

This section presents the following new result: $\T^n$ has a canonical monad 
structure that arises from the distributive law $\Sswap$ of $\T$ over itself.
The naturality equations of this structure 
provide a higher order variant of the axioms of cartesian differential categories, 
in the same way that the functoriality of $\T^n$ provides a higher order 
variant of the chain rule~\cref{eq:faa-di-bruno-functor}.
Let us provide some intuition on $\T^n f$. 
An element
$a \in \S^n X$ intuitively corresponds to polynomials over $X$ in $n$ formal indeterminates
$\formalvar_1, \ldots, \formalvar_n$ such that $\formalvar_i^2 = 0$, 
\[ a(\formalvar_1, \ldots, \formalvar_n) = \smashoperator{\sum_{w_1, \ldots, w_n \in \{0,1\}}} 
a_{w_1, \ldots, w_n} \formalvar_1^{w_1} \cdots \formalvar_n^{w_n} \]
where $a_{w_1, \ldots, w_n} = \Sproj_{w_1, \ldots, w_n}(a)$. 
Then, $\T^n f \in \cat(\S^n A, \S^n B)$ 
can be seen as a map between polynomials defined as follows. 
Let $b = a - a_{\zeroword}$, so that by Taylor expansion
\[f(a) \simeq f(a_{\zeroword}) + \sum_{k = 1}^n 
\frac{1}{\factorial{k}} \hod{f}{k}(a_{\zeroword})(b, \ldots, b) + 
O(b^{n+1}). \]
But then, $b^{n+1} = 0$ because $\formalvar_i^2 = 0$ for all $i$. 
So all the higher order term of the Taylor expansion vanish.
This expression can then be developed by linearity and 
multilinearity of the derivatives, and simplified thanks to 
the equalities $\formalvar_i^2 = 0$. This 
yields a polynomial over $\formalvar_1, \ldots, \formalvar_n$. This polynomial 
precisely corresponds to the polynomial 
$\T^n f (a)$. 

    For any functors $F, F': \cat \arrow \catbis$ and $G, G' : \catbis \arrow \catter$ and 
    natural transformations $\alpha : F \naturalTrans F'$, $\beta : G \naturalTrans G'$, 
    we write the horizontal composition $\alpha \hcomp \beta : G F \naturalTrans G' F'$.
    \[ (\alpha \hcomp \beta)_X = 
    \beta_{F' X} \comp G \alpha_X = F' \alpha_X \comp \beta_{F X}\] 
As it is standard~\cite{Beck69}, the existence of the distributive law $\Sswap$ 
of $\S$ over itself
induces a monad structure on $\S^2$ given by 
\[ \begin{tikzcd}[row sep = abysmal]
    \Sinjzn[2] = \idfun & & \S \S \\
    \SmonadSumn[2] = \S \S \S \S & {\S \S \S \S} & {\S \S.}
    \arrow["{\Sinjz \hcomp \Sinjz}", from=1-1, to = 1-3]
	\arrow["{\S \Sswap \S}", from=2-1, to=2-2]
	\arrow["{\SmonadSum \hcomp \SmonadSum}", from=2-2, to=2-3]
\end{tikzcd} \]
We describe a similar structure for any $n \in \N$.
The natural transformation $\Sswap$ satisfies the \emph{Yang-Baxter equation}.
\[ 
\begin{tikzcd}[row sep = none]
	& {\S \S \S} & {\S \S \S} \\
	{\S \S \S} &&& {\S \S \S} \\
	& {\S \S \S} & {\S \S \S}
	\arrow["{\S \Sswap}", from=1-2, to=1-3]
	\arrow["{\Sswap \S}", from=1-3, to=2-4]
	\arrow["{\Sswap \S}", from=2-1, to=1-2]
	\arrow["{\S \Sswap}"', from=2-1, to=3-2]
	\arrow["{\Sswap \S}"', from=3-2, to=3-3]
	\arrow["{\S \Sswap}"', from=3-3, to=2-4]
\end{tikzcd} \] 
By thm 2.1 of~\cite{Cheng11}, this induces a canonical monad structure on 
$\S^n$ for any $n \in \N$, and a canonical distributive 
law of $\S^n$ over $\S^m$.
Let us provide an explicit description of this structure.
Define natural transformations 
$\Sinjzn : \idfun \naturalTrans \S^n$ and $\SmonadSumn : \S^{2n} \naturalTrans \S^n$,
$\Sswapnm : \S^n \S^m \naturalTrans \S^m \S^n$ and $\Sscalen r : \S^n \naturalTrans \S^n$
in $\catAdd$ and $\catLin$ by
    \begin{equation} \label{eq:iterated-monad}
        \begin{split} 
        \Sproj_w \compl \Sinjzn = \kronecker{w}{\zeroword} \id \qquad
        & \Sproj_w \compl \SmonadSumn = \hspace{-1em} \sum_{(w^1, w^2) \in \opart w} \hspace{-1em}
        \Sproj_{w^1} \compl \Sproj_{w^2} \\
        \Sproj_w \compl \Sscalen r = r^{\weight{w}} \Sproj_w \qquad
        & \Sproj_u  \compl \Sproj_v \Sswapnm =  \Sproj_v \compl \Sproj_u 
        \end{split}
    \end{equation}

\begin{proposition} \label{prop:iterated-monad-S}
    For all $n \in \N$, $\S^n$ is a monad 
    with unit $\Sinjzn$ and multiplication 
    $\SmonadSumn : \S^n \S^n \naturalTrans \S^n$. 
    Furthermore, $\Sswapnm$ is a distributive law of $\S^n$ over 
    $\S^m$.
\end{proposition}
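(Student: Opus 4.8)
The plan is to follow the computational style of \cref{prop:tangent-bundle-monad,prop:tangent-bundle-distributive}. Since $\S^k X = X^{\times 2^k}$ is the $2^k$-fold product whose projections are exactly the $\Sproj_w$ for words $w \in \{0,1\}^k$, any natural transformation with codomain a power of $\S$ is uniquely determined by its components $\Sproj_w \compl (-)$, and two such transformations are equal iff all their components agree. The transformations $\Sinjzn$, $\SmonadSumn$, $\Sswapnm$ are therefore well defined by \cref{eq:iterated-monad} (the universal property of the product assembles the prescribed components into a unique morphism), and they are natural in $\catAdd$ and $\catLin$ because they are built from projections, sums and pairings, exactly like the families of \cref{eq:cdc-natural-equations}. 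It thus suffices to check each monad axiom and each distributive-law diagram componentwise, i.e.\ after post-composing with an arbitrary $\Sproj_w$.

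For the monad laws I would first record that $\Sproj_w \compl \SmonadSumn = \sum \Sproj_{w^1} \compl \Sproj_{w^2}$, the sum ranging over ordered pairs $(w^1,w^2)$ with $w^1 + w^2 = w$ and disjoint supports. The associativity square, post-composed with $\Sproj_w$, then unfolds on both sides to the symmetric triple sum $\sum \Sproj_{u^1} \compl \Sproj_{u^2} \compl \Sproj_{u^3}$ over ordered triples with $u^1 + u^2 + u^3 = w$ and pairwise disjoint supports. This is the verbatim generalization of the computation $\sum_{j_1+j_2+j_3 = i}\Sproj_{j_1}\compl\Sproj_{j_2}\compl\Sproj_{j_3}$ already carried out for $\S$ in \cref{prop:tangent-bundle-monad}, with words replacing single bits. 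For the two unit laws, the factor $\Sinjzn$ contributes $\kronecker{w^j}{\zeroword}$ via $\Sproj_v \compl \Sinjzn = \kronecker{v}{\zeroword}\id$ and naturality of the projections, which collapses the decomposition sum to its single surviving term ($w^2 = \zeroword$, $w^1 = w$ for $\SmonadSumn \compl \Sinjzn \S^n$, and dually for $\SmonadSumn \compl \S^n \Sinjzn$), leaving $\Sproj_w$; hence both composites equal $\id$.

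For the distributive law of $\S^n$ over $\S^m$ there are four diagrams. The defining property $\Sproj_u \compl \Sproj_v \compl \Sswapnm = \Sproj_v \compl \Sproj_u$ says that $\Sswapnm$ merely transposes the two stacks of projections, so each diagram reduces to a transposition identity exactly as in \cref{prop:tangent-bundle-distributive}. The two unit triangles follow because the Kronecker delta $\kronecker{v}{\zeroword}$ coming from $\Sinjzn$ is insensitive to transposing the stacks. The two multiplication hexagons reduce to the statement that transposing the stacks commutes with the decomposition sum defining $\SmonadSumn$ (resp.\ $\SmonadSumn[m]$): post-composing with $\Sproj_u \compl \Sproj_v$ and expanding the two paths yields the same sum of triple projections, merely reordered. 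I would then remark that existence of the whole structure is in any case guaranteed by the Yang--Baxter equation together with thm.\ 2.1 of~\cite{Cheng11}; the genuine content of the proposition is that Cheng's canonical structure is the closed form~\cref{eq:iterated-monad}, which one may alternatively confirm by induction on $n$ starting from the $n=1$ and $n=2$ cases.

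The main obstacle is the index bookkeeping in the two multiplication hexagons of the distributive law, where the stacks have distinct depths $n$ and $m$ and $\Sswapnm$ reindexes between words of length $n$ and $m$: one must check carefully that ``sum over ordered decompositions of a word of length $m$'' on one side matches ``transpose, then sum over decompositions of a word of length $n$'' on the other. Everything else is a routine, if tedious, transcription of the already-established $n=1$ computations with bits replaced by words, so I expect the proof to be short once the componentwise principle is invoked.
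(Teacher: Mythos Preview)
Your proposal is correct and mirrors the paper's own proof almost exactly: the paper simply says the monad diagrams are proved ``the same as the proof of \cref{prop:tangent-bundle-monad}, replacing the integers in the indices by words,'' and that the distributive law is ``the same proof as \cref{prop:tangent-bundle-distributive}, upon observing that $\Sswapnm[m][n] \compl \Sswapnm = \id$.'' The only point you leave implicit is this involution identity, which is what reduces the second pair of distributive-law diagrams to the first; otherwise your componentwise expansion is precisely what the paper intends.
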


\begin{proof}
    The proofs of the monad diagrams is the same as the proof of 
    \cref{prop:tangent-bundle-monad},
    recplacing the integers in the indices by words.
    The proof that $\Sswapnm$ is a distributive law is the same proof as 
    \cref{prop:tangent-bundle-distributive}, upon observing that 
    $\Sswapnm[m][n] \compl \Sswapnm = \id$. 
\end{proof}

\begin{theorem} \label{thm:iterated-monad} 
    For all $n \in \N$, 
    $\Sinjzn : \idfun \naturalTrans \T^n$, 
    $\SmonadSumn : \T^n \T^n \naturalTrans \T$ and 
    $\Sswapnm : \T^n \T^m \naturalTrans \T^m \T^n$ are natural in $\cat$.
    Thus, $\T^n$ is a monad and $\Sswapnm$ is a distributive law 
    of $\T^n$ over $\T^m$.
\end{theorem}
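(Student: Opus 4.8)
The plan is to mimic the passage from $\S$ to $\T$ already carried out in \cref{prop:tangent-bundle-monad,prop:tangent-bundle-distributive}: \cref{prop:iterated-monad-S} provides the monad and distributive-law structure of $\S^n$ in $\catAdd$ and $\catLin$, so the only genuinely new work is to promote these data to $\cat$, with regard to $\T^n$ rather than $\S^n$. Concretely I would split the argument into a naturality part and an equational part: first that $\Sinjzn$, $\SmonadSumn$ and $\Sswapnm$ are natural in $\cat$ with regard to the relevant powers of $\T$, and then that the monad and distributive-law diagrams, which already commute in $\catLin$, also commute in $\cat$.

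For naturality I would avoid the explicit componentwise formulas of \cref{eq:iterated-monad}, since testing a naturality square against an arbitrary, possibly non-$\d$-linear, morphism would reintroduce exactly the combinatorics we want to hide. Instead I would use that the iterated structure is assembled from the single-step transformations $\Sinjz$, $\SmonadSum$ and $\Sswap$ purely by whiskering and vertical/horizontal composition: this is what the construction of~\cite{Cheng11} out of the Yang-Baxter equation does, with $\Sinjzn[2] = \Sinjz \hcomp \Sinjz$ and $\SmonadSumn[2] = (\SmonadSum \hcomp \SmonadSum) \comp \S \Sswap \S$, the general $n$ being obtained inductively from whiskered copies of $\Sswap$. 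By \cref{thm:tangent-bundle-natural} each single-step transformation is a bona fide natural transformation between powers of $\T$ in $\cat$, and whiskering by $\T$ together with composition preserve naturality, so every composite so obtained is again natural in $\cat$. The cleanest packaging is an induction on $n$ along $\T^{n+1} = \T \T^n$, whose inductive step is the horizontal composition of a transformation natural with regard to $\T$ with one natural with regard to $\T^n$.

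Once naturality in $\cat$ is secured, I would transfer the equations exactly as in \cref{prop:tangent-bundle-monad}. Each monad or distributive-law axiom is an identity between natural transformations $\T^k \naturalTrans \T^l$, hence may be checked on components, and every component is a composite of morphisms $\T^a(\gamma_Y)$ with $\gamma$ one of the structural transformations and $\gamma_Y$ a $\d$-linear morphism. Because $\S$ restricts to an endofunctor of $\catLin$, \cref{prop:T-linear} gives $\T^a(\gamma_Y) = \S^a(\gamma_Y)$ for every $a$, so each component identity in $\cat$ coincides with its counterpart in $\catLin$, which holds by \cref{prop:iterated-monad-S}; together with $\T^n X = \S^n X$ on objects this yields that $\T^n$ is a monad and $\Sswapnm$ a distributive law of $\T^n$ over $\T^m$ in $\cat$.

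I expect the main obstacle to be reconciling the two descriptions of the iterated transformations used along the way: the explicit componentwise formulas of \cref{eq:iterated-monad}, which are phrased over $\S$, and the whiskered composites of single-step transformations, which I want to read over $\T$. Both yield the same $\cat$-morphisms precisely because every single-step component is $\d$-linear and $\T$ agrees with $\S$ there (\cref{prop:T-linear}); pinning this identification down is the one step where care is needed, and it is exactly what lets naturality with regard to $\T^n$ follow formally from preservation of naturality under whiskering and composition, rather than through a direct combinatorial verification and without transporting structure across the inclusion $\catLin \hookrightarrow \cat$, which is not a monoidal functor sending $\S$ to $\T$.
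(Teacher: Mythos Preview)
Your proposal is correct and follows essentially the same approach as the paper: decompose $\Sinjzn$, $\Sswapnm$ and $\SmonadSumn$ as whiskered composites of the single-step transformations $\Sinjz$, $\Sswap$, $\SmonadSum$, use $\d$-linearity of these components together with \cref{prop:T-linear} to replace each $\S$-whisker by a $\T$-whisker, and then invoke naturality of the single-step transformations (\cref{thm:tangent-bundle-natural}) inductively. The paper carries this out by writing down the concrete inductive decompositions (e.g.\ $\Sswapnm[n+1][m] = (\Sswapnm[1][m]\,\T^n) \comp \T\,\Sswapnm$ and $\SmonadSumn[n+1] = \T\,\SmonadSumn \comp (\SmonadSum\,\T^n) \comp \T\,\Sswapnm[n][1]\,\T^n$), which is exactly the ``reconciliation'' step you flag as the main obstacle; your identification of this step and of why it works is spot on.
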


\begin{proof}
    First, we prove that $\Sinjzn$ 
    is a natural transformations $\idfun \naturalTrans \T^n$.
    We can prove by a straightforward induction that 
    $\Sinjzn$ is the horizontal composition of 
    $\Sinjz : \idfun \naturalTrans \S$ in $\catLin$ with itself, 
    taken $n$ times 
    \[ \begin{tikzcd}[column sep = large]
	{\Sinjzn = \idfun \cdots \idfun} & {\S \cdots \S}
	\arrow["{\Sinjz \hcomp \cdots \hcomp \Sinjz}", from=1-1, to=1-2] 
    \end{tikzcd} \]
    But $\Sinjz$ is linear, so $\T \Sinjz = \S \Sinjz$. Thus, 
    $\Sinjz$ is also the horizontal composition 
    of $\Sinjz : \idfun \naturalTrans \T$ in $\cat$ with itself, taken $n$ times.
    Thus, it is a natural transformation $\idfun \naturalTrans \T^n$ in $\cat$.
    The proof of naturality of $\Sscalen r$ is the same.
    Then, we prove that $\Sswapnm$ is a natural transformation 
    $\T^n \T^m \naturalTrans \T^n \T^m$ for all $n, m$.
    We can prove by a straightforward computation that 
 \[ 
\begin{tikzcd}[row sep = abysmal]
	{\Sswapnm[1][m+1] = \S \S \S^m} & {\S \S \S^m} & {\S \S^m \S} \\
    {\Sswapnm[n+1][m] = \S \S^n \S^m} & {\S \S^m \S^n} & {\S^m \S \S^n,}
	\arrow["{\Sswap \S^m}", from=1-1, to=1-2]
	\arrow["{\S \Sswapnm[1][m]}", from=1-2, to=1-3]
	\arrow["{\S \Sswapnm}", from=2-1, to=2-2]
	\arrow["{\Sswapnm[1][m] \S^n}", from=2-2, to=2-3]
\end{tikzcd} \]
this corresponds to a decomposition of a cyclic permutation into a product of transpositions.
But we know that $\Sswapnm$ is $\d$-linear for all $n, m \in \N$, so 
we get that 
\[ \Sswapnm[1][n+1] = \T \Sswapnm[1][m] \comp (\Sswap \T^m) \qquad 
\Sswapnm[n+1][m] = (\Sswapnm[1][m] \T^n) \comp \T \Sswapnm . \]
We conclude that $\Sswapnm : \T^n \T^m \naturalTrans \T^m \T^n$ is a natural transformation 
by a straightforward induction, using the naturality of 
$\Sswap : \T \T \naturalTrans \T \T$.
Finally, we prove that $\SmonadSumn$ is a natural transformation 
$\T^n \T^n \naturalTrans \T^n$. Observe that 
\[ 
\begin{tikzcd}
	{\SmonadSumn[n+1] = \S \S^n \S \S^n } & {\S \S \S^n \S^n} & {\S \S^n \S^n} & {\S \S^n}
	\arrow["{\S \Sswapnm[n][1] \S^n}", from=1-1, to=1-2]
	\arrow["{\SmonadSum \S^n}", from=1-2, to=1-3]
	\arrow["{\S \SmonadSumn}", from=1-3, to=1-4]
\end{tikzcd} \]
By $\d$-linearity of $\Sswapnm[n][1]$ and $\SmonadSumn$, we get that 
$\SmonadSumn[n+1] = \T \SmonadSumn \comp (\SmonadSum \T^n) \comp (\T \Sswapnm[n][1] \T^n)$.
We conclude that $\SmonadSumn[n]$ is a natural transformation 
$\T^n \T^n \naturalTrans \T$ by a straightforward induction, using the 
naturality of $\SmonadSum$ and $\Sswapnm[1][n]$.
\end{proof}

\section{Taylor expansion as a functor} \label{sec:taylor-expansion}

In this section, we address the central question of the paper. We 
define an operator $\Tn$ that performs an
order $n$ Taylor expansion, and we prove its compositionality.
Remember that the Taylor expansion of a smooth function is given by 
\begin{equation} \label{eq:Taylor-analysis}
    f(x + \varepsilon u) = \sum_{k=0}^n \frac{\varepsilon^k}{\factorial{k}} 
    \deriven{f}{k} \cdot (u, \ldots, u) + o(\varepsilon^n) 
\end{equation}
Intuitively, we would like to define $\Tn$ such that 
the $i$-th component of $\Tn f(x, u_1, \ldots, u_n)$ 
provides the term of order $\varepsilon^i$ in the Taylor expansion of 
$f(x + \varepsilon u_1 + \cdots + \varepsilon^n u_n)$.
For example, $\Tn[2] f(x, u_1, u_2)$ should be equal to 
\begin{equation} \label{eq:Ttwo}
    \left(f(x), \derive{f} \cdot u_1, \frac{1}{2} \deriven f 2
\cdot (u_1, u_1) + \derive{f} \cdot u_2 \right).
\end{equation}
Formally, let us assume that $\cat$ is a cartesian differential category, and that its associated 
semiring $\rig$ has multiplicative inverses for integers (see \cref{def:multiplicative-inverse}).
For any $f \in \cat(X, Y)$, define $\Tn f \in \cat(\Sn X, \Sn Y)$ 
by the following equation:
\begin{equation} \label{eq:Tn-intuitive}
    \Sproj_j \comp \Tn f = \sum_{k=1}^j \sum_{\twolines{i_1, \ldots, i_k > 0 \text{ s.t.}}{i_1 + \cdots + i_k = j}}
\frac{1}{\factorial k} \hod{f}{k} \comp \prodtuple{\Sproj_0, \Sproj_{i_1}, \ldots, 
\Sproj_{i_k}}. \end{equation}
This functor corresponds to the push forward of higher order dual numbers \cite{Szirmay20}.
Intuitively, an element $e \in \Sn X$ is a degree 
$n$ polynomial over a formal variable $\formalvar$ such that 
$\formalvar^{n+1} \neq 0$. Then, $\Tn f$ maps this 
polynomial to the best polynomial approximation of 
$f(e_0 + (e-e_0))$ given by Taylor expansion.
\begin{example} \label{ex:taylor-wrel}
    We can check in the relational model (see \cref{ex:wrel})
    that the functor $\Tn$ maps $f \in \klWREL(A, B)$ to 
    a matrix 
    $\Tn f \in \klWREL(\uplus_{i = 0}^{n} A, \uplus_{i=0}^n B)
    \simeq \rig^{\mfin(A)^{n+1} \times \uplus_{i=0}^n B}$
    whose coefficients of degree $i$ are given by 
    \[ (\Tn f)_{(m_0, \ldots, m_n), (i, b)}
    =  \frac{\factorial{(m_0 + \cdots + m_n)}}{
    \factorial{m_0} \cdots \factorial{m_n}} f_{m_0 + \cdots + m_n, b} 
    \]
    if $1 \size{m_1} + \cdots + n \size{m_n} = i$, and are equal to $0$
    otherwise (here, $\size{m} = \sum_{a \in A} m(a)$ is the size of the multiset).
    As explained in Sec. 9.4.2 of \cite{EhrhardWalch25}, those coefficients
    correspond to a Taylor expansion on formal power series. 
\end{example}

\subsection{Alternative definition of Taylor expansion}

\label{sec:taylor-expansion-def}

The issue with \cref{eq:Tn-intuitive} is that it is hard to manipulate.
The computation of $\Tn(g \comp f)$ would intensively require 
the Fa\'a di Bruno formula, and the computation of $\Tn^2 f$
seems nightmarish (we want to show that $\Tn$ 
is a monad, this implies intensive uses of $\Tn^2 f$).
We solve this issue by relating $\Tn$ with the higher order 
total derivative $\d^n f$ rather than the higher order 
derivative $\hod{f} n$. 
For example, \cref{eq:d-two} and \cref{eq:Ttwo} imply that 
\begin{equation} \label{eq:Ttwo-alternative}
    \Sproj_2 \comp \Tn[2] f \comp \prodtuple{x, u, v} 
= \d \d f \comp \prodtuplenosize{x, u, \frac{1}{2} u, v}.
\end{equation}
We generalize this alternative definition to all $n$ 
and prove in~\cref{thm:tn-justification} that 
this coincides with \cref{eq:Tn-intuitive}.
This provides simpler equations on $\Tn$, that we use 
to prove that $\Tn$ is a functor using 
the functoriality of $\T^n$.
We use a similar proof technique in \cref{sec:taylor-expansion-monad}
in order to prove that $\Tn$ is a monad, using the monad
structure on $\T^n$ of \cref{sec:iterated-monad}. 
For any $n \in \N^*$, define two natural transformations
$\Snode, \Snodebis : \Sn \naturalTrans \S \Sn[n-1]$ in $\catAdd$ and $\catLin$ by  
\begin{equation}
    \begin{split}
    \Snodebis<X> &= \prodtuple{\prodtuple{\Sproj_0, \ldots, \Sproj_{n-1}}, 
    \prodtuple{\Sproj_1, \ldots, \Sproj_n}} \\
    \Snode<X> &= \prodtuple{\prodtuple{\Sproj_0, \ldots, \Sproj_{n-1}}, 
\prodtuple{\frac{1}{n} \Sproj_1, \ldots, \frac{n}{n} \Sproj_n}}.
    \end{split}
\end{equation}
The subscript $X$ is omitted when writing $\Snode<X>$ and $\Snodebis<X>$.
For any $n \in \N$ and $f \in \cat(X, Y)$, 
define $\tn f \in \cat(\Sn X, Y)$ and 
$\tnbis f \in \cat(\Sn X, Y)$ by induction on $n$: 
$\tn[0] f = \tnbis[0] f = f$ and 
\begin{equation} \label{eq:tn-inductive}
\begin{tikzcd}[row sep = abysmal]
	{\tn[n+1] f    = \Sn[n+1] X} & {\S \Sn X} & Y \\ 
    {\tnbis[n+1] f = \Sn[n+1] X} & {\S \Sn X} & Y.
	\arrow["{\Snode[n+1]}", from=1-1, to=1-2]
	\arrow["{\d \tn f}", from=1-2, to=1-3]
    \arrow["{\Snodebis[n+1]}", from=2-1, to=2-2]
    \arrow["{\d \tnbis f}", from=2-2, to=2-3]
    \end{tikzcd}
    \end{equation}
Observe that $\tn[1] f = \tnbis[1] f = \d f$.
The operator $\tnbis$ is a generalization to differential categories 
of the \emph{higher order directional derivative} of def. 1 of~\cite{Huang06}.
%
On the other hand, \cref{thm:tn-justification} proves
that $\tn$ is equal to the sum in \cref{eq:Tn-intuitive}.
For example, $\tn[2] f \comp \prodtuple{x, u, v} = 
\d \d f \comp \prodtuple{x, u, \frac{1}{2} u, v}$ coincides with
\cref{eq:Ttwo-alternative}. 
\begin{remark}
    Both $\Snode$ and $\Snodebis$ duplicate values. 
    As such, they do not exist in the 
    setting of~\cite{Walch23,EhrhardWalch25} in which sums are partial.
\end{remark}    
There is an alternative definition of $\tn$ and $\tnbis$ 
that refers directly to the iterated derivative 
$\d^n$.
Define for any $n \in \N$ two natural transformations
$\Stree, \Streebis : \Sn \naturalTrans \S^n$ in $\catLin$
 by induction on $n$: $\Stree<X>[0] = \Streebis<X>[0] = \id_{X} \in \cat(X, X)$
and
\begin{equation} \label{eq:Stree}
    \begin{tikzcd}[row sep = abysmal]
        {\Stree<X>[n+1]    = \Sn[n+1] X} & {\S \Sn X} & {\S^{n+1} X} \\
        {\Streebis<X>[n+1] = \Sn[n+1] X} & {\S \Sn X} & {\S^{n+1} X.}
        \arrow["{\Snode<X>[n+1]}", from=1-1, to=1-2]
        \arrow["{\S \Stree<X>}", from=1-2, to=1-3]
        \arrow["{\Snodebis<X>[n+1]}", from=2-1, to=2-2]
        \arrow["{\S \Streebis<X>}", from=2-2, to=2-3]
    \end{tikzcd}
\end{equation}
Again, the subscript $X$ is omitted when writing 
$\Stree$ and $\Streebis$. \Cref{fig:Stree-graphical} provides a
graphical description of $\Stree[3]$.
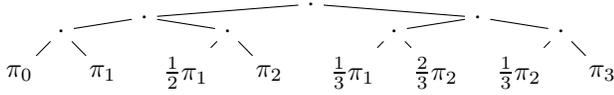
\begin{figure}
    \begin{center}
    \begin{tikzpicture}
        [level distance = 2em, 
        level 1/.style={sibling distance=0.5 \linewidth, level distance = 0.5em},
        level 2/.style={sibling distance=0.25 \linewidth, level distance = 0.5em},
        level 3/.style={sibling distance=0.125 \linewidth, level distance = 1.6em}]
        \node {.}
          child {node {.}
            child {node {.}
                child {node {$\Sproj_0$}}
                child {node {$\Sproj_1$}}}
            child {node {.}
                child {node {$\frac{1}{2} \Sproj_1$}}
                child {node {$\Sproj_2$}}}
          }
          child {node {.}
            child {node {.}
                child {node {$\frac{1}{3} \Sproj_1$}}
                child {node {$\frac{2}{3} \Sproj_2$}}}
            child {node {.}
                child {node {$\frac{1}{3} \Sproj_2$}}
                child {node {$\Sproj_3$}}}
          };
      \end{tikzpicture}
    \end{center}
    \caption{Graphical representation of $\Stree[3]$}
    \vspace{-1em}
    \label{fig:Stree-graphical}
\end{figure}
The following result is the counterpart
of Theorem 1 of~\cite{Huang06}.
\begin{proposition} \label{prop:tn-direct}
    $\tn f = \d^n f \comp \Stree$ and 
    $\tnbis f = \d^n f \comp \Streebis$
\end{proposition}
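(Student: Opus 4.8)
The plan is to prove both identities by a straightforward induction on $n$, the engine of which is the fact that $\Stree[n]$ and $\Streebis[n]$ are declared as natural transformations \emph{in $\catLin$}, i.e. that their components are $\d$-linear. Since the two statements have identical proofs, I would establish $\tn[n] f = \d^n f \comp \Stree[n]$ in full and then simply remark that replacing $\Snode$ by $\Snodebis$ and $\Stree$ by $\Streebis$ throughout yields the dual identity $\tnbis[n] f = \d^n f \comp \Streebis$.

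For the base case $n = 0$ both sides reduce to $f$: by definition $\tn[0] f = f$, while $\d^0 f \comp \Stree[0] = f \comp \id = f$ since $\S^0 = \idfun$ and $\Stree[0] = \id$. For the inductive step I would assume $\tn[n] f = \d^n f \comp \Stree[n]$ and first compute $\d \tn[n] f$ via the chain rule \ref{def:cdc-chain}, which by \cref{prop:tangent-bundle-functor} can be written $\d(g \comp h) = \d g \comp \T h$. Applying this with $g = \d^n f$ and $h = \Stree[n]$ gives
\[ \d \tn[n] f = \d(\d^n f \comp \Stree[n]) = \d(\d^n f) \comp \T \Stree[n] = \d^{n+1} f \comp \T \Stree[n] . \]
The crucial observation is that $\Stree[n]$ is $\d$-linear, so by \cref{prop:T-linear} we have $\T \Stree[n] = \S \Stree[n]$, whence $\d \tn[n] f = \d^{n+1} f \comp \S \Stree[n]$. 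Precomposing with $\Snode[n+1]$ and unfolding the two recursive definitions $\tn[n+1] f = \d \tn[n] f \comp \Snode[n+1]$ and $\Stree[n+1] = \S \Stree[n] \comp \Snode[n+1]$ then yields
\[ \tn[n+1] f = \d^{n+1} f \comp \S \Stree[n] \comp \Snode[n+1] = \d^{n+1} f \comp \Stree[n+1] , \]
closing the induction.

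The only step with any content is the collapse of $\T \Stree[n]$ into $\S \Stree[n]$; everything else is just unfolding the recursions, so I do not expect a genuine obstacle. The single point requiring care is that $\Stree[n]$ (resp. $\Streebis[n]$) really is $\d$-linear, but this is guaranteed by construction: it is defined as a natural transformation in $\catLin$, being the composite $\S \Stree[n] \comp \Snode[n+1]$ of the $\d$-linear transformations $\Snode$ (resp. $\Snodebis$) — assembled from projections and scalar multiples — with $\S$ applied to a $\d$-linear map. This linearity is precisely what makes \cref{prop:T-linear} applicable and lets the chain rule replace the tangent functor $\T$ by the simpler product functor $\S$ at each inductive step.
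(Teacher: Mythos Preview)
Your proposal is correct and follows essentially the same approach as the paper: induction on $n$, with the inductive step driven by the chain rule together with the $\d$-linearity of $\Stree[n]$ to collapse $\T \Stree[n]$ into $\S \Stree[n]$. The only cosmetic difference is that the paper runs the computation from $\d^{n+1} f \comp \Stree[n+1]$ towards $\tn[n+1] f$ while you go in the opposite direction, but the content is identical.
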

\begin{proof}
By induction on $n$. The base case $n = 0$ is trivial, 
and the inductive case consists of the following computation.
\begin{align*} 
    \d^{n+1}f \comp \Stree[n+1] 
&= \d \d^n f \comp \T \Stree \comp \Snode[n+1] 
\tag*{by linearity of $\Stree$} \\
&= \d (\d^n f \comp \Stree) \comp \Snode[n+1] 
\tag*{by \ref{def:cdc-chain}} \\
&= \d \tn f \comp \Snode[n+1]  
= \tn[n+1] f \tag*{I.H.} 
\end{align*}
The proof for the equation on $\tnbis f$ is the same.
\end{proof}

We now give an explicit computation 
of $\tn f$ that involves the higher order 
derivatives $\hod{f}{k}$ in order to ensure that $\tn f$ corresponds 
to \cref{eq:Tn-intuitive}.  
First, let us describe explicitly 
what are the components of $\Stree$. 
For any word $w \in \{0,1\}^*$, let 
$\factorial w = \prod_{i \in w} i$ (recall that 
$w$ can be seen as the set $\{i \st w_i = 1\}$).

\begin{proposition} \label{prop:Stree-explicit}
For any word $w$ of length $n$, 
\[ \Sproj_w \compl \Stree = \frac{\factorial{\weight{w}}}{w!} \Sproj_{\weight{w}} \quad \quad 
\Sproj_w \compl \Streebis = \Sproj_{\weight{w}}.\]
\end{proposition}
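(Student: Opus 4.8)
The plan is to prove both identities simultaneously by induction on the length $n = \length{w}$, peeling off the \emph{last} letter of $w$, which is exactly the fresh copy of $\S$ introduced by $\Snode[n+1]$ (resp.\ $\Snodebis[n+1]$) in the recursive definition~\cref{eq:Stree}. The base case $n = 0$ is immediate: $w$ is empty, $\Stree[0] = \Streebis[0] = \id$, $\weight{w} = 0$ so $\Sproj_{\weight{w}} = \Sproj_0 = \id$ on $\Sn[0] = \idfun$, and both $\factorial{\weight{w}}$ and $\factorial{w}$ equal $1$. For the inductive step I would write $w = \hat w\, a$ with $a = w_{n+1} \in \{0,1\}$ the final letter and $\hat w = w_1 \cdots w_n$ of length $n$, so that $\Sproj_w = \Sproj_{\hat w} \compl \Sproj_a$, where $\Sproj_a$ peels the outermost $\S$ of $\S^{n+1} = \S \S^n$ (one checks on the small cases $n=1,2$, or against the tree of \cref{fig:Stree-graphical}, that this outermost copy is the one indexed by $w_{n+1}$).

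Next I would use $\Stree[n+1] = \S\Stree \compl \Snode[n+1]$ together with the fact that $\S$ turns a natural transformation into a product map, which gives $\Sproj_a \compl \S \Stree = \Stree \compl \Sproj_a$; hence $\Sproj_w \compl \Stree[n+1] = \Sproj_{\hat w} \compl \Stree \compl \Sproj_a \compl \Snode[n+1]$. Reading off~\cref{eq:Stree}, the composite $\Sproj_a \compl \Snode[n+1] : \Sn[n+1] \arrow \Sn[n]$ has $i$-th component (for $0 \le i \le n$) equal to $\Sproj_i$ when $a = 0$, and equal to $\frac{i+1}{n+1}\Sproj_{i+1}$ when $a = 1$. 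Applying the induction hypothesis $\Sproj_{\hat w}\compl\Stree = \frac{\factorial{\weight{\hat w}}}{\factorial{\hat w}}\Sproj_{\weight{\hat w}}$ and then projecting onto the $\weight{\hat w}$-th component yields $\Sproj_w \compl \Stree[n+1] = \frac{\factorial{\weight{\hat w}}}{\factorial{\hat w}}\Sproj_{\weight{w}}$ when $a = 0$, and $\Sproj_w \compl \Stree[n+1] = \frac{\factorial{\weight{\hat w}}}{\factorial{\hat w}} \cdot \frac{\weight{\hat w}+1}{n+1}\,\Sproj_{\weight{w}}$ when $a = 1$, using $\weight{w} = \weight{\hat w} + a$ in both cases.

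What remains — and the only genuine bookkeeping — is to match these against $\frac{\factorial{\weight{w}}}{\factorial{w}}\Sproj_{\weight{w}}$ through the factorial identities relating $w$ and $\hat w$. When $a = 0$ the last position contributes nothing, so $\factorial{w} = \factorial{\hat w}$ and $\weight{w} = \weight{\hat w}$, and the claim is direct. When $a = 1$ the position $n+1$ contributes the factor $n+1$, so $\factorial{w} = (n+1)\,\factorial{\hat w}$, while $\factorial{\weight{\hat w}}\,(\weight{\hat w}+1) = \factorial{\weight{\hat w}+1} = \factorial{\weight{w}}$; substituting, the coefficient becomes $\frac{\factorial{\weight{w}}}{(n+1)\,\factorial{\hat w}} = \frac{\factorial{\weight{w}}}{\factorial{w}}$. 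This cancellation is the heart of the argument: the rational coefficients $\tfrac{i}{n}$ built into $\Snode$ are exactly what absorbs the factor $n+1$ coming from $\factorial{w}$. The statement for $\Streebis$ then follows from the identical induction with all coefficients equal to $1$ (replacing $\Snode$ by $\Snodebis$): the $\factorial{w}$ factors never appear, and one reads off $\Sproj_w \compl \Streebis = \Sproj_{\weight{w}}$ immediately in both cases $a = 0$ and $a = 1$.
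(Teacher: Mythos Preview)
Your proof is correct and follows exactly the same strategy as the paper: induction on the length, peeling off the last letter of $w$ via $\Sproj_w = \Sproj_{\hat w}\compl\Sproj_a$, pushing $\Sproj_a$ through $\S\Stree$ by naturality, reading off the two components of $\Snode[n+1]$, and finishing with the factorial bookkeeping $\factorial{w} = (n+1)\factorial{\hat w}$ in the case $a=1$. The only cosmetic difference is that the paper invokes \cref{notation:set-projection} directly for $\Sproj_w = \Sproj_v \compl \Sproj_a$ rather than appealing to small cases or the figure.
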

\begin{proof}
    We do the proof for $\Stree$, by induction on $n$. 
    The base case $n = 0$ is trivial (remember that $\Sn[0] = \idfun$).
    Let $w = v \cdot 0$, where $v$ is a word 
    of length $n$.
    \begin{align*} \Sproj_w \compl \Stree[n+1] 
    &= \Sproj_{v} \compl \Sproj_0 \compl \Stree[n+1] \tag*{by definition of $\Sproj_w$} \\
    &= \Sproj_{v} \compl \Stree[n] \compl \prodtuple{\Sproj_0, \ldots, \Sproj_n} \tag*{by definition
    of $\Stree$} \\
    &= \frac{\factorial{\weight{w}}}{\factorial w} \Sproj_{\weight w} 
    \tag*{by inductive hypothesis.}
    \end{align*}
    Now, let $w = v \cdot 1$, where $v$ is a word 
    of length $n$.
     \begin{align*} \Sproj_w \compl \Stree[n+1] &= 
    \Sproj_{v} \compl \Sproj_1 \compl \Stree[n+1] 
    = \Sproj_{v} \compl \Stree \compl \Sproj_1 \compl \Snode \tag*{by \cref{eq:Stree}} \\
    &= \frac{\factorial{\weight{v}}}{\factorial{v}} \Sproj_{\weight{v}} \compl 
    \Sproj_1 \compl \Snode  \tag*{by inductive hypothesis} \\
    &= \frac{\factorial{\weight{v}} \ (\weight{v} + 1)}{\factorial{v} \ (n+1)} \Sproj_{\weight{v} + 1}
    = \frac{\factorial{\weight w}}{\factorial{w}} \Sproj_{\weight w} 
     \end{align*}
     This concludes the proof.
    The proof for $\Streebis$ is a simpler version 
    of this proof, not involving coefficients.
\end{proof}

\begin{theorem} \label{thm:tn-justification}
    For all $f \in \cat(X, Y)$,
    \begin{equation} \label{eq:tn-justification}
        \tn f = \sum_{k=1}^n \sum_{\twolines{i_1, \ldots, i_k > 0\text{ s.t.}}{i_1 + \cdots + i_k = n}}
    \frac{1}{\factorial k} \hod{f}{k} \comp \prodtuple{\Sproj_0, \Sproj_{i_1}, \ldots, 
    \Sproj_{i_k}} \end{equation}
\end{theorem}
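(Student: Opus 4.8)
The plan is to chain together the three facts already at hand: the inductive identity $\tn f = \d^n f \comp \Stree$ of \cref{prop:tn-direct}, the expansion of the total derivative in \cref{prop:higher-order-derivative-3} (available because $\rig$ has multiplicative inverses for integers), and the explicit components of $\Stree$ given in \cref{prop:Stree-explicit}. Substituting the expansion of $\d^n f$ and distributing $\Stree$ through the pairing (legitimate since both the $\Sproj_{w^j}$ and $\Stree$ are $\d$-linear), I would write
\begin{equation*}
\tn f = \sum_{(w^1, \ldots, w^k) \in \opart{\intsegment n}} \frac{1}{\factorial k} \hod{f}{k} \comp \prodtuple{\Sproj_{\zeroword} \compl \Stree, \Sproj_{w^1} \compl \Stree, \ldots, \Sproj_{w^k} \compl \Stree}.
\end{equation*}

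Next I would evaluate each factor $\Sproj_{w^j} \compl \Stree$ using \cref{prop:Stree-explicit}, which gives $\frac{\factorial{\weight{w^j}}}{\factorial{w^j}} \Sproj_{\weight{w^j}}$; in particular $\Sproj_{\zeroword} \compl \Stree = \Sproj_0$ since $\weight{\zeroword} = 0$ and $\factorial{\zeroword} = 1$. Setting $i_j = \weight{w^j}$ for the size of the $j$-th block, I would then invoke the multilinearity of $\hod{f}{k}$ in its last $k$ arguments (\cref{prop:higher-order-derivative-1}) to pull each scalar $\frac{\factorial{i_j}}{\factorial{w^j}}$ out of the $k$-th derivative. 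This leaves each summand equal to $\hod{f}{k} \comp \prodtuple{\Sproj_0, \Sproj_{i_1}, \ldots, \Sproj_{i_k}}$ weighted by the coefficient $\frac{1}{\factorial k}\, \frac{\prod_j \factorial{i_j}}{\prod_j \factorial{w^j}}$.

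The decisive step is then a purely combinatorial simplification. Because $(w^1, \ldots, w^k)$ is a partition of $\intsegment n$, every element of $\{1, \ldots, n\}$ lies in exactly one block, so $\prod_j \factorial{w^j} = \prod_{a=1}^n a = \factorial n$ \emph{independently of the partition}. Hence the weight of each summand depends only on the tuple of block sizes $(i_1, \ldots, i_k)$, and I would regroup the sum by this tuple. For a fixed composition $i_1 + \cdots + i_k = n$ with all $i_j > 0$, the number of ordered partitions of $\intsegment n$ with $\cardinal{w^j} = i_j$ is the multinomial coefficient $\frac{\factorial n}{\factorial{i_1} \cdots \factorial{i_k}}$. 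Multiplying this count by the common weight $\frac{1}{\factorial k}\, \frac{\factorial{i_1} \cdots \factorial{i_k}}{\factorial n}$ yields exactly $\frac{1}{\factorial k}$, so the whole expression collapses to the double sum of \cref{eq:tn-justification}.

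The only genuinely delicate point is this coefficient bookkeeping, and the clean cancellation rests on precisely two observations: that $\prod_j \factorial{w^j} = \factorial n$ whenever the $w^j$ form a partition, and that ordered partitions of a prescribed shape are enumerated by the multinomial coefficient. Once these are in place the messy fractions coming from $\Stree$ and from $\d^n f$ annihilate each other, and everything else follows mechanically from the three cited propositions.
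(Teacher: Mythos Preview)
Your proposal is correct and follows essentially the same route as the paper: both combine \cref{prop:tn-direct}, \cref{prop:higher-order-derivative-3}, and \cref{prop:Stree-explicit}, then use the identity $\prod_j \factorial{w^j} = \factorial n$ for a partition together with the multinomial count (stated as \cref{fact:number-of-partitions} in the paper) to collapse the coefficients. The only cosmetic difference is that you explicitly invoke the multilinearity of $\hod{f}{k}$ from \cref{prop:higher-order-derivative-1} to extract the scalars, whereas the paper performs this step silently.
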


\begin{proof} We start the proof with the following computation.
\begin{align*}
\tn f &= \d^n f \comp \Stree \tag*{\cref{prop:tn-direct}} \\
&= \hspace{-2em} \sum_{(w^1, \ldots, w^k) \in \opart{\intsegment{n}}} \hspace{-0.5em} 
\frac{1}{\factorial k} \hod{f}{k} \comp 
\prodtuplenosize{\Sproj_{\zeroword}, \Sproj_{w^1}, \ldots, \Sproj_{w^k}} \comp \Stree
\tag*{\cref{prop:higher-order-derivative}}\\
&= \hspace{-2em} \sum_{(w^1, \ldots, w^k) \in \opart{\intsegment{n}}} 
\hspace{-0.5em}  \frac{1}{\factorial k} \hod{f}{k} \comp \prodtuplenosize{\Sproj_0, 
\frac{\factorial{\weight{w^1}}}{\factorial{w^1}} \Sproj_{\weight{w^1}}, \ldots, 
\frac{\factorial{\weight{w^k}}}{\factorial{w^k}} \Sproj_{\weight{w^k}}} \\
&= \hspace{-2em} \sum_{(w^1, \ldots, w^k) \in \opart{\intsegment{n}}} 
\hspace{-0.5em}  \frac{\factorial{\weight{w^1}} \ \cdots \ \factorial{\weight{w^k}}}{\factorial{n} \
\factorial{k}} 
\hod{f}{k} \comp  \prodtuplenosize{\Sproj_0, \Sproj_{\weight{w^1}}, \ldots, \Sproj_{\weight{w^k}}}
\end{align*}
We now use a well known result in combinatorics. 

\begin{fact} \label{fact:number-of-partitions}
For all $i_1, \ldots, i_k > 0$ such that $i_1 + \cdots i_k = n$,
the number of partitions $w^1, \ldots, w^k$ of $\intsegment{n}$
such that for all $j$, $\weight{w^j} = i_j$ is equal to the multinomial coefficient
$\frac{\factorial{n}}{\factorial{i_1} \ \cdots \ \factorial{i_k}}$.
\end{fact}
Thus, we can regroup terms of the sum based on the number of words 
$k$ and on the sizes $\weight{w^1}, \ldots, \weight{w^k}$ to conclude that
\[ \tn f = \sum_{k=1}^n
\sum_{\twolines{i_1, \ldots, i_k > 0 \text{ s.t.}}{i_1 + \cdots + i_k = n}}
\frac{1}{\factorial{k}} \hod{f}{k} \comp 
\prodtuplenosize{\Sproj_0, \Sproj_{i_1}, \ldots, \Sproj_{i_k}} . \]
This concludes the proof.
\end{proof}

\begin{remark}
Observe that by similar arguments
invoking \cref{prop:tn-direct,prop:higher-order-derivative,prop:Stree-explicit},
\begin{equation*} \label{eq:tnbis-justification-abelian-functor-calculus}
    \tnbis f = \hspace{-2em} \sum_{\{w^1, \ldots, w^k\} \in \upart{\intsegment{n}}} 
\hspace{-2em} \hod{f}{k} \comp  \prodtuplenosize{\Sproj_0, \Sproj_{\weight{w^1}}, 
\ldots, \Sproj_{\weight{w^k}}}
\end{equation*}
This equation is a clear counterpart of Thm 7.7 of~\cite{Bauer18} shown in the setting of the 
abelian functor calculus.
\end{remark}

For any $n, m \in \N$ such that $m \leq n$, we define
a natural transformation $\Snm : \Sn \naturalTrans \Sn[m]$ 
in $\catLin$ and $\catAdd$ by 
$\Snm = \prodtuple{\proj_0, \ldots, \proj_m}$.
For any $f \in \cat(X, Y)$ 
and any $n \in \N$, define two morphisms
$\Tn f$ and $\Tnbis f$ in $\cat(\Sn X, \Sn Y)$ by 
\begin{equation} \label{eq:Tn-definition}
    \Sproj_i \comp \Tn f = \tn[i] f \comp \Snm[n][i]
    \qquad 
    \Sproj_i \comp \Tnbis f = \tnbis[i] f \comp \Snm[n][i]
\end{equation}
By \Cref{thm:tn-justification}, this definition coincides 
with the definition in \cref{eq:Tn-intuitive}.

\subsection{Functoriality of the Taylor expansion}

The goal of this section is to prove the following result.
\begin{theorem} \label{thm:Tn-functor} 
    For any $n \in \N$, $\Tn, \Tnbis$ are functors, defining 
    the action on objects as $\Tn X = \Tnbis X = \Sn X$.
\end{theorem}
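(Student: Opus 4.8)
The plan is to reduce the functoriality of $\Tn$ (and $\Tnbis$) entirely to the functoriality of $\T^n$ recorded in \cref{eq:faa-di-bruno-functor}, using the identity $\tn[i] f = \d^i f \comp \Stree[i]$ of \cref{prop:tn-direct}. I treat $\Tn$; the argument for $\Tnbis$ is verbatim, replacing $\Stree$ by $\Streebis$ and dropping the coefficients. The identity law is immediate: since $\d^i \id = \Sproj_{\oneword}$ (the projections are $\d$-linear, so one iterates $\d \id = \Sproj_1$) and $\Sproj_{\oneword}\comp\Stree[i] = \Sproj_i$ by \cref{prop:Stree-explicit} (here $\oneword$ has length and weight $i$, so the coefficient $\factorial{\weight{\oneword}}/\oneword! = 1$), we get $\tn[i]\id = \Sproj_i$, whence $\Sproj_i\comp\Tn\id = \Sproj_i\comp\Snm[n][i] = \Sproj_i$ for every $i\in\interval0n$, i.e. $\Tn\id = \id$.

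For the composition law I would check the equality on each projection $\Sproj_i$. Combining \cref{eq:Tn-definition}, \cref{prop:tn-direct} and \cref{eq:faa-di-bruno-functor},
\[ \Sproj_i\comp\Tn(g\comp f) = \d^i g\comp\T^i f\comp\Stree[i]\comp\Snm[n][i], \qquad \Sproj_i\comp\Tn g\comp\Tn f = \d^i g\comp\Stree[i]\comp\Snm[n][i]\comp\Tn f. \]
So it suffices to prove the intertwining relation $\T^i f\comp\Stree[i]\comp\Snm[n][i] = \Stree[i]\comp\Snm[n][i]\comp\Tn f$. A short projection computation, using $\Snm[i][j]\comp\Snm[n][i] = \Snm[n][j]$, first yields the truncation law $\Snm[n][i]\comp\Tn f = \Tn[i]f\comp\Snm[n][i]$, so the relation collapses to the level-$i$ intertwining
\[ \T^i f\comp\Stree[i] = \Stree[i]\comp\Tn[i]f, \]
which I call $(\star)$. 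Everything thus reduces to proving $(\star)$ for all $i$.

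I would establish $(\star)$ by induction on $i$. The cases $i=0,1$ are immediate, since $\Stree[0]=\Stree[1]=\id$ and $\Tn[1]f = \T f$. For the inductive step I unfold $\Stree[n+1] = \S\Stree[n]\comp\Snode[n+1]$ (\cref{eq:Stree}), use that $\Stree[n]$ is $\d$-linear so $\S\Stree[n] = \T\Stree[n]$ (\cref{prop:T-linear}), and then apply functoriality of $\T$ and the induction hypothesis to rewrite the left side as $\T^{n+1}f\comp\Stree[n+1] = \T\Stree[n]\comp\T\Tn[n]f\comp\Snode[n+1]$. Comparing with the right side $\Stree[n+1]\comp\Tn[n+1]f = \T\Stree[n]\comp\Snode[n+1]\comp\Tn[n+1]f$, the whole step comes down to the compatibility of $\Snode$ with the Taylor functors,
\[ \Snode[n+1]\comp\Tn[n+1]f = \T\Tn[n]f\comp\Snode[n+1], \]
which I call $(\dagger)$.

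The main obstacle is $(\dagger)$. Its $\Sproj_0$-component is again the truncation law, since $\Sproj_0\comp\Snode[n+1] = \Snm[n+1][n]$ and $\Sproj_0\comp\T h = h\comp\Sproj_0$. Its $\Sproj_1$-component is the genuinely combinatorial part: using $\Sproj_1\comp\T h = \d h$, the commutation of $\d$ with projections (\cref{prop:D-pairing}), the chain rule on the $\d$-linear maps $\Snm[n][j]$, and the inductive definition $\tn[j+1]f = \d\tn[j]f\comp\Snode[j+1]$ of \cref{eq:tn-inductive}, both sides rewrite into expressions of the form $\d\tn[j]f\comp(-)$, and matching them amounts to an identity between the scaled projections $\tfrac{k}{n+1}\Sproj_k$ appearing in $\Snode$ and the corresponding arguments fed into each $\d\tn[j]f$. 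This is exactly the place where the specific coefficients of $\Snode$ (and hence of $\Stree$) are forced, and where the $\semiring$-additivity of the derivative in its second argument (axiom \ref{def:cdc-left-additive}) is used to absorb those scalars. Once $(\dagger)$ is checked, the induction closes, $(\star)$ follows for all $i$, and with it the functoriality of both $\Tn$ and $\Tnbis$.
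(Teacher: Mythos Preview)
Your proposal is correct and follows essentially the same route as the paper. What you call the truncation law, $(\dagger)$, and $(\star)$ are exactly the paper's \cref{prop:Snm-natural}, \cref{prop:Snode-natural}, and \cref{prop:Stree-natural}, proved in the same order and by the same means; the combinatorial check you sketch for the $\Sproj_1$-component of $(\dagger)$ is precisely the identity $\S\Snm[n][j]\compl\Snode[n+1]=\Sscale{\frac{j+1}{n+1}}\compl\Snode[j+1]\compl\Snm[n+1][j+1]$ that the paper isolates. The only cosmetic difference is the final step: you project onto each $\Sproj_i$ and invoke $(\star)$ at level $i$, whereas the paper applies $(\star)$ once at level $n$ and then uses that $\Stree$ is monic (\cref{prop:Stree-monic}) to cancel it---but since your inductive proof of $(\star)$ already establishes it at every level, the two packagings are equivalent.
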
 

The proof consists in two steps. First, we show that 
$\Snode$, $\Snodebis$, $\Stree$, $\Streebis$ and $\Snm$ 
satisfy equations that will eventually turn them 
(once we have shown the functoriality of $\Tn$ and $\Tnbis$)  
into natural transformations in $\cat$: \begin{enumerate}
    \item $\Snm : \Tn \naturalTrans \Tn[m]$ and 
    $\Snm : \Tnbis \naturalTrans \Tnbis[m]$.
    \item $\Snode : \Tn \naturalTrans \T \Tn[n-1]$ and 
$\Snodebis : \Tnbis \naturalTrans \T \Tnbis[n-1]$;
    \item $\Stree : \Tn \naturalTrans \T^n$ and
$\Streebis : \Tnbis \naturalTrans \T^n$;
\end{enumerate}
Then, we use those commutations and the functoriality of $\T^n$
to prove the functoriality of $\Tn$ and $\Tnbis$.

\begin{proposition} \label{prop:Snm-natural}
    The following diagrams commute.
    \[ \begin{tikzcd}[row sep = small]
        {\Sn X} & {\Sn[m] X} \\
        {\Sn Y} & {\Sn[m] Y}
        \arrow["\Snm", from=1-1, to=1-2]
        \arrow["{\Tn f}"', from=1-1, to=2-1]
        \arrow["{\Tn[m] f}", from=1-2, to=2-2]
        \arrow["\Snm"', from=2-1, to=2-2]
    \end{tikzcd}  \quad 
    \begin{tikzcd}[row sep = small]
        {\Sn X} & {\Sn[m] X} \\
        {\Sn Y} & {\Sn[m] Y}
        \arrow["\Snm", from=1-1, to=1-2]
        \arrow["{\Tnbis f}"', from=1-1, to=2-1]
        \arrow["{\Tnbis[m] f}", from=1-2, to=2-2]
        \arrow["\Snm"', from=2-1, to=2-2]
    \end{tikzcd}\] 
\end{proposition}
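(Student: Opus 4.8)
The plan is to verify each square componentwise, by post-composing with the projections $\Sproj_i$ of $\Sn[m] Y$ for $i \in \interval{0}{m}$ and invoking the defining equation \cref{eq:Tn-definition}. Since a morphism into $\Sn[m] Y$ is determined by its components under $\Sproj_0, \ldots, \Sproj_m$, this reduces the commutation of each diagram to a family of identities indexed by $i$.

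First I would record how the truncation maps act on projections. By definition $\Snm = \prodtuple{\proj_0, \ldots, \proj_m}$, so $\Sproj_i \comp \Snm = \Sproj_i$ for every $i \le m$. I would then note the transitivity of truncation, namely $\Snm[m][i] \comp \Snm[n][m] = \Snm[n][i]$ whenever $i \le m \le n$. This follows at once by testing both sides against the projections $\Sproj_j$ with $j \le i$: on the left one peels off $\Snm[m][i]$ and then $\Snm[n][m]$, each leaving $\Sproj_j$ untouched, and on the right $\Sproj_j \comp \Snm[n][i] = \Sproj_j$ as well.

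With these two facts the computation is immediate. On the one hand, $\Sproj_i \comp \Snm \comp \Tn f = \Sproj_i \comp \Tn f = \tn[i] f \comp \Snm[n][i]$, using $\Sproj_i \comp \Snm = \Sproj_i$ followed by \cref{eq:Tn-definition}. On the other hand, $\Sproj_i \comp \Tn[m] f \comp \Snm = \tn[i] f \comp \Snm[m][i] \comp \Snm[n][m] = \tn[i] f \comp \Snm[n][i]$, using \cref{eq:Tn-definition} for $\Tn[m]$ and then transitivity of truncation. The two sides agree for every $i \le m$, which establishes the left square. The right square is proved identically, replacing $\tn$ by $\tnbis$ and $\Tn$ by $\Tnbis$ throughout, since the defining equation for $\Tnbis f$ has exactly the same shape.

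I do not expect any genuine obstacle here: the result is forced once the componentwise definition of $\Tn$ is combined with the fact that the family $(\Snm)$ is functorial in the truncation order. The only point needing a little care is the index bookkeeping $i \le m \le n$ and the direction of each $\Snm$, so that the composite $\Snm[m][i] \comp \Snm[n][m]$ is well-typed and collapses to $\Snm[n][i]$.
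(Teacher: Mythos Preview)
Your proposal is correct and follows essentially the same route as the paper: both first record the transitivity $\Snm[m][i] \comp \Snm[n][m] = \Snm[n][i]$, then compare the two composites componentwise via the projections $\Sproj_i$ using the defining equation \cref{eq:Tn-definition}, and finally remark that the $\Tnbis$ case is identical.
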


\begin{proof}
First, observe that for any $k \leq m \leq n$, 
$\Snm[m][k] \comp \Snm[n][m] = \Snm[n][k]$. 
Thus, for any $k \in \interval{0}{m}$,
\begin{multline*}
    \Sproj_k \comp \Tn[m] f \comp \Snm 
   = \tn[k] f \comp \Snm[m][k] \comp \Snm 
   = \tn[k] f \comp \Snm[n][k] \\
   = \Sproj_k \comp \Tn f  
   = \Sproj_k \comp \Snm \comp \Tn f
\end{multline*}
so we conclude that $\Tn[m] f \comp \Snm 
=  \Snm \comp \Tn f$. The proof that 
$\Tnbis[m] f \comp \Snm 
=  \Snm \comp \Tnbis f$ is the same.
\end{proof}

\begin{proposition} \label{prop:Snode-natural}
    The following diagrams commute.
    \[ \begin{tikzcd}[row sep = small]
        {\Sn X} & {\Sn[n-1] X} \\
        {\Sn Y} & {\Sn[n-1] Y}
        \arrow["\Snode", from=1-1, to=1-2]
        \arrow["{\Tn f}"', from=1-1, to=2-1]
        \arrow["{\T \Tn[n-1] f}", from=1-2, to=2-2]
        \arrow["\Snode"', from=2-1, to=2-2]
    \end{tikzcd} \quad 
    \begin{tikzcd}[row sep = small]
        {\Sn X} & {\Sn[n-1] X} \\
        {\Sn Y} & {\Sn[n-1] Y}
        \arrow["\Snodebis", from=1-1, to=1-2]
        \arrow["{\Tnbis f}"', from=1-1, to=2-1]
        \arrow["{\T \Tnbis[n-1] f}", from=1-2, to=2-2]
        \arrow["\Snodebis"', from=2-1, to=2-2]
    \end{tikzcd} \] 
    \end{proposition}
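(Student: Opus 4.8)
The plan is to verify each square componentwise in the codomain $\S \Sn[n-1] Y$, treating the outer $\S$-structure first and then its inner $\Sn[n-1]$-coordinates. Throughout I rely on $\T g = \prodPair{g \comp \Sproj_0}{\d g}$ from \cref{eq:tangent-bundle-cdc}, which splits $\T \Tn[n-1] f$ into a base-point component $\Tn[n-1] f \comp \Sproj_0$ and a direction component $\d(\Tn[n-1] f)$. I carry out the argument for $\Snode$ and $\Tn$; the case of $\Snodebis$ and $\Tnbis$ is an easier version of the same computation.

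For the $0$-th $\S$-component (the base point), the first $\S$-component of $\Snode$ is by definition the truncation $\Snm[n][n-1] = \prodtuple{\Sproj_0, \ldots, \Sproj_{n-1}}$. Hence $\Snode \comp \Tn f$ projects to $\Snm[n][n-1] \comp \Tn f$, while $\T \Tn[n-1] f \comp \Snode$ projects to $\Tn[n-1] f \comp \Sproj_0 \comp \Snode = \Tn[n-1] f \comp \Snm[n][n-1]$. These coincide by \cref{prop:Snm-natural}, so this component is immediate.

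The content lies in the $1$-st $\S$-component, which I check by further projecting onto each inner coordinate $\Sproj_k$, $k \in \interval{0}{n-1}$. There the second $\S$-component of $\Snode$ carries the weight $\frac{k+1}{n}$, so $\Snode \comp \Tn f$ projects to $\frac{k+1}{n}\Sproj_{k+1} \comp \Tn f = \frac{k+1}{n}\, \tn[k+1] f \comp \Snm[n][k+1]$, which by the inductive definition \cref{eq:tn-inductive} equals $\frac{k+1}{n}\, \d \tn[k] f \comp \Snode[k+1] \comp \Snm[n][k+1]$. On the other side, $\d$-linearity of the projection $\Sproj_k$ lets me pull it inside the derivative via \cref{prop:D-pairing}, and then \ref{def:cdc-chain} together with the $\d$-linearity of the truncation $\Snm[n-1][k]$ rewrites the contribution of $\T \Tn[n-1] f \comp \Snode$ as $\d \tn[k] f \comp \S \Snm[n-1][k] \comp \Snode[n]$. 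Everything thus reduces to comparing the two purely structural morphisms $\Snode[k+1] \comp \Snm[n][k+1]$ and $\S \Snm[n-1][k] \comp \Snode[n]$ after post-composition with the derivative $\d \tn[k] f$.

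This last comparison is the crux and the main obstacle. Unfolding the definitions shows that the two structural morphisms have the same $0$-th $\S$-component, but that their $1$-st $\S$-components disagree: since $\Snode[k+1]$ carries the coefficients $\frac{j}{k+1}$ on $\Sproj_j$ while $\Snode[n]$ carries $\frac{j}{n}$, the first morphism equals $\frac{n}{k+1}$ times the second on that component. This scalar mismatch is exactly absorbed by \ref{def:cdc-left-additive}: the additivity of $\d \tn[k] f$ in its second argument turns the factor $\frac{n}{k+1}$ on the direction into an overall factor $\frac{n}{k+1}$, which the prefactor $\frac{k+1}{n}$ cancels. This yields equality of the $1$-st components and closes the square. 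For $\Snodebis$ the same scheme applies but is simpler: $\Snodebis[k+1]$ and $\Snodebis[n]$ carry no weights, the two structural morphisms are literally equal, and no appeal to \ref{def:cdc-left-additive} is needed.
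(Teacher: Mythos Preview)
Your proposal is correct and follows essentially the same approach as the paper's proof: both split along the outer $\S$-projections, dispatch the $\Sproj_0$-component via \cref{prop:Snm-natural}, and for the $\Sproj_1$-component reduce (via \cref{prop:D-pairing}, \ref{def:cdc-chain}, and $\d$-linearity of the truncations) to the structural identity $\S \Snm[n-1][k] \compl \Snode = \Sscale{\frac{k+1}{n}} \compl \Snode[k+1] \compl \Snm[n][k+1]$, then absorb the scalar through \ref{def:cdc-left-additive}. The only cosmetic difference is that the paper packages the scalar mismatch using $\Sscale{\frac{k+1}{n}}$ whereas you phrase it directly as a factor on the direction component.
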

    \begin{proof}
    We prove the commutation on $\Snode$. Observe that
    \begin{align*} \Sproj_0 \comp \T \Tn[n-1] f \comp \Snode 
    &= \Tn[n-1] f \comp \Sproj_0 \comp \Snode 
    = \Tn[n-1] f \comp \Snm[n][n-1] \\
    \Sproj_0 \comp \Snode \comp \Tn f &= \Snm[n][n-1] \comp \Tn f \end{align*}
    so $\Sproj_0 \comp \T \Tn[n-1] f \comp \Snode = \Sproj_0 \comp \Snode \comp \Tn f$
    by \cref{prop:Snm-natural}. Then,
    \begin{align*} \Sproj_1 \comp \T \Tn[n-1] f \comp \Snode 
    &= \d \prodtuple{\tn[i] f \comp \Snm[n-1][i]}_{i=0}^{n-1} \comp \Snode \\
    &= \prodtuple{\d (\tn[i] f \comp \Snm[n-1][i])}_{i=0}^{n-1} \comp \Snode \tag*{\cref{prop:D-pairing}}\\
    &= \prodtuple{\d \tn[i] f \comp \T \Snm[n-1][i] \comp \Snode}_{i=0}^{n-1} \tag*{\ref{def:cdc-chain}}\\
    &= \prodtuple{\d \tn[i] f \comp \S \Snm[n-1][i] \comp \Snode}_{i=0}^{n-1}
    \end{align*}
    by $\d$-linearity of $\Snm[n-1][i]$.
    But, for any $i \neq 0$, \begin{align*}
    \S \Snm[n-1][i] \Snode &= \prodtuple{\Snm[n-1][i]
    \compl \Snm[n][n-1], \ \Snm[n-1][i] 
    \compl \prodtuple{\frac{1}{n} \Sproj_1, \ldots, \frac{n}{n} \Sproj_n} }\\\ 
    &=\prodtuple{\Snm[n][i], \prodtuple{\frac{1}{n} \Sproj_1, \ldots, \frac{i+1}{n} \Sproj_{i+1}}} \\
    &= \prodtuple{\Snm[n][i], \frac{i+1}{n} \prodtuple{\frac{1}{i+1} \Sproj_1, \ldots, 
    \frac{i+1}{i+1} \Sproj_{i+1}}}
    \end{align*}
    Thus, $\S \Snm[n-1][i] \compl \Snode = \Sscale{\frac{i+1}{n}} \compl \Snode[i+1]
    \compl \Snm[n][i+1]$.
    But then, 
    \begin{align*} 
        \d  \tn[i] f \comp \S \Snm[n-1][i] \comp \Snode 
    &= \d \tn[i] f \comp \Sscale{\frac{i+1}{n}} \comp \Snode[i+1] \comp \Snm[n][i+1] \\
    &= \frac{i+1}{n} \d \tn[i] f \comp \Snode[i+1] \comp \Snm[n][i+1] \tag*{\ref{def:cdc-left-additive}} \\
    &= \frac{i+1}{n} \tn[i+1] f \comp \Snm[n][i+1]
    \end{align*}
    and it follows that $\Sproj_1 \comp \T \Tn[n-1] f \comp \Snode 
    = \Sproj_1 \comp \Snode \comp \Tn f$.
    This concludes the proof for $\Snode$. The proof for $\Snodebis$ is similar.
    \end{proof}
    
    \begin{proposition} \label{prop:Stree-natural}
        The following diagrams commute.
        \[ \begin{tikzcd}[row sep = small]
            {\Sn X} & {\S^n X} \\
            {\Sn Y} & {\S^n}
            \arrow["\Stree", from=1-1, to=1-2]
            \arrow["{\Tn f}"', from=1-1, to=2-1]
            \arrow["{\T^n f}", from=1-2, to=2-2]
            \arrow["\Stree"', from=2-1, to=2-2]
        \end{tikzcd} \quad 
        \begin{tikzcd}[row sep = small]
            {\Sn X} & {\S^n X} \\
            {\Sn Y} & {\S^n}
            \arrow["\Streebis", from=1-1, to=1-2]
            \arrow["{\Tnbis f}"', from=1-1, to=2-1]
            \arrow["{\T^n f}", from=1-2, to=2-2]
            \arrow["\Streebis"', from=2-1, to=2-2]
        \end{tikzcd} \]
    \end{proposition}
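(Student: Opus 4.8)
The plan is to establish the left-hand square, namely $\T^n f \comp \Stree = \Stree \comp \Tn f$, by induction on $n$; the right-hand square for $\Streebis$ follows by the identical argument, replacing $\Snode$, $\Stree$, $\Tn$ by $\Snodebis$, $\Streebis$, $\Tnbis$ throughout and invoking the $\Snodebis$ half of \cref{prop:Snode-natural}. The base case $n = 0$ is immediate: $\Stree[0] = \id$, $\Tn[0] f = f$ and $\T^0 f = f$, so both composites equal $f$.

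For the inductive step I would assume $\T^n f \comp \Stree[n] = \Stree[n] \comp \Tn[n] f$ and compute. Unfolding the defining equation $\Stree[n+1] = \S \Stree[n] \comp \Snode[n+1]$ of \cref{eq:Stree} gives $\T^{n+1} f \comp \Stree[n+1] = \T(\T^n f) \comp \S \Stree[n] \comp \Snode[n+1]$. The key observation is that $\Stree[n]$ lives in $\catLin$, hence is $\d$-linear, so by \cref{prop:T-linear} we may replace $\S \Stree[n]$ by $\T \Stree[n]$. This exposes the two factors $\T(\T^n f) \comp \T \Stree[n]$, which collapse by functoriality of $\T$ (\cref{prop:tangent-bundle-functor}) into $\T(\T^n f \comp \Stree[n])$. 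Applying the inductive hypothesis inside this $\T$ and splitting again by functoriality yields $\T \Stree[n] \comp \T \Tn[n] f \comp \Snode[n+1]$.

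At this point \cref{prop:Snode-natural}, with $n$ replaced by $n+1$, supplies exactly the commutation $\T \Tn[n] f \comp \Snode[n+1] = \Snode[n+1] \comp \Tn[n+1] f$, turning the expression into $\T \Stree[n] \comp \Snode[n+1] \comp \Tn[n+1] f$. Using $\d$-linearity of $\Stree[n]$ once more to rewrite $\T \Stree[n]$ back as $\S \Stree[n]$, and folding the definition of $\Stree[n+1]$, the whole composite becomes $\Stree[n+1] \comp \Tn[n+1] f$, which is the desired right-hand side. The argument is mechanical once \cref{prop:Snode-natural} is available; the only point requiring care is the repeated passage between $\S \Stree[n]$ and $\T \Stree[n]$ via $\d$-linearity, since this is precisely what licenses the use of functoriality of $\T$ to glue $\T^{n+1} f$ onto $\T^n f \comp \Stree[n]$. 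The $\Streebis$ case introduces no extra difficulty.
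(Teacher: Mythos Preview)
Your proof is correct and follows exactly the same route as the paper: induction on $n$, using $\d$-linearity of $\Stree[n]$ to rewrite $\S\Stree[n]$ as $\T\Stree[n]$, then combining functoriality of $\T$, the inductive hypothesis, and \cref{prop:Snode-natural}. The paper merely packages the inductive step as a two-square diagram chase rather than a chain of equalities, but the content is identical.
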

    
    \begin{proof} By induction. The equation holds trivially for $\Stree[0]
        = \Streebis[0] = \id$. Now, observe that $\Stree[n+1] = 
        \S \Stree \compl \Snode[n+1] = \T \Stree \comp \Snode[n+1]$
        by $\d$-linearity of $\Stree$. We conclude that 
        the commutation on $\Stree[n+1]$ holds by a straightforward 
        diagram chase.
        \[ 
        \begin{tikzcd}[row sep = small]
            {\Sn[n+1] X} & {\S \Sn X} & {\S \S^n X} \\
            {\Sn[n+1] Y} & {\S \Sn Y} & {\S \S^n Y}
            \arrow["{\Snode[n+1]}", from=1-1, to=1-2]
            \arrow["{\Tn[n+1] f}"', from=1-1, to=2-1]
            \arrow["{\cref{prop:Snode-natural}}"{description}, draw=none, from=1-1, to=2-2]
            \arrow["{\T \Stree}", from=1-2, to=1-3]
            \arrow["\T \Tn f", from=1-2, to=2-2]
            \arrow["{\text{I.H.}}"{description}, draw=none, from=1-2, to=2-3]
            \arrow["{\T \T^n f}", from=1-3, to=2-3]
            \arrow["{\Snode[n+1]}"', from=2-1, to=2-2]
            \arrow["{\T \Stree}"', from=2-2, to=2-3]
        \end{tikzcd} \]
        The proof for $\Streebis$ is the same.
    \end{proof}

    \begin{proposition} \label{prop:Stree-monic}
    $\Stree$ and $\Streebis$ are monic: if $f, g \in \cat(X, \Sn Y)$ satisfy 
    $\Stree \comp g = \Stree \comp f$ or $\Streebis \comp g = \Streebis \comp f$ 
    then $g = f$.
    \end{proposition}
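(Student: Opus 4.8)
The plan is to show that both $\Stree$ and $\Streebis$ are \emph{split} monic by exhibiting a single left inverse (retraction) $r$ satisfying $r \compl \Stree = \id$ and $r \compl \Streebis = \id$; since a morphism admitting a left inverse is automatically monic, this settles the proposition. I will work entirely from the explicit formulas of \cref{prop:Stree-explicit}, so that no induction or diagram chase is needed.

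The key observation is a careful choice of witnessing word. For each $i \in \interval{0}{n}$, let $w^{(i)} = 1^i 0^{n-i}$ be the length-$n$ word consisting of $i$ ones followed by $n-i$ zeros. Then $\weight{w^{(i)}} = i$, and viewing $w^{(i)}$ as the set $\{1, \ldots, i\}$ gives $\factorial{w^{(i)}} = \factorial{i}$. Substituting into \cref{prop:Stree-explicit}, the scalar coefficient collapses to the unit: $\Sproj_{w^{(i)}} \compl \Stree = \frac{\factorial{\weight{w^{(i)}}}}{\factorial{w^{(i)}}} \Sproj_{\weight{w^{(i)}}} = \Sproj_i$, and likewise $\Sproj_{w^{(i)}} \compl \Streebis = \Sproj_{\weight{w^{(i)}}} = \Sproj_i$. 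Thus the single component $\Sproj_{w^{(i)}}$ of $\S^n$ recovers the $i$-th component $\Sproj_i$ of $\Sn$ through both $\Stree$ and $\Streebis$.

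With this in hand I will set $r = \prodtuple{\Sproj_{w^{(0)}}, \ldots, \Sproj_{w^{(n)}}} \in \catLin(\S^n Y, \Sn Y)$, which is $\d$-linear since each $\Sproj_{w^{(i)}}$ is a composite of linear projections and $\catLin$ is cartesian. A componentwise check via the universal property of the product then gives $\Sproj_i \compl r \compl \Stree = \Sproj_{w^{(i)}} \compl \Stree = \Sproj_i$ for every $i \in \interval{0}{n}$, hence $r \compl \Stree = \id_{\Sn Y}$; the same computation yields $r \compl \Streebis = \id_{\Sn Y}$. Consequently, if $\Stree \comp f = \Stree \comp g$ in $\cat$, composing on the left with $r$ yields $f = g$ (the composition in $\catLin$ agreeing with that of $\cat$), and symmetrically for $\Streebis$.

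The proof has essentially no obstacle once the witnessing word $w^{(i)} = 1^i 0^{n-i}$ is identified: the only subtlety is checking that its coefficient $\frac{\factorial{i}}{\factorial{i}}$ is genuinely $1_{\rig}$, which holds by the defining property of the multiplicative inverses assumed throughout this section. A naive choice of an arbitrary word of weight $i$ would leave a nontrivial scalar that one would then have to invert in order to build the retraction; choosing the ordered word $1^i 0^{n-i}$ is precisely what makes the retraction coefficient-free and lets a single $r$ serve for both $\Stree$ and $\Streebis$.
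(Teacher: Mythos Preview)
Your proof is correct. Both arguments ultimately rest on \cref{prop:Stree-explicit}, but the routes differ in a way worth noting. The paper picks an \emph{arbitrary} word $w$ of weight $i$, obtains the equation $\frac{\factorial{\weight w}}{\factorial w}\,\Sproj_i\comp f = \frac{\factorial{\weight w}}{\factorial w}\,\Sproj_i\comp g$, and then cancels the scalar using that $\rig$ has multiplicative inverses for integers. Your choice of the initial-segment word $w^{(i)}=1^i0^{n-i}$ forces the scalar to be $1_\rig$ outright, which lets you package all the projections into a single explicit retraction $r=\prodtuple{\Sproj_{w^{(0)}},\ldots,\Sproj_{w^{(n)}}}$ and conclude the stronger fact that $\Stree$ and $\Streebis$ are \emph{split} monic, with one common left inverse. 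This is a genuine (if small) improvement: it upgrades ``monic'' to ``split monic'', avoids the separate cancellation step, and makes transparent why the same $r$ serves for both $\Stree$ and $\Streebis$. The paper's version, on the other hand, makes clear that \emph{any} weight-$i$ component already determines $\Sproj_i\comp f$ up to an invertible scalar, which is conceptually informative even if not needed for the bare monicity claim.
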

    
    \begin{proof}
    Assume that $\Stree g = \Stree f$. Let $i \in \interval{0}{n}$, and let 
    $w$ be any word such that $\weight{w} = i$.
    Then $\Sproj_w \comp \Stree \comp g = \Sproj_w \comp \Stree \comp f$.
    That is, $\frac{\factorial{\weight w}}{\factorial{w}} \Sproj_i \comp g 
    = \frac{\factorial{\weight w}}{\factorial{w}} \Sproj_i \comp f$ by 
    \cref{prop:Stree-explicit}.
    Thus, $\Sproj_i \comp f = \Sproj_i \comp g$ for all $i$ ($\rig$ has 
    multiplicative inverses for integers), so $f = g$. 
    The proof that $\Streebis$ is monic is similar.  
    \end{proof}

    \begin{proof}[Proof of \cref{thm:Tn-functor}]
        Let $f \in \cat(X, Y)$ and $g \in \cat(Y, Z)$.
        \begin{align*}
        \Stree \comp \Tn (g \comp f) &= \T^n(g \comp f) \comp \Stree 
        \tag*{by \cref{prop:Stree-natural}} \\
        &= \T^n g \comp \T^n f \comp \Stree \tag*{by functoriality of $\T^n$} \\
        &= \T^n g \comp \Stree \comp \Tn f  \tag*{by \cref{prop:Stree-natural}} \\
        &= \Stree \comp \Tn g \comp \Tn f \tag*{by \cref{prop:Stree-natural}}
        \end{align*}
        so $\Tn (g \comp f) = \Tn g \comp \Tn f$ by monicity of $\Stree$.
        We prove that $\Tn \id_X = \id_{\Tn X}$ in a similar way, 
        using that $\T^n \id_X = \id_{\T^n X}$.
        The proof that $\Tnbis$ is a functor is similar.
    \end{proof}

    \begin{remark} \label{rem:tn-functor}
    It immediately follows from~\cref{thm:Tn-functor} that for all 
        $f \in \cat(X, Y)$ and $g \in \cat(Y, Z)$, 
        \begin{equation} \label{eq:tn-composition}
            \tn(g \comp f) = \tn g \comp \Tn f, \quad 
            \tnbis(g \comp f) = \tnbis g \comp \Tnbis f 
        \end{equation}
    This equation is a clear counterpart of Theorem 3 of \cite{Huang06}.
    \end{remark}

\begin{proposition} \label{prop:Tn-linear}
    A morphism $f \in \cat(X, Y)$ is $\d$-linear if and only if 
    $\Tn f = \Sn f$.
    \end{proposition}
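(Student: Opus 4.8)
The plan is to transport the characterization of $\d$-linearity for $\T$ (\cref{prop:T-linear}: $f$ is $\d$-linear iff $\T f = \S f$) along the comparison maps $\Stree$, using that $\Stree$ is natural (\cref{prop:Stree-natural}) and monic (\cref{prop:Stree-monic}). I treat the case $n \geq 1$; for $n = 0$ both $\Tn[0]$ and $\Sn[0]$ are the identity functor, so the statement is only meaningful for $n \geq 1$.

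For the forward direction, assume $f$ is $\d$-linear. Then $\T f = \S f$ by \cref{prop:T-linear}, hence $\T^n f = \S^n f$. Since $f$ is $\d$-linear it is a morphism of $\catLin$, so naturality of $\Stree : \Sn \naturalTrans \S^n$ in $\catLin$ gives $\Stree \comp \Sn f = \S^n f \comp \Stree$. Combined with \cref{prop:Stree-natural} this yields
\begin{align*}
\Stree \comp \Tn f &= \T^n f \comp \Stree \\
&= \S^n f \comp \Stree = \Stree \comp \Sn f ,
\end{align*}
and monicity of $\Stree$ (\cref{prop:Stree-monic}) forces $\Tn f = \Sn f$.

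For the converse, assume $\Tn f = \Sn f$. One cannot simply reverse the forward argument, because $\Stree$ is monic but not epic: the identity $\T^n f \comp \Stree = \S^n f \comp \Stree$ does not by itself give $\T^n f = \S^n f$. Instead I read off the component of index $1$. By \cref{eq:Tn-definition} and $\tn[1] f = \d f$ we have $\Sproj_1 \comp \Tn f = \d f \comp \Snm[n][1]$, while $\Sproj_1 \comp \Sn f = f \comp \Sproj_1$; the hypothesis therefore gives $\d f \comp \Snm[n][1] = f \comp \Sproj_1$ in $\cat(\Sn X, Y)$. The truncation $\Snm[n][1] : \Sn X \arrow \S X$ is split epic, with section $s = \prodtuple{\Sproj_0, \Sproj_1, 0^{\times(n-1)}} : \S X \arrow \Sn X$ satisfying $\Snm[n][1] \comp s = \id$ and $\Sproj_1 \comp s = \Sproj_1$. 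Precomposing the previous equation with $s$ gives $\d f = f \comp \Sproj_1$, i.e. $f$ is $\d$-linear.

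I expect the converse to be the delicate step: the clean $\Stree$-transfer works only one way (monic, not epic), so recovering full $\d$-linearity requires descending to the degree-$1$ component and inverting the truncation $\Snm[n][1]$ by an explicit section, which is also where $n \geq 1$ is used. If the analogous statement for $\Tnbis$ is wanted, the same argument applies verbatim with $\Streebis$ and $\tnbis[1] f = \d f$.
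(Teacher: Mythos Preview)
Your proof is correct. The converse direction is essentially identical to the paper's: both read off the degree-$1$ component via \cref{eq:Tn-definition} and invert the split epi $\Snm[n][1]$, though the paper leaves the section implicit (``it immediately follows'') whereas you spell it out.

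The forward direction differs. The paper proceeds by direct computation: from $\d$-linearity it shows $\d^n f = f \comp \Sproj_{\oneword}$ by a one-line induction, whence $\tn[i] f = \d^i f \comp \Stree[i] = f \comp \Sproj_{\oneword} \comp \Stree[i] = f \comp \Sproj_i$ by \cref{prop:Stree-explicit}, and concludes $\Sproj_i \comp \Tn f = f \comp \Sproj_i = \Sproj_i \comp \Sn f$ for every $i$. Your route instead transports along $\Stree$: from $\T^n f = \S^n f$ and the two naturalities of $\Stree$ (in $\catLin$ and with respect to $\Tn$/$\T^n$), monicity of $\Stree$ gives the result in one stroke. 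Your argument is slicker and matches the ambient proof pattern of the section (the same $\Stree$-transfer used for functoriality of $\Tn$), at the cost of invoking \cref{prop:Stree-natural,prop:Stree-monic}; the paper's argument is more self-contained and yields the intermediate fact $\tn[i] f = f \comp \Sproj_i$ explicitly. Your observation about $n=0$ is also apt; the paper leaves this implicit.
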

    
    \begin{proof}
    If $\Tn f = \Sn f$ then by \cref{eq:Tn-definition},  
    $\tn[1] f \comp \Snm[n][1] = f \comp \Sproj_1$.
    It immediately follows that $\d f = \tn[1] f = f \comp \Sproj_1$,
    so $f$ is $\d$-linear. 
    Conversely, if $f$ is $\d$-linear, then 
    $\d^n f = f \comp \Sproj_{\oneword}$ (by a straightforward induction).
    Thus, 
    \[ \tn f = \d^n f \comp \Stree = f \comp \Sproj_{\oneword} \comp \Stree 
    = f \comp \Sproj_n \]  
    and by \cref{eq:Tn-definition}, 
    $\Sproj_i \comp \Tn f = f \comp \Sproj_i
    = \Sproj_i \comp \Sn f$ so $\Tn f = \Sn f$.
\end{proof}

\section{The monadic structure of Taylor expansion}

\label{sec:taylor-expansion-monad}

We have proved the existence of a functor 
$\Tn$ that performs an order $n$ Taylor expansion 
in a compositional way. We now prove that $\Tn$ has a monad 
structure similar to the monad structure of $\T$.
Define natural transformations 
$\Sninjz : \idfun \naturalTrans \Sn$, $\SnmonadSum : \Sn^2 \naturalTrans \Sn$, 
$\Snmswap : \Sn \Sn[m] \naturalTrans \Sn[m] \Sn$, $\Snlift : \Sn \naturalTrans \Sn^2$ and 
$\Snscale{r} : \Sn \naturalTrans \Sn$ (for all $r \in \rig$) in $\catAdd$ and $\catLin$ 
as the tuplings characterized by the following equations:
\begin{equation} \label{eq:Tn-natural}
\begin{aligned}
    \Sproj_k \compl \Sninjz =& \kronecker k 0 \id &
    \Sproj_k \compl \Snscale r =& r^k \Sproj_k &  \\
    \Sproj_i \compl \Sproj_j \Snmswap =& \Sproj_j \compl \Sproj_i &
    \Sproj_i \compl \Sproj_j \Slift =& \kronecker i j \Sproj_i \\
\end{aligned}
\Sproj_k \compl \SnmonadSum = \! \! \sum_{i+j = k} \! \! \Sproj_i \compl \Sproj_j
\end{equation}
These natural transformations are the counterpart of 
the natural transformations defined in \cref{eq:cdc-natural-equations}.

\subsection{The natural transformations of Taylor expansion}

The goal of this section is to prove that $\Sninjz$, $\SnmonadSum$, 
$\Snmswap$, $\Snlift$ and 
$\Snscale{r}$ are also natural transformations on $\Tn$.
As mentioned in \cref{sec:taylor-expansion}, 
we avoid heavy combinatorics by making use of the canonical 
monad structure of $\T^n$. This proof is similar to the 
proof of functoriality. First, we show in \cref{prop:Stree-monad-morphism} 
that $\Stree : \Tn \naturalTrans \T^n$ 
satisfies equations that will eventually turn $\Stree$ 
(once we have shown that $\Tn$ is a monad) into a monad morphism~\cite{Maranda66} (also 
called monad transformer~\cite{Liang95}) from $\Tn$ to $\T^n$
that sends the distributive law $\Snmswap$ to $\Sswapnm$.
Then, we use these commutations and the naturality equations on 
$\T^n$ to prove the naturality equations
on $\Tn$.

\begin{proposition} \label{prop:Stree-monad-morphism}
The following diagrams commute.
\[
\begin{tikzcd}[row sep = small]
	\idfun & \idfun \\
	\Sn & {\S^n}
	\arrow[Rightarrow, no head, from=1-1, to=1-2]
	\arrow["{\Sninjz}"', from=1-1, to=2-1]
	\arrow["{\Sinjzn}", from=1-2, to=2-2]
	\arrow["\Stree"', from=2-1, to=2-2]
\end{tikzcd} 
\begin{tikzcd}[row sep = small]
	{\Sn \Sn} & {\S^n \S^n} \\
	\Sn & {\S^n}
	\arrow["{\Stree \hcomp \Stree}", from=1-1, to=1-2]
	\arrow["\SnmonadSum"', from=1-1, to=2-1]
	\arrow["\SmonadSumn", from=1-2, to=2-2]
	\arrow["\Stree"', from=2-1, to=2-2]
\end{tikzcd} \] 
\[ 
\begin{tikzcd}[row sep = small]
	\Sn & {\S^n} \\
	\Sn & {\S^n}
	\arrow["\Stree", from=1-1, to=1-2]
	\arrow["\Snscale r"', from=1-1, to=2-1]
	\arrow["\Sscalen r", from=1-2, to=2-2]
	\arrow["\Stree"', from=2-1, to=2-2]
\end{tikzcd} 
\begin{tikzcd}[row sep = small]
	{\Sn \Sn[m]} & {\S^n \S^m} \\
	{\Sn[m] \Sn} & {\S^m \S^n}
	\arrow["{\Stree \hcomp \Stree[m]}", from=1-1, to=1-2]
	\arrow["\Snmswap"', from=1-1, to=2-1]
	\arrow["{\Sswapnm}", from=1-2, to=2-2]
	\arrow["{\Stree[m] \hcomp \Stree}"', from=2-1, to=2-2]
\end{tikzcd}\]
The same diagrams commute when replacing $\Sn$ by $\Tn$.
\end{proposition}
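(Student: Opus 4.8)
The plan is to verify each of the four diagrams componentwise, by post-composing with the basis projections $\Sproj_w$ of $\S^n$ and reducing everything to scalar identities via the explicit description of $\Stree$ from \cref{prop:Stree-explicit}, namely $\Sproj_w \compl \Stree = \frac{\factorial{\weight{w}}}{\factorial{w}} \Sproj_{\weight{w}}$, together with the defining equations \cref{eq:iterated-monad} of $\Sinjzn, \SmonadSumn, \Sswapnm, \Sscalen{r}$ and \cref{eq:Tn-natural} of $\Sninjz, \SnmonadSum, \Snmswap, \Snscale{r}$. The one auxiliary computation I would isolate first is the projection of a horizontal composite: using the naturality of the $\S^n$-projections in $\catLin$ and two applications of \cref{prop:Stree-explicit} (one per functor layer), one gets
\[ \Sproj_u \compl \Sproj_v \comp (\Stree \hcomp \Stree) = \frac{\factorial{\weight{u}} \, \factorial{\weight{v}}}{\factorial{u} \, \factorial{v}} \; \Sproj_{\weight{u}} \compl \Sproj_{\weight{v}}, \]
and likewise for the mixed composite $\Stree \hcomp \Stree[m]$.

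With this in hand the unit and scaling diagrams are immediate: projecting $\Stree \comp \Sninjz$ by $\Sproj_w$ yields $\frac{\factorial{\weight{w}}}{\factorial{w}} \kronecker{\weight{w}}{0} \id$, nonzero only for $w = \zeroword$ where the coefficient is $1$, matching $\Sproj_w \compl \Sinjzn = \kronecker{w}{\zeroword} \id$; and both projections of the scaling diagram give $r^{\weight{w}} \frac{\factorial{\weight{w}}}{\factorial{w}} \Sproj_{\weight{w}}$. The multiplication diagram is the heart of the argument. Projecting the left-hand side $\Stree \comp \SnmonadSum$ by $\Sproj_w$ gives $\frac{\factorial{\weight{w}}}{\factorial{w}} \sum_{i+j = \weight{w}} \Sproj_i \compl \Sproj_j$, while the right-hand side $\SmonadSumn \comp (\Stree \hcomp \Stree)$, via the horizontal-composite formula above, gives $\sum_{(w^1,w^2) \in \opart{w}} \frac{\factorial{\weight{w^1}} \, \factorial{\weight{w^2}}}{\factorial{w^1} \, \factorial{w^2}} \Sproj_{\weight{w^1}} \compl \Sproj_{\weight{w^2}}$. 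Regrouping this sum by $(i,j) = (\weight{w^1},\weight{w^2})$ and using the factorization $\factorial{w^1} \, \factorial{w^2} = \factorial{w}$, valid for every split $w^1 + w^2 = w$, together with the count of splits of prescribed weights from \cref{fact:number-of-partitions}, the coefficient of $\Sproj_i \compl \Sproj_j$ becomes $\frac{\factorial{\weight{w}}}{\factorial{i} \, \factorial{j}} \cdot \frac{\factorial{i} \, \factorial{j}}{\factorial{w}} = \frac{\factorial{\weight{w}}}{\factorial{w}}$, so the two sides agree. This is exactly the identity that forces the coefficients $\frac{\factorial{\weight{w}}}{\factorial{w}}$ of $\Stree$ (and explains why the unweighted $\Streebis$ would fail it).

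The swap diagram is handled the same way but is combinatorially lighter: since $\Snmswap$ and $\Sswapnm$ merely transpose the two projection indices, while the scalar $\frac{\factorial{\weight{\cdot}}}{\factorial{\cdot}}$ that \cref{prop:Stree-explicit} attaches to each layer is local to that layer, both composites project to $\frac{\factorial{\weight{p}} \, \factorial{\weight{q}}}{\factorial{p} \, \factorial{q}} \Sproj_{\weight{p}} \compl \Sproj_{\weight{q}}$ after the transposition, hence coincide. I expect the only real obstacle to be the bookkeeping in the multiplication diagram: tracking which word indexes which functor layer under the horizontal composition, and fixing the convention for $\opart{w}$ so that the degenerate blocks (the $i = 0$ and $j = 0$ terms) are correctly counted.

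Finally, the claim that the same diagrams commute with $\Sn$ replaced by $\Tn$ requires no new computation. Every morphism in the four diagrams, namely $\Stree$, $\Sninjz$, $\SnmonadSum$, $\Snmswap$ and $\Snscale{r}$, is $\d$-linear, and by \cref{prop:T-linear} and \cref{prop:Tn-linear} the functors $\T, \T^n, \Tn$ agree with $\S, \S^n, \Sn$ on $\d$-linear maps; in particular the horizontal composite $\Stree \hcomp \Stree$ formed with $\T^n \Stree$ equals the one formed with $\S^n \Stree$. Since $\Tn X = \Sn X$ on objects and $\Stree : \Tn \naturalTrans \T^n$ is natural by \cref{prop:Stree-natural}, the $\Tn$-diagrams are the very same equalities of concrete morphisms as the $\Sn$-diagrams already established, and therefore hold.
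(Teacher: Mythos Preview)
Your proposal is correct and follows essentially the same approach as the paper's proof: componentwise verification via the projections $\Sproj_w$, using \cref{prop:Stree-explicit} to reduce to scalar identities, with the multiplication diagram closed by the multinomial count of \cref{fact:number-of-partitions}. Your write-up is in fact slightly more explicit than the paper's on two points (the factorization $\factorial{w^1}\,\factorial{w^2} = \factorial{w}$ for ordered splits of $w$, and the auxiliary formula for $\Sproj_u \compl \Sproj_v \comp (\Stree \hcomp \Stree)$), and your final remark that the $\Tn$-diagrams are literally the same equalities of morphisms by $\d$-linearity matches the paper's one-line justification via \cref{prop:Tn-linear}.
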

\begin{proof}
    We prove the diagram on
    $\Sninjz$/$\Sinjzn$ as follows.
\[ \Sproj_w \compl \Stree \compl \Sninjz 
= \frac{\factorial{\weight{w}}}{\factorial{w}} \Sproj_{\weight w} \compl \Sninjz
= \kronecker{\weight{w}}{0} \id = \kronecker{w}{\zeroword} = \Sproj_w \compl \Sinjzn \]
The proof on
$\Snscale r$/$\Sscalen r$ is similar.
We now prove the diagram on $\Snmswap$/$\Sswapnm$.
Observe that for any words $u, v$, 
$\Sproj_u \hcomp \Sproj_v = \Sproj_v \compl \Sproj_u$, so the 
projections of the two path are
\begin{align*} (\Sproj_u \hcomp \Sproj_v) \compl (\Stree[m] \hcomp \Stree) \compl \Snmswap 
&=  \frac{\factorial{\weight{u}} \ \factorial{\weight{v}}}{\factorial{u} \ \factorial{v}}
(\Sproj_{\weight u} \hcomp \Sproj_{\weight v}) \compl \Snmswap \\
&= \frac{\factorial{\weight{u}} \ \factorial{\weight{v}}}{\factorial{u} \ \factorial{v}}
(\Sproj_{\weight v} \hcomp \Sproj_{\weight u}) \\
    (\Sproj_u \hcomp \Sproj_v) \compl \Sswapnm \compl  (\Stree \hcomp \Stree[m])
&= (\Sproj_v \hcomp \Sproj_u) \compl (\Stree \hcomp \Stree[m]) \\
&= \frac{\factorial{\weight{u}} \ \factorial{\weight{v}}}{\factorial{u} \ \factorial{v}}
(\Sproj_{\weight v} \hcomp \Sproj_{\weight u})
\end{align*}
Finally, we prove the diagram involving $\SnmonadSum$ and $\SmonadSumn$.
\[ \Sproj_w \compl \Stree \compl \SnmonadSum 
= \frac{\factorial{\weight w}}{\factorial w} \compl \Sproj_{\weight w} \SnmonadSum 
= \frac{\factorial{\weight w}}{\factorial w} \compl \sum_{i_1 + i_2 = \weight w} 
\Sproj_{i_1} \compl \Sproj_{i_2} \] 
\begin{align*}
    \Sproj_w \compl \SmonadSumn \compl (\Stree \hcomp \Stree) 
    &= \hspace{-1em} \sum_{(w^1, w^2) \in \opart w}  \hspace{-1em} (\Sproj_{w^2} \hcomp \Sproj_{w^1}) 
    \compl (\Stree \hcomp \Stree) \\
    &= \hspace{-1em} \sum_{(w^1, w^2) \in \opart w} \hspace{-1em}
    \frac{\factorial{\weight{w^1}} \ \factorial{\weight{w^2}}}{\factorial{w}} 
    \Sproj_{\weight{w^1}} \compl \Sproj_{\weight{w^2}}
\end{align*}
We conclude that the two operands are equal, upon observing that there are 
exactly $\frac{\factorial{\weight{w}}}{\factorial{i_1} \ \factorial{i_2}}$
partitions $(w^1, w^2)$ of $w$ such that $\weight{w^j} = i_j$ by \cref{fact:number-of-partitions}.

Finally, the same diagrams commute when replacing $\Sn$ by $\Tn$, by 
$\d$-linearity of $\Stree$ and \cref{prop:Tn-linear}.
\end{proof}

\begin{theorem} \label{thm:naturality-Tn} 
The following families of morphisms are natural transformations in $\cat$: 
$\Sninjz : \idfun \naturalTrans \Tn$, $\SnmonadSum : \Tn^2 \naturalTrans \Tn$, 
$\Snmswap : \Tn \Tn[m] \naturalTrans \Tn[m] \Tn$ and 
$\Snscale{r} : \Tn \naturalTrans \Tn$.
\end{theorem}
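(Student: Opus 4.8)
The plan is to transfer each naturality equation from $\T^n$, where it already holds by \cref{thm:iterated-monad}, to $\Tn$ along the family $\Stree$, exactly as functoriality of $\Tn$ was deduced from that of $\T^n$ in the proof of \cref{thm:Tn-functor}. The three ingredients are: naturality of $\Stree : \Tn \naturalTrans \T^n$ (\cref{prop:Stree-natural}); the fact that $\Stree$ intertwines each of $\Sninjz$, $\SnmonadSum$, $\Snmswap$, $\Snscale{r}$ with its $\T^n$-counterpart (\cref{prop:Stree-monad-morphism}); and monicity of $\Stree$ (\cref{prop:Stree-monic}). For the unary transformations $\Sninjz$ and $\Snscale{r}$, monicity of $\Stree$ is already enough. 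For instance, for $f \in \cat(X, Y)$,
\begin{align*}
\Stree \comp \Tn f \comp \Sninjz
&= \T^n f \comp \Stree \comp \Sninjz \tag*{(\cref{prop:Stree-natural})} \\
&= \T^n f \comp \Sinjzn \tag*{(\cref{prop:Stree-monad-morphism})} \\
&= \Sinjzn \comp f \tag*{(\cref{thm:iterated-monad})} \\
&= \Stree \comp \Sninjz \comp f \tag*{(\cref{prop:Stree-monad-morphism})}
\end{align*}
so that $\Tn f \comp \Sninjz = \Sninjz \comp f$ by monicity of $\Stree$; the argument for $\Snscale{r}$ is identical, using that $\Sscalen{r}$ is natural with respect to $\T^n$ (established in the proof of \cref{thm:iterated-monad}).

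For $\SnmonadSum$ and $\Snmswap$, whose sources are $\Tn^2$ and $\Tn \Tn[m]$, the same four-step rewriting applies, but with $\Stree$ replaced by the horizontal composite $\Stree \hcomp \Stree : \Tn^2 \naturalTrans \T^n \T^n$, respectively $\Stree \hcomp \Stree[m] : \Tn \Tn[m] \naturalTrans \T^n \T^m$, which is again a natural transformation. Pushing this composite across $\Tn f$ by naturality of $\Stree$ (applied to each copy of $f$), exchanging $\SnmonadSum$ for $\SmonadSumn$ (resp.\ $\Snmswap$ for $\Sswapnm$) via \cref{prop:Stree-monad-morphism}, invoking naturality of $\SmonadSumn$ (resp.\ $\Sswapnm$) from \cref{thm:iterated-monad}, and exchanging back reduces the desired naturality square to the same equation precomposed with a horizontal composite of $\Stree$'s. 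It then remains to know that this composite is monic.

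This is the main obstacle, and it is resolved by strengthening \cref{prop:Stree-monic}: $\Stree$ is in fact a \emph{split} monomorphism. Indeed, fixing for each $i \in \interval{0}{n}$ a word $w_i$ of weight $i$ and letting $r \in \cat(\S^n X, \Sn X)$ be the tupling with $\Sproj_i \comp r = \frac{\factorial{w_i}}{\factorial{i}} \Sproj_{w_i}$, \cref{prop:Stree-explicit} gives $\Sproj_i \comp r \comp \Stree = \Sproj_i$ for every $i$ (using that $\rig$ has multiplicative inverses for integers), hence $r \comp \Stree = \id$. Since every functor preserves split monomorphisms and composites of split monomorphisms are split monic, both $\Stree \hcomp \Stree$ and $\Stree \hcomp \Stree[m]$ are split monic, in particular monic. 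With monicity of the relevant composite in hand, each of the four naturality squares follows exactly as in the unary computation above.
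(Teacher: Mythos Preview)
Your proof is correct and follows essentially the same strategy as the paper: transfer the naturality equations from $\T^n$ to $\Tn$ along $\Stree$ using \cref{prop:Stree-natural}, \cref{prop:Stree-monad-morphism}, \cref{thm:iterated-monad}, and then cancel $\Stree$ (or its horizontal composites) by monicity. The paper writes out the cases $\Snscale{r}$ and $\Snmswap$ explicitly and states that $\Sninjz$ and $\SnmonadSum$ are handled identically.

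The one place where you do slightly more than the paper is the monicity of $\Stree \hcomp \Stree$ and $\Stree \hcomp \Stree[m]$. The paper's proof simply remarks parenthetically that ``the composition of two monos is also a mono'', but the horizontal composite $(\Stree[m] \hcomp \Stree)_X = \Stree[m]_{\T^n X} \comp \Tn[m] \Stree_X$ involves a functor applied to $\Stree$, and arbitrary functors need not preserve monics. Your observation that $\Stree$ is in fact \emph{split} monic (with the explicit retraction built from \cref{prop:Stree-explicit}) closes this gap cleanly, since every functor preserves split monics. The paper's argument can also be completed by noting that $\Tn[m] \Stree = \Sn[m] \Stree$ by $\d$-linearity and that a cartesian product of monics is monic, but your route is more self-contained.
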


\begin{remark}
On the other hand, $\SnmonadSum$ is \emph{not} a natural 
transformation $\Tnbis \Tnbis \naturalTrans \Tnbis$ because 
of the lack of coefficients that erase redundancy. This can be checked 
by an explicit computation in $\SMOOTH$ for $n = 2$.
\end{remark}

\begin{proof}
We prove the naturality of $\Snscale r$.
\begin{align*}
    \Stree \comp \Tn f \comp \Snscale r &= \T^n f \comp \Stree \comp \Snscale r
    \tag*{naturality of $\Stree$} \\
    &= \T^n f \comp \Sscalen r \comp \Stree 
    \tag*{by \cref{prop:Stree-monad-morphism}} \\
    &= \Sscalen r \comp \T^n f \comp \Stree 
    \tag*{naturality of $\Sscalen r$} \\
    &= \Sscalen r \comp \Stree \comp \Tn f 
    \tag*{naturality of $\Stree$} \\ 
    &= \Stree \comp \Snscale r \comp \Tn f 
    \tag*{by \cref{prop:Stree-monad-morphism}}
\end{align*}
so by monicity of $\Stree$, $\Tn f \comp \Snscale r = \Snscale r \comp \Tn f$
and $\Snscale r$ is natural. Similarly,
\begin{align*}
& (\Stree[m] \hcomp \Stree) \comp \Tn[m] \Tn f \comp \Snmswap \\
&= \T^m \T^n f \comp (\Stree[m] \hcomp \Stree) \comp \Snmswap \tag*{naturality of $\Stree$} \\ 
&= \T^m \T^n f \comp \Sswapnm \comp (\Stree \hcomp \Stree[m]) 
\tag*{by \cref{prop:Stree-monad-morphism}} \\ 
&= \Sswapnm \comp \T^n \T^m f \comp (\Stree \hcomp \Stree[m]) 
\tag*{naturality of $\Sswapnm$} \\
&= \Sswapnm \comp (\Stree \hcomp \Stree[m]) \comp \Tn \Tn[m] f
\tag*{naturality of $\Stree$} \\
&= (\Stree[m] \hcomp \Stree) \comp \Snmswap \comp \Tn \Tn[m] f
\tag*{by \cref{prop:Stree-monad-morphism}} 
\end{align*}
so $\Tn[m] \Tn f \comp \Snmswap = \Snmswap \comp \Tn \Tn[m] f$
by monicity of $\Stree[m] \hcomp \Stree$ (the composition of two monos 
is also a mono) and $\Snmswap$ is natural.
The proof that $\Sinjz$ and $\SnmonadSum$ are natural follow a very similar pattern.
\end{proof}

Those equations rewrite as direct equations on $\tn$, in the same way that naturality 
equations on $\T$ rewrite as equation on the derivative $\d$.

\begin{cor} \label{prop:naturality-tn}
    For all $i \in \interval{0}{n}$ and $j \in \interval{0}{m}$, 
    \begin{equation} \label{eq:Tn-projections}
        \Sproj_i \compl \Sproj_j \comp \Tn[m] \Tn f = 
            \tn[j] \tn[i] f \comp (\Snm[m][j] \hcomp \Snm[n][i]) 
    \end{equation}
    Thus, $\tn$ satisfies the following equations: \begin{enumerate}
        \item $\tn f \comp \Sninjz = 0$;
        \item $\tn f \comp \SnmonadSum = \sum_{i=0}^n \tn[n-i] \tn[i] f \comp (\Snm[n][n-i] \hcomp 
        \Snm[n][i])$; \label{eq:tn-monad-sum}
        \item $\tn[n] \tn[m] f \comp \Snmswap = \tn[m] \tn f$;
        \item $\tn[n] f \comp \Snscale r = r^n \tn f$. 
        \end{enumerate} 
\end{cor}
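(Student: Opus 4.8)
The plan is to first establish the pointwise identity \cref{eq:Tn-projections} relating the double projection of $\Tn[m]\Tn f$ to $\tn[j]\tn[i] f$, and then obtain the four displayed equations simply by projecting onto their top components the naturality squares supplied by \Cref{thm:naturality-Tn}.

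Before proving \cref{eq:Tn-projections} I would isolate a commutation lemma: for every morphism $h$ and every $\d$-linear map $\ell$, $\ell \comp \tn[j] h = \tn[j](\ell \comp h)$. This follows from the composition formula \cref{eq:tn-composition}, which gives $\tn[j](\ell \comp h) = \tn[j]\ell \comp \Tn[j] h$, together with the fact, established in the proof of \Cref{prop:Tn-linear}, that $\d$-linearity of $\ell$ forces $\tn[j]\ell = \ell \comp \Sproj_j$; since $\Snm[j][j] = \id$ we have $\Sproj_j \comp \Tn[j] h = \tn[j] h$ by \cref{eq:Tn-definition}, and the lemma drops out. With this in hand, \cref{eq:Tn-projections} is a chain of rewrites: the definition \cref{eq:Tn-definition} for the outer functor gives $\Sproj_j \comp \Tn[m]\Tn f = \tn[j](\Tn f) \comp \Snm[m][j]$; the commutation lemma with $\ell = \Sproj_i$ turns $\Sproj_i \comp \tn[j](\Tn f)$ into $\tn[j](\Sproj_i \comp \Tn f) = \tn[j](\tn[i] f \comp \Snm[n][i])$; a further application of \cref{eq:tn-composition} yields $\tn[j]\tn[i] f \comp \Tn[j]\Snm[n][i]$; and $\d$-linearity of $\Snm[n][i]$ lets \Cref{prop:Tn-linear} replace $\Tn[j]\Snm[n][i]$ by $\Sn[j]\Snm[n][i]$. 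It then remains only to recognize $\Sn[j]\Snm[n][i] \comp \Snm[m][j]$ as the horizontal composite $\Snm[m][j] \hcomp \Snm[n][i]$, which is immediate from the definition of $\hcomp$.

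Given \cref{eq:Tn-projections}, the four equations are routine projections. For (4), naturality of $\Snscale r$ reads $\Tn f \comp \Snscale r = \Snscale r \comp \Tn f$; projecting with $\Sproj_n$ and using $\Sproj_n \comp \Snscale r = r^n \Sproj_n$ together with $\Sproj_n \comp \Tn f = \tn f$ (as $\Snm[n][n] = \id$) gives $\tn f \comp \Snscale r = r^n \tn f$. For (1), projecting the naturality square of $\Sninjz$ and using $\Sproj_n \comp \Sninjz = \kronecker n 0 \id$ gives $\tn f \comp \Sninjz = 0$ for $n \ge 1$. For (2) I would project the naturality square of $\SnmonadSum$ with $\Sproj_n \comp \SnmonadSum = \sum_{i+j=n} \Sproj_i \comp \Sproj_j$, substitute \cref{eq:Tn-projections} into each summand $\Sproj_i \comp \Sproj_j \comp \Tn \Tn f$, and reindex $j = n-i$. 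Equation (3) is obtained identically, projecting the naturality square of $\Snmswap$ onto its top--top component and invoking its defining equation in \cref{eq:Tn-natural}.

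The only real work lies in the second paragraph: the commutation lemma and the identification of the horizontal composite. The main obstacle is bookkeeping --- keeping track of which of the two nested copies of $\Sn$ each projection acts on, and matching the convention under which the first factor of $\hcomp$ is the outer functor, as used throughout \Cref{sec:iterated-monad}. Once these indices are pinned down, everything else is mechanical substitution.
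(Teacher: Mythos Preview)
Your proposal is correct and follows essentially the same strategy as the paper: establish \cref{eq:Tn-projections} first, then read off (1)--(4) as the top projections of the naturality squares from \Cref{thm:naturality-Tn}. The only cosmetic difference is in the derivation of \cref{eq:Tn-projections}: the paper pulls $\Sproj_i$ through $\Tn[m]$ in one step via $\Sproj_i \compl \Sproj_j = \Sproj_j \comp \Tn[m]\Sproj_i$ (using $\d$-linearity of $\Sproj_i$ and \Cref{prop:Tn-linear}) and then applies functoriality of $\Tn[m]$, whereas you first unfold the outer $\Tn[m]$ via \cref{eq:Tn-definition} and then invoke your commutation lemma; the two computations land on the two standard expressions for the horizontal composite $\Snm[m][j] \hcomp \Snm[n][i]$, so they are interchangeable.
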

\begin{proof}
    We prove \cref{eq:Tn-projections} first,
\begin{align*} 
    \Sproj_i \compl \Sproj_j \comp \Tn[m] \Tn f 
&= \Sproj_j \comp \Tn[m] \Sproj_i  \comp \Tn[m] \Tn f \tag*{$\d$-linearity of $\Sproj_i$} \\
&= \Sproj_j \comp \Tn[m] (\Sproj_i \comp \Tn f) \\
&= \Sproj_j \comp \Tn[m] (\tn[i] f \comp \Snm[n][i]) \\
&= \Sproj_j \comp \Tn[m] (\tn[i] f) \comp \T \Snm[n][i] \\
&= \tn[j] \tn[i] f \comp \Snm[m][j] \comp \T \Snm[n][i].
\end{align*}
Then, all of the equations above are simply the projections $\Sproj_n$ 
of the naturality equations on $\Tn$.
For example,
$\Sproj_i \comp \Tn f \comp \SnmonadSum
= \tn[i] f \comp \Snm[n][i] \comp \SnmonadSum 
= \tn[i] f \comp \SnmonadSum[i] \comp (\Snm[n][i] \hcomp \Snm[n][i])$, and 
\begin{align*}  
    \Sproj_i \comp \SnmonadSum \comp \Tn^2 f 
&= \sum_{k=0}^i \Sproj_{i-k} \compl \Sproj_k 
\comp \Tn^2 f \\
&=\sum_{k=0}^i \tn[i-k] \tn[k] f 
        \comp (\Snm[i][i-k] \hcomp \Snm[i][k]) \tag*{by \cref{eq:Tn-projections}}
\end{align*} 
so \cref{eq:tn-monad-sum} holds by naturality of $\SnmonadSum$  
and by straightforward equations on the $\Snm$. The other proofs are similar.
\end{proof}

We now prove that $\Snlift$ is also 
a natural transformation, so that 
$\Tn$ satisfies the same naturality equations as $\T$.
Using the same proof technique as above, 
we would like to define a natural transformation 
$\Sliftn : \S^n \naturalTrans \S^{2n}$ characterized by 
$\Sproj_u \compl \Sproj_v \compl \Sliftn = \kronecker{u}{v} \Sproj_u$
and to prove that
\[ 
\begin{tikzcd}[row sep = small]
	\Sn & {\S^n} \\
	{\Sn \Sn} & {\S^n \S^n}
	\arrow["\Stree", from=1-1, to=1-2]
	\arrow["\Snlift"', from=1-1, to=2-1]
	\arrow["\Sliftn", from=1-2, to=2-2]
	\arrow["{\Stree \hcomp \Stree}"', from=2-1, to=2-2]
\end{tikzcd} \]  
commutes. However, this diagram does not commute, 
we can check that 
$(\Sproj_u \hcomp \Sproj_v) \compl (\Stree \hcomp \Stree) \compl \Snlift 
\neq 0$ if and only if $\weight{u} = \weight{v}$, whereas 
$(\Sproj_u \hcomp \Sproj_v) \compl \Sliftn \compl \Stree \neq 0$ 
if and only if $u = v$.
Thus, we need to perform a more explicit (and combinatorial) computation, making use of the relationship
between $\tn$ and $\d^n$. 
The proof is done in
\version{sec. B of the long version of this paper \cite{Walch25-vlong}}
{\cref{sec:appendix:Slift-natural}}.

\begin{proposition} \label{prop:Slift-natural}
    For all $n, i, j$ such that $n \geq i, n \geq j$, 
    \begin{equation}  \label{eq:tn-Snlift}
    \tn[j] \tn[i] f \comp (\Snm[n][j] \hcomp \Snm[n][i]) \comp \Snlift 
    = \begin{cases} \tn[i] f \comp \Snm[n][i] \text{ if } i = j \\ 
    0 \text{ otherwise}
    \end{cases}
    \end{equation}
    and $\Snlift$ is a natural transformation $\Tn \naturalTrans \Tn^2$.
    \end{proposition}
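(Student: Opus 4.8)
The plan is to first establish the explicit identity \eqref{eq:tn-Snlift} by a direct computation that trades the nested Taylor operator $\tn[j]\tn[i] f$ for the iterated total derivative $\d^{i+j} f$, and then to read off the naturality of $\Snlift$ simply by projecting the naturality square. The starting point is \cref{prop:tn-direct}: since $\Stree[i]$ is $\d$-linear, the chain rule $\d^{j}(g \comp \ell) = \d^{j} g \comp \T^{j}\ell$ (which follows from $\d^{n} f = \Sproj_{\oneword}\comp\T^{n} f$ and functoriality of $\T^{j}$) together with $\T^{j}\Stree[i] = \S^{j}\Stree[i]$ (\cref{prop:T-linear}) should give
\[ \tn[j]\tn[i] f = \d^{j}(\d^{i} f \comp \Stree[i]) \comp \Stree[j] = \d^{i+j} f \comp \S^{j}\Stree[i] \comp \Stree[j]. \]
Writing $\Phi = \S^{j}\Stree[i] \comp \Stree[j] \comp (\Snm[n][j] \hcomp \Snm[n][i]) \comp \Snlift$, a $\d$-linear morphism $\Sn X \arrow \S^{i+j} X$, the left-hand side of \eqref{eq:tn-Snlift} becomes $\d^{i+j} f \comp \Phi$.

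Next I would compute the components of $\Phi$. Reading a length-$(i+j)$ word as an outer length-$j$ word $v$ (for the $\S^{j}$ layer) together with an inner length-$i$ word $u$, and using $\Sproj_{v}\comp\S^{j}\Stree[i] = \Stree[i]\comp\Sproj_{v}$, the explicit coefficients of $\Stree[i]$ and $\Stree[j]$ from \cref{prop:Stree-explicit}, and the characterization $\Sproj_a \compl \Sproj_b \compl \Snlift = \kronecker a b \Sproj_a$, I expect to obtain
\[ \Sproj_u \compl \Sproj_v \comp \Phi = \frac{\factorial{\weight u}\,\factorial{\weight v}}{\factorial u\,\factorial v}\,\kronecker{\weight u}{\weight v}\,\Sproj_{\weight u}. \]
In particular this component vanishes unless the word is \emph{balanced}, i.e. $\weight u = \weight v$.

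I would then expand $\d^{i+j} f \comp \Phi$ through the partition formula of \cref{prop:higher-order-derivative}, which writes $\d^{i+j} f$ as a sum over partitions $\{W^{1},\ldots,W^{k}\}$ of $\intsegment{i+j}$ of the terms $\hod f k \comp \prodtuplenosize{\Sproj_{\zeroword},\Sproj_{W^{1}},\ldots,\Sproj_{W^{k}}}$. Because each $\hod f k$ is $\d$-linear in its last $k$ arguments, any partition with an unbalanced block contributes $0$. For $i \neq j$ the total inner weight $i$ differs from the total outer weight $j$, so \emph{every} partition must contain an unbalanced block and the whole sum vanishes, giving the second case of \eqref{eq:tn-Snlift}. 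For $i = j$ only balanced partitions survive, and I would regroup them by the sequence of half-weights of their blocks and count them with \cref{fact:number-of-partitions}; the products of coefficients $\tfrac{\factorial{\weight{u^{s}}}\factorial{\weight{v^{s}}}}{\factorial{u^{s}}\factorial{v^{s}}}$ are then expected to collapse, via the multinomial identity, into the coefficients $\tfrac{1}{\factorial k}$ of \eqref{eq:tn-justification}, reproducing $\tn[i] f \comp \Snm[n][i]$. This final coefficient bookkeeping --- matching the position-dependent factorials $\factorial{u^{s}},\factorial{v^{s}}$ produced by $\Stree$ against the multinomial counts of \cref{fact:number-of-partitions} --- is the part I expect to be the main obstacle.

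Finally, the naturality of $\Snlift : \Tn \naturalTrans \Tn^{2}$ should be immediate from \eqref{eq:tn-Snlift}. Projecting the square $\Snlift \comp \Tn f = \Tn^{2} f \comp \Snlift$ along $\Sproj_i \compl \Sproj_j$, the left-hand side equals $\kronecker i j \Sproj_i \comp \Tn f = \kronecker i j\, \tn[i] f \comp \Snm[n][i]$ by the characterization of $\Snlift$ together with \eqref{eq:Tn-definition}, while the right-hand side equals $\tn[j]\tn[i] f \comp (\Snm[n][j]\hcomp\Snm[n][i]) \comp \Snlift$ by \eqref{eq:Tn-projections}, which is exactly the left-hand side of \eqref{eq:tn-Snlift}. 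The two therefore agree for all $i,j \in \interval{0}{n}$, and since the projections $\Sproj_i \compl \Sproj_j$ are jointly monic on $\Sn^{2} Y$ this yields the required naturality equation.
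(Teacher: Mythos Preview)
Your proposal is correct and follows essentially the same route as the paper's proof in \cref{sec:appendix:Slift-natural}: rewrite $\tn[j]\tn[i] f$ as $\d^{i+j} f$ composed with $\S^{j}\Stree[i]\comp\Stree[j]$ (the paper's \cref{prop:double-tn-explicit}), expand $\d^{i+j} f$ via the partition formula of \cref{prop:higher-order-derivative}, use \cref{prop:Stree-explicit} and the defining equation of $\Snlift$ to see that only balanced blocks survive, conclude vanishing for $i\neq j$, and for $i=j$ regroup via \cref{fact:number-of-partitions} to recover $\tn[i] f$; naturality then follows by projecting along $\Sproj_i\compl\Sproj_j$ exactly as you describe. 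The only cosmetic difference is that the paper works with the \emph{ordered} partition formula (item \cref{prop:higher-order-derivative-3}) carrying the $\tfrac{1}{\factorial{k}}$ from the start, whereas you invoke the unordered version and recover the $\tfrac{1}{\factorial{k}}$ at the end; either bookkeeping works, and your isolation of the intermediate linear morphism $\Phi$ and its components is a clean way to organize the same computation.
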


\subsection{An infinitary Taylor expansion}

This section defines a functor $\Tinf$ performing 
Taylor expansion at all order.
Assume that for all object $X$ of $\cat$, there exists 
a countable cartesian product
$\Sinf X = \prod_{i=0}^{\infty} X$.
Let $\Sproj_k : \Sinf \naturalTrans \idfun$, as it is now standard. 
Let $\N_{\omega} = \N \cup \{\omega\}$.
All the natural transformations in $\catAdd$ and $\catLin$ 
defined in \cref{eq:Tn-natural}
generalize to $n, m \in \Ninf$. Similarly, let
$\Snm[\omega][n] : \Sinf \naturalTrans \Sn$
be the natural transformation in $\catLin$ and $\catAdd$ given by 
$\Snm[\omega][n] = \prodtuple{\Sproj_0, \ldots, \Sproj_n}$.
Observe that this yield a cone in $\cat$, $\catAdd$ and $\catLin$, 
see \cref{fig:taylor-cone}.
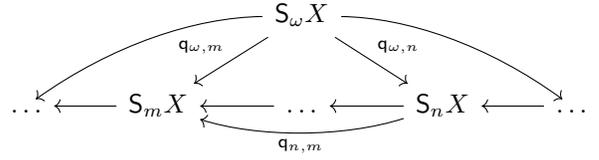
\begin{figure}
    \centering
\begin{tikzcd}[row sep = small]
	&& {\Sinf X} \\
	\\
	\ldots & {\Sn[m] X} & \ldots & {\Sn[n] X} & \ldots
	\arrow[curve={height=12pt}, from=1-3, to=3-1]
	\arrow["{\Snm[\omega][m]}"', from=1-3, to=3-2]
	\arrow["{\Snm[\omega][n]}", from=1-3, to=3-4]
	\arrow[curve={height=-12pt}, from=1-3, to=3-5]
	\arrow[from=3-2, to=3-1]
	\arrow[from=3-3, to=3-2]
	\arrow["\Snm", curve={height=-12pt}, from=3-4, to=3-2]
	\arrow[from=3-4, to=3-3]
	\arrow[from=3-5, to=3-4]
\end{tikzcd}
\caption{The cone of Taylor expansions}
\label{fig:taylor-cone}
\end{figure}

Let $\Tinf X = \Sinf X$, and for all $f \in \cat(X, Y)$, let 
$\Tinf f \in \cat(\Tinf X, \Tinf Y)$ 
be the unique morphism characterized by the equation 
$\Sproj_i \comp \Tinf f = \tn f \comp \Snm[\omega][n]$.

\begin{proposition} \label{prop:naturality-Tinf}
$\Tinf$ is a functor, $\Sproj_0 : \Tinf \naturalTrans \idfun$, 
$\Sinfinjz : \idfun : \naturalTrans \Tinf$, 
$\SinfmonadSum : \Tinf^2 \naturalTrans \Tinf$, $\Sinflift : \Tinf \naturalTrans \Tinf^2$, 
and $\Sinfscale r : \Tinf \naturalTrans \Tinf$
are natural transformations, and
$\Snmswap : \Tn \Tn[m] \naturalTrans 
\Tn[m] \Tn[m]$, $\Snm[\omega][n] : \Tinf \naturalTrans \Tn[n]$ 
are natural transformations
for all $n, m \in \Ninf$.
Finally, $f \in \cat(X, Y)$ is $\d$-linear if and only if $\Tinf f = \Sinf f$.
\end{proposition}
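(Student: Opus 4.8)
The plan is to recognise $\Sinf X = \prod_{i=0}^{\infty} X$ as the limit of the finite products $\Sn X$ taken along the projections $\Snm : \Sn \naturalTrans \Sn[m]$, with limiting cone given by the family $(\Snm[\omega][n])_n$ of \cref{fig:taylor-cone}. In particular the cone maps $\Snm[\omega][n]$ (equivalently the projections $\Sproj_i$) are jointly monic, so every equation between morphisms with codomain $\Sinf X$ may be checked by post-composing with each $\Sproj_i$. Since \cref{prop:Snm-natural} states exactly that $\Snm \comp \Tn f = \Tn[m] f \comp \Snm$, the family $(\Tn f)_n$ is a morphism of cones, and $\Tinf f$ is the unique morphism on the limit it induces; this is precisely the defining equation $\Sproj_i \comp \Tinf f = \tn[i] f \comp \Snm[\omega][i]$. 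Every other piece of structure is obtained in the same way, as the limit-induced map of the corresponding finite family.

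First I would record that $\Snm[\omega][n] : \Tinf \naturalTrans \Tn$ is natural. Post-composing with $\Sproj_i$ for $i \le n$, the left-hand side $\Sproj_i \comp \Snm[\omega][n] \comp \Tinf f$ equals $\Sproj_i \comp \Tinf f = \tn[i] f \comp \Snm[\omega][i]$ (using $\Sproj_i \comp \Snm[\omega][n] = \Sproj_i$), while the right-hand side $\Sproj_i \comp \Tn f \comp \Snm[\omega][n]$ equals $\tn[i] f \comp \Snm[n][i] \comp \Snm[\omega][n] = \tn[i] f \comp \Snm[\omega][i]$ by the cone compatibility $\Snm[n][i] \comp \Snm[\omega][n] = \Snm[\omega][i]$. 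Functoriality of $\Tinf$ then follows by joint monicity: both $\Tinf(g \comp f)$ and $\Tinf g \comp \Tinf f$ have the same composite with every $\Snm[\omega][n]$, namely $\Tn(g \comp f) \comp \Snm[\omega][n] = \Tn g \comp \Tn f \comp \Snm[\omega][n]$ by \cref{thm:Tn-functor} and the naturality just shown; the identity law is handled identically, and $\Sproj_0 : \Tinf \naturalTrans \idfun$ is the $i=0$ instance of the above.

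Next I would treat $\Sinfinjz$, $\SinfmonadSum$, $\Sinflift$, $\Sinfscale r$ and $\Snmswap$. Each is characterised componentwise by the formulas of \cref{eq:Tn-natural} extended to $\Ninf$, hence is the limit-induced map of the corresponding finite family ($\Sninjz$, $\SnmonadSum$, $\Snlift$, $\Snscale r$, $\Snmswap$) compatible with the cone. Naturality with respect to $\Tinf$ is then checked by post-composing with the projections and invoking \cref{thm:naturality-Tn} together with \cref{prop:Slift-natural} at each finite stage $n$, exactly along the lines of the direct equations of \cref{prop:naturality-tn}. For the transformations whose source is $\Tinf^2$, the target $\Tinf^2 X = \Sinf \Sinf X$ is again a countable product, so the composite projections $\Sproj_i \hcomp \Sproj_j$ remain jointly monic and componentwise verification stays legitimate; the only care needed is in matching the index bookkeeping of $\SinfmonadSum$ (the sum over $i+j=k$) and of $\Snmswap$ (the transposition of indices) with their finitary counterparts.

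Finally, the characterisation of $\d$-linearity passes to the limit from \cref{prop:Tn-linear}: if $f$ is $\d$-linear then $\Tn f = \Sn f$ for every $n$, so $\Tinf f$ and $\Sinf f$ induce the same cone and therefore coincide; conversely $\Tinf f = \Sinf f$ forces $\Sproj_1 \comp \Tinf f = f \comp \Sproj_1$, i.e. $\d f = f \comp \Sproj_1$. The main obstacle is not conceptual but hygienic: one must confirm that each functor built from $\Tinf$ genuinely presents its codomain as a limit with jointly monic projections, so that \emph{checking on each $\Sproj_i$} is a valid proof principle, and then marshal the finite-stage results through this principle without index errors.
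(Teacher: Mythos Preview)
Your proposal is correct and follows essentially the same approach as the paper: both reduce every statement about $\Tinf$ to the corresponding finite-stage statements on $\Tn$ (equivalently, on $\tn$) by checking on the projections $\Sproj_i$, invoking \cref{thm:Tn-functor}, \cref{prop:naturality-tn}/\cref{thm:naturality-Tn}, \cref{prop:Slift-natural}, \cref{prop:Snm-natural}, and \cref{prop:Tn-linear}. Your explicit framing in terms of the limit cone of \cref{fig:taylor-cone} and joint monicity is a slightly more categorical phrasing of what the paper calls ``by projections'', but the content is the same.
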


\begin{proof} The functoriality and the naturality equations on
    $\Tinf$ are equivalent, by projections, to equations on $\tn f$ for 
    all $n \in \N$. These equations are in turn equivalent to the functoriality 
    and the naturality equations on $\Tn$ for all $n \in \N$.
    For example, the functoriality of $\Tinf$ is equivalent to
    the equations of 
    \cref{eq:tn-composition}, for all $n \in \N$. These equations were proved
    as a consequence of the functoriality of $\Tn$.
    Similarly, the naturality of $\Sinfinjz$, 
    $\SinfmonadSum$, 
    $\Sinfscale r$ and $\Snmswap$ is a consequence of 
    \cref{prop:naturality-tn}, because the equations on $\tn$ given in
    \cref{prop:naturality-tn} are the projections of the
    naturality equations on $\Tinf$. 
    Similarly, the naturality of
    $\Sinflift : \Tinf \naturalTrans \Tinf^2$ is a consequence 
    of~\cref{prop:Slift-natural}.
    The proof that $\Snm[\omega][n]$ is natural is the same as the proof 
    of \cref{prop:Snm-natural}. Finally, the characterization of 
    the $\d$-linear morphisms is the same as the proof of \cref{prop:Tn-linear}.
\end{proof}

This functor $\Tinf$ is the counterpart of the coherent Taylor expansion $\T$
of~\cite{EhrhardWalch25},
more detail in~\cref{sec:infinitary-sums}.

\subsection{The monad structure of Taylor expansion}

In this section, we only consider indexes $n, m \in \Ninf$ 
such that $\Sn[n]$ is defined (so $n$ and $m$ can be equal to 
$\omega$, provided that the category has countable products).

\begin{proposition} \label{prop:Sn-monad}
$(\Sn, \Sninjz, \SnmonadSum)$ is a monad in $\catAdd$ and $\catLin$, 
and $\Snmswap$ is a distributive law of $\Sn$ over $\Sn[m]$.
\end{proposition}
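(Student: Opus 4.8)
The plan is to exploit the fact that every structure map in sight is defined by its projections, exactly as in \cref{prop:tangent-bundle-monad,prop:iterated-monad-S,prop:tangent-bundle-distributive}, and to push all the equations through the family $(\Sproj_k)_k$. Since $\Sn X$ is a product (finite, or countable when $n = \omega$), the family $(\Sproj_k)_k$ is jointly monic; likewise $(\Sproj_i \compl \Sproj_j)_{i,j}$ is jointly monic on $\Sn \Sn X$ and $(\Sproj_i \compl \Sproj_j \compl \Sproj_l)_{i,j,l}$ on $\Sn^3 X$. Hence it suffices to verify each monad axiom and each distributive-law diagram after post-composition with the appropriate projections, which reduces every diagram to an identity between finite sums of composites of the $\Sproj$, using that $\Sproj_k \compl \SnmonadSum = \sum_{i+j=k}\Sproj_i\compl\Sproj_j$ has finitely many terms even for $n = \omega$.

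For the monad laws I would first record the whiskering identities $\Sproj_i \compl \Sn g = g \compl \Sproj_i$ (functoriality of $\Sn$ on projections) together with $\Sproj_i \compl \Sninjz = \kronecker i 0 \id$. The two unit laws then collapse immediately: post-composing $\SnmonadSum \compl (\Sninjz \Sn)$ and $\SnmonadSum \compl (\Sn \Sninjz)$ with $\Sproj_k$ yields, after the whiskering identities, $\sum_{i+j=k}\kronecker j 0 \Sproj_i$ and $\sum_{i+j=k}\kronecker i 0 \Sproj_j$ respectively, in each of which a single surviving term equals $\Sproj_k$. For associativity I would compute, verbatim as in \cref{prop:tangent-bundle-monad} but with integer indices in place of the bits $0,1$,
\[ \Sproj_k \compl \SnmonadSum \compl (\SnmonadSum \Sn) = \sum_{i+j+l=k}\Sproj_i\compl\Sproj_j\compl\Sproj_l = \Sproj_k \compl \SnmonadSum \compl (\Sn \SnmonadSum), \]
which settles $\SnmonadSum \compl (\SnmonadSum \Sn) = \SnmonadSum \compl (\Sn \SnmonadSum)$. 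This establishes that $(\Sn, \Sninjz, \SnmonadSum)$ is a monad in $\catAdd$ and $\catLin$.

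For the distributive law $\Snmswap$ I would follow \cref{prop:tangent-bundle-distributive}. The defining equation $\Sproj_i \compl \Sproj_j \compl \Snmswap = \Sproj_j \compl \Sproj_i$ turns the two diagrams relating $\Snmswap$ to the units $\Sninjz$ and the two diagrams relating it to the multiplications $\SnmonadSum$ into routine projection computations, identical in shape to the $\S$-case, with the convolution sum $\sum_{i+j=k}$ playing the role of the two-term sum there. As in \cref{prop:iterated-monad-S}, the key structural observation is that $\Snmswap$ is an involution — composing it with its symmetric instance $\Sn[m]\Sn \naturalTrans \Sn\Sn[m]$ yields the identity — which lets me derive the two diagrams written with $\Snmswap$ read backwards from the first two, exactly as in the proof of \cref{prop:tangent-bundle-distributive}.

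The computations are entirely mechanical; the only genuine point requiring attention is bookkeeping — keeping the outer and inner roles of the paired projections straight through the whiskerings $\Sn \Sninjz$ versus $\Sninjz \Sn$ and through $\SnmonadSum \Sn$ versus $\Sn \SnmonadSum$ — together with checking that the reindexing in the associativity step really produces the symmetric triple sum $\sum_{i+j+l=k}$ from both sides. In the infinitary case $n=\omega$ one additionally notes that joint monicity of the countable family $(\Sproj_k)_{k \in \N}$ (from the universal property of the countable product $\Sinf X$) together with the finiteness of each convolution sum makes the same argument go through unchanged.
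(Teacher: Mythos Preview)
Your proposal is correct and follows essentially the same approach as the paper: the paper's proof is simply ``Same proof as the proofs of \cref{prop:tangent-bundle-monad,prop:tangent-bundle-distributive}, using that $\Snmswap \compl \Snmswap[m][n] = \id$'', and what you have written is precisely a spelled-out version of that reference, with integer indices replacing the bits $0,1$ and the inverse pair $\Snmswap[m][n] \compl \Snmswap = \id$ playing the role of the involution $\Sswap \compl \Sswap = \id$.
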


\begin{proof}
Same proof as the proofs 
of~\cref{prop:tangent-bundle-monad,prop:tangent-bundle-distributive}, using that 
$\Snmswap \compl \Snmswap[m][n] = \id$.
\end{proof}

\begin{cor} \label{prop:Tn-monad}
$(\Tn ,\Sninjz, \SnmonadSum)$ is a monad in $\cat$ and 
    $\Snmswap$ is a distributive law of $\Tn$ over $\Tn[m]$.
\end{cor}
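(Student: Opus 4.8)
The plan is to reduce every coherence diagram for $(\Tn, \Sninjz, \SnmonadSum)$ --- the two unit laws and the associativity law, together with the distributive-law diagrams for $\Snmswap$ --- to the corresponding diagrams for $(\Sn, \Sninjz, \SnmonadSum)$, which are already established in \cref{prop:Sn-monad}. This is exactly the strategy used to promote the monad structure of $\S$ to $\T$ in \cref{prop:tangent-bundle-monad,prop:tangent-bundle-distributive}. First I would observe that there is nothing left to prove about naturality: by \cref{thm:naturality-Tn}, the families $\Sninjz : \idfun \naturalTrans \Tn$, $\SnmonadSum : \Tn^2 \naturalTrans \Tn$ and $\Snmswap : \Tn \Tn[m] \naturalTrans \Tn[m] \Tn$ are genuine natural transformations in $\cat$, so only the coherence equations between them remain.

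The key observation is that every component appearing in those equations is $\d$-linear, since $\Sninjz$, $\SnmonadSum$ and $\Snmswap$ are defined in $\catLin$ by the tuplings of \cref{eq:Tn-natural}. Hence, by \cref{prop:Tn-linear}, the functor $\Tn$ agrees with $\Sn$ on each of them, so the left whiskerings such as $\Tn \SnmonadSum$ and $\Tn \Sninjz$ coincide, componentwise, with $\Sn \SnmonadSum$ and $\Sn \Sninjz$. Since moreover $\Tn X = \Sn X$ on objects, the right whiskerings such as $\SnmonadSum \Tn$ and $\Sninjz \Tn$ coincide with $\SnmonadSum \Sn$ and $\Sninjz \Sn$. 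Each coherence diagram for $\Tn$ thus collapses, component by component, to the corresponding diagram for $\Sn$, which commutes by \cref{prop:Sn-monad}; the same reasoning applies verbatim to the distributive-law diagrams relating $\Snmswap$, $\Sninjz$ and $\SnmonadSum$, using in addition the involutivity $\Snmswap \compl \Snmswap[m][n] = \id$ already exploited there.

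I expect no genuine obstacle. The only point that requires care is confirming that all of the structure maps are indeed $\d$-linear, so that \cref{prop:Tn-linear} may be invoked to replace $\Tn$ by $\Sn$ on components; but this is immediate from their definition as linear tuplings in $\catLin$. Consequently the proof amounts to a short reference to the argument of \cref{prop:tangent-bundle-monad}, with $\S$, $\T$, $\Sinjz$, $\SmonadSum$, $\Sswap$ replaced by $\Sn$, $\Tn$, $\Sninjz$, $\SnmonadSum$, $\Snmswap$.
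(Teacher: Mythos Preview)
Your proposal is correct and is exactly the paper's approach: the paper's proof reads ``Direct consequence of \cref{prop:Sn-monad} and \cref{prop:Tn-linear},'' which is precisely the reduction you describe (with naturality already supplied by \cref{thm:naturality-Tn}). You have simply unpacked the one-line justification into its constituent observations about whiskerings and $\d$-linearity.
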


\begin{proof}
Direct consequence of \cref{prop:Sn-monad} and  \cref{prop:Tn-linear}.
\end{proof}

The Kleisli category $\kleisliTn$ of the monad $\Tn$
formalizes a notion of higher order dual numbers~\cite{Szirmay20}.
For all $n \in \N$,
a morphism $f \in \kleisliTn(X, Y)$ corresponds to a polynomial
$\sum_{k=0}^n f_k \formalvar^k$ of degree $n$ 
over a formal indeterminate 
$\formalvar$ such that $\formalvar^{n+1} = 0$.
When $n = \omega$, a morphism is a power series over a formal indeterminate 
$\formalvar$ that satisfies no equation.
The composition 
of $f$ with $g = \sum_{k=0}^n g_k \formalvar^k \in \kleisliTn(Y, Z)$ 
consists of performing a Taylor expansion 
of each $g_k$ around $f_0$ on the variation $f - f_0$, 
and returning the induced polynomial (or formal power series, when $n = \omega$) 
over $\formalvar$, taking into account that $\formalvar^{n+1} = 0$ if $n \in \N$.

As in \cref{sec:iterated-monad}, the distributive laws 
$\Snmswap$ satisfy the Yang Baxter equations, so 
by thm 2.1 of~\cite{Cheng11}, there exists a canonical monad structure 
on $\T_{i_1} \cdots \T_{i_n}$ for any $i_1, \ldots, i_n \in \Ninf$ 
as well as canonical distributive laws between those monads. 
The associated natural transformations are generalizations of 
the natural transformations 
given in \cref{eq:iterated-monad}, with 
projections taken on words on integers instead of words on $\{0,1\}$.
The intuition behind these monads is very similar to the intuition 
behind $\T^n$ given in \cref{sec:iterated-monad}.
Intuitively, an element $e \in \Tn[i_1] \cdots \Tn[i_n] X$ is a formal power series
over $n$ formal variables $\formalvar_1, \ldots, \formalvar_n$ 
such that for all $j$, $\formalvar_{j}^{i_j + 1} = 0$ (if $i_j = \omega$, then this 
is simply a formal variable, with no equation on it). The projection 
$\Sproj_{w_1 \cdots w_n}(e)$ gives the coefficient in $e$ of the monomial 
$\formalvar_1^{w_1} \cdots \formalvar_n^{w_n}$. 
Then, $\Tn[i_1] \cdots \Tn[i_n] f(e)$ pushes forward this power series
using the power series given by the Taylor expansion of $f$, taking 
into account that the monomials eventually vanish at high order,
since $\formalvar_{j}^{i_j} = 0$ for all $i_j \neq \omega$.

By $\d$-linearity of the projections 
$\proj_i$, the natural iso
\[ \SdistTimes = \prodtuple{\Sn \proj_1, \Sn \proj_2}
\in \catAdd \left( \Sn (X_1 \times X_2), \Sn X_1 \times \Sn X_2 \right)
\]
is also a natural isomorphism $\Tn (\_ \times \_) \naturalTrans \Tn(\_) \times \Tn(\_)$.
Then, both $\Sn$ and $\Tn$ equipped with $\SdistTimes$ are 
\emph{strong monoidal monads}, which turns them into  
\emph{commutative monads}~\cite{Kock70,Kock72}. Concretely, 
it means that there exist two natural transformations
$\SprodstrR$ and $\SprodstrL$ called \emph{strengths}
defined as 
\begin{equation} \label{eq:Tn-strength}
    \begin{split}
    \SprodstrR &= \SdistTimes^{-1} \compl \prodPair{\id}{\Sninjz}
    \in \cat(\Tn X_0 \times X_1, \Tn (X_0 \times X_1)) \\
    \SprodstrL &= \SdistTimes^{-1} \compl \prodPair{\Sninjz}{\id}
    \in \cat(X_0 \times \Tn X_1, \Tn (X_0 \times X_1)).
\end{split}
\end{equation}
These strengths induce two operators 
$\Tnpartial 1$ and $\Tnpartial 2$
that map $f \in \cat(X_0 \times X_1, Y)$ to
$\Tnpartial 0 f = \Tn f \comp \SprodstrR \in \cat (\Tn X_0 \times X_1, \Tn Y)$ 
and $\Tnpartial 1 f = \Tn f \comp \SprodstrL \in \cat(X_0 \times \Tn X_1, \Tn Y)$.
These operators perform a Taylor expansion of $f$ with regard to only 
one of its argument. The commutativity of the monad means that
\[ \SnmonadSum \comp \Tnpartial 0 \Tnpartial 1 f 
= \SnmonadSum \comp \Tnpartial 1 \Tnpartial 0 f 
= \Tn f \comp \SdistTimes . \]
It corresponds to the fact that the Taylor expansion of 
$f \in \cat(X_0 \times X_1, Y)$ can either be computed directly, or
by expanding $f$ with regard to its two arguments separately (in any order) by using the 
partial derivatives.
Note that those fundamentals categorical equations 
capture a complex underlying combinatorics.

\section{A direct axiomatization of Taylor expansion}

\label{sec:taylor-expansion-sound}

In this section, $\cat$ is a cartesian left $\semiring$-additive category
where $\rig$ has multiplicative inverses for integers. 
Let $n \in \Ninf$. 
We may assume 
$n = \omega$ or not, depending on the existence of
$\Sinf X$ in $\cat$. 
The work of the previous section suggest a 
direct categorical axiomatization of Taylor expansion. This 
axiomatization is very similar to def. 116 of~\cite{EhrhardWalch25}, 
except that their work considers a setting with infinite partial sums.

\begin{definition} \label{def:taylor-expansion-axiom} 
    An order $n$ Taylor expansion in $\cat$ is a functor $\Un$ such that 
    $\Un X = \Sn X$ and \begin{enumerate}
        \item $\Un \proj_i = \Sn \proj_i$, $\Un \Ssum = \Sn \Ssum$
        and $\Un \homothety r = \Sn \homothety r$; 
        \label{def:taylor-expansion-axiom-linearity}
        \item the families $\Sproj_0 : \Un \naturalTrans \idfun$, $\Sninjz : \idfun \naturalTrans \Un$, $\SnmonadSum : \Un^2 \naturalTrans \Un$,
        $\Snscale{r} : \Un \naturalTrans \Un$, 
        $\Snmswap[n][n] : \Un \Un \naturalTrans \Un \Un$ and 
        $\Snlift : \Un \naturalTrans \Un^2$ are natural transformations in $\cat$.
    \end{enumerate}
\end{definition}
\begin{proposition} \label{prop:taylor-canonical}
Any cartesian $\rig$-differential category such that $\rig$ has multiplicative 
inverses for integers has an order $n$ Taylor expansion, 
given by $\Tn$.
\end{proposition}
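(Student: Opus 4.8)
The plan is to check, clause by clause, that $\Tn$ satisfies the two conditions of \cref{def:taylor-expansion-axiom}; nearly every clause is a direct citation of a result already proved for $\Tn$, so the proposition is essentially a bookkeeping exercise. The structural requirement $\Tn X = \Sn X$ together with functoriality of $\Tn$ is exactly \cref{thm:Tn-functor}.

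For the linearity clause \ref{def:taylor-expansion-axiom-linearity}, I would show that $\proj_i$, $\Ssum$ and $\homothety r$ are all $\d$-linear and then invoke \cref{prop:Tn-linear}, which guarantees $\Tn g = \Sn g$ for every $\d$-linear $g$. The projections are $\d$-linear directly by \ref{def:cdc-projections}, since $\d \proj_i = \proj_i \compl \Sproj_1$ is precisely the defining equation of $\d$-linearity. For $\Ssum$ and $\homothety r$, \cref{thm:tangent-bundle-natural}(1) identifies axiom \ref{def:cdc-additive}---which holds because $\cat$ is a cartesian $\rig$-differential category---with the equations $\T \Ssum = \S \Ssum$ and $\T \homothety r = \S \homothety r$, i.e.\ with the $\d$-linearity of $\Ssum$ and $\homothety r$ via \cref{prop:T-linear}. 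This yields $\Tn \proj_i = \Sn \proj_i$, $\Tn \Ssum = \Sn \Ssum$ and $\Tn \homothety r = \Sn \homothety r$.

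For the second condition, the naturality of $\Sninjz$, $\SnmonadSum$, $\Snscale r$ and of $\Snmswap[n][n]$ (the case $m = n$) is exactly the content of \cref{thm:naturality-Tn}, and the naturality of $\Snlift$ is \cref{prop:Slift-natural}. The only family left is $\Sproj_0 : \Tn \naturalTrans \idfun$, whose naturality is a one-line computation from \cref{eq:Tn-definition}: taking $i = 0$ and using $\tn[0] f = f$ and $\Snm[n][0] = \Sproj_0$ gives $\Sproj_0 \comp \Tn f = \tn[0] f \comp \Snm[n][0] = f \comp \Sproj_0$.

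Finally, when $n = \omega$ the argument is identical with $\Tinf$ in place of $\Tn$: functoriality and the naturality of $\Sproj_0$, $\Sinfinjz$, $\SinfmonadSum$, $\Sinfscale r$, $\Snmswap$ and $\Sinflift$ are all recorded in \cref{prop:naturality-Tinf}, and the $\d$-linearity argument for the linearity clause is unchanged. I do not expect any real obstacle: the genuine combinatorial work was already absorbed into \cref{thm:naturality-Tn} and \cref{prop:Slift-natural}, so the task here is purely to match each axiom of \cref{def:taylor-expansion-axiom} to its previously established counterpart. The only point requiring a moment's care is confirming the $\d$-linearity of $\Ssum$ and $\homothety r$, which is where \cref{thm:tangent-bundle-natural} does the work.
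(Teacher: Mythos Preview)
Your proposal is correct and follows essentially the same approach as the paper: invoke \cref{prop:Tn-linear} for the linearity clause, \cref{thm:naturality-Tn} and \cref{prop:Slift-natural} for the naturality clause when $n \in \N$, and \cref{prop:naturality-Tinf} when $n = \omega$. You are simply more explicit than the paper in justifying why $\proj_i$, $\Ssum$, $\homothety r$ are $\d$-linear and in separately checking the naturality of $\Sproj_0$, both of which the paper leaves implicit.
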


\begin{proof} $\Tn$ satisfies the equations of \cref{def:taylor-expansion-axiom-linearity}
by \cref{prop:Tn-linear}. $\Tn$ satisfies the naturality equations, 
by \cref{thm:naturality-Tn,prop:Slift-natural} when $n \in \N$, and by
\cref{prop:naturality-Tinf} when $n = \omega$. 
\end{proof}
We now show that \cref{def:taylor-expansion-axiom} only describes functors 
that are indeed Taylor expansions. More precisely, we prove 
that any order $n$ Taylor expansion $\Un$ induces a differential $\d$ and
that $\Un = \Tn$ where $\Tn$ is the Taylor expansion induced by $\d$.
First, we show that the term of degree $i$ in $\Un f$ only depends on 
the first $i$ coordinates.
For all $m \leq n$, let $\Smn : \Sn[m] \naturalTrans \Sn$ and 
$\Serase : \Sn \naturalTrans \Sn$ 
be natural transformations defined by 
\[\Sproj_k \compl \Smn = \begin{cases} 
    \Sproj_k \text{ if } k \leq m \\
    0 \text{ otherwise,}
\end{cases}
\quad \Sproj_k \compl \Serase = \begin{cases} 
    \Sproj_k \text{ if } k \leq m \\
    0 \text{ otherwise.}
\end{cases} \] 
Observe that $\Snm$ and $\Smn$ are sections and retractions since
$\Snm \compl \Smn = \id_{\Sn[m] X}$, and that $\Serase
= \Smn \compl \Snm$. 

\begin{lemma} \label{prop:Un-first-coordinates}
    For all Taylor expansion $\Un$, 
$\Sproj_i \comp \Un f = \Sproj_i \comp \Un f \comp \Serase[n][i]$.
\end{lemma}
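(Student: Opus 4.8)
The plan is to project the statement onto the $i$-th coordinate and then exploit the two families of natural transformations that govern the grading: the scalings $\Snscale{r}$, which detect the weighted homogeneity of each component, and the lift $\Snlift$, which records that $\Un f$ is polynomial rather than merely homogeneous. Since $\Serase[n][i] = \Smn[i][n] \compl \Snm[n][i]$ is idempotent and retains exactly the coordinates of index $\leq i$, while $\Snm[n][i] \compl \Smn[i][n] = \id$, the claim is equivalent to saying that $\Sproj_i \comp \Un f$ factors through the projection $\Snm[n][i]$, i.e. is independent of the coordinates of index $> i$. As a warm-up, the case $i = 0$ is immediate: naturality of $\Sproj_0 : \Un \naturalTrans \idfun$ gives $\Sproj_0 \comp \Un f = f \comp \Sproj_0$, which already factors through $\Snm[n][0] = \Sproj_0$.

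First I would record the homogeneity of the components. Naturality of $\Snscale{r} : \Un \naturalTrans \Un$ gives $\Snscale{r} \comp \Un f = \Un f \comp \Snscale{r}$, and projecting onto the $i$-th coordinate (using $\Sproj_i \compl \Snscale{r} = r^i \Sproj_i$) yields
\begin{equation*}
\Sproj_i \comp \Un f \comp \Snscale{r} = r^i \, \Sproj_i \comp \Un f \qquad \text{for all } r \in \rig .
\end{equation*}
This expresses that the degree-$i$ component is weighted-homogeneous of weight $i$. It is not sufficient on its own, however: a weighted-homogeneous map need not factor through $\Snm[n][i]$, since there exist homogeneous maps that genuinely depend on the higher coordinates. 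Hence the scalings have to be combined with the polynomial structure carried by $\Snlift$.

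The second ingredient is the naturality of $\Snlift : \Un \naturalTrans \Un^2$, namely $\Un^2 f \comp \Snlift = \Snlift \comp \Un f$. Projecting with an inner $\Sproj_a$ and an outer $\Sproj_b$, the right-hand side collapses through $\Sproj_a \compl \Sproj_b \compl \Snlift = \kronecker{a}{b}\Sproj_a$, while the left-hand side rewrites, exactly as in the proof of \cref{prop:naturality-tn} and using the axiom $\Un \proj_i = \Sn \proj_i$, as $\Sproj_a \compl \Sproj_b \comp \Un^2 f = \Sproj_b \comp \Un(\Sproj_a \comp \Un f)$. Combined with the homogeneity above, these relations constrain, coordinate by coordinate, the weight of each monomial occurring in $\Sproj_i \comp \Un f$: only input coordinates whose indices sum to $i$ can contribute, and all of these have index $\leq i$. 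I would carry this out by induction on $i$, with the base case $i = 0$ above, feeding the relation extracted from $\Snlift$ into the inductive hypothesis so as to peel off the top coordinates one at a time, and using that $\rig$ has multiplicative inverses for integers to invert the combinatorial coefficients that appear. The hard part will be precisely this last step: turning the intuitive slogan \enquote{a weighted-homogeneous polynomial map of weight $i$ involves only the variables of index $\leq i$} into a bookkeeping-free categorical argument. The difficulty is that $\Snlift$ is \emph{diagonal}, so the independent scalings of the inner and outer copies collapse on its image; homogeneity and polynomiality therefore cannot be invoked separately but must be played against each other, and isolating the correct combination is the genuine obstacle.
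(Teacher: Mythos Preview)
Your plan diverges from the paper's proof and, as you yourself flag at the end, does not actually close. The paper never touches $\Snlift$ or $\Snscale{r}$ here; it uses the monad multiplication $\SnmonadSum$ together with the unit $\Sninjz$, and the argument is short.

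The paper's trick is to split the identity on $\Sn X$ as $\id = \SnmonadSum \comp g_i$ where $g_i = \prodtuple{\Serase[n][i], h_i, 0, \ldots}$ and $h_i$ shifts the high coordinates down by one (so $\Sproj_k \comp h_i = \Sproj_{k+1}$ for $k \in \interval{i}{n-1}$ and $0$ otherwise). Naturality of $\SnmonadSum$ then gives
\[
\Sproj_i \comp \Un f \;=\; \sum_{k=0}^{i} \Sproj_{i-k}\compl\Sproj_k \comp \Un^2 f \comp g_i .
\]
The $k=0$ summand is $\Sproj_i \comp \Un f \comp \Serase[n][i]$ by naturality of $\Sproj_0$. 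For $k \geq 1$, one applies the inductive hypothesis at level $i-k < i$ to insert a $\Un\Serase[n][i-k]$ (using $\Un\proj_j = \Sn\proj_j$ to rewrite $\Sproj_{i-k}\compl\Sproj_k \comp \Un^2 f$ as $\Sproj_k \comp \Un(\Sproj_{i-k}\comp\Un f)$), and then observes that $\Serase[n][i-k]\comp h_i = 0$, so $\Sn\Serase[n][i-k]\comp g_i = \Sninjz \comp \Serase[n][i-k]$. Naturality of $\Sninjz$ collapses these terms to $\Sproj_k \comp \Sninjz \comp(\cdots) = 0$.

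By contrast, your route via $\Snscale{r}$ and $\Snlift$ runs into exactly the obstruction you name: $\Snlift$ is diagonal, so precomposing it destroys the independent scalings you would need to separate contributions by coordinate index, and the homogeneity relation alone cannot rule out dependence on higher coordinates in a general left $\rig$-additive category. I do not see a way to fill that gap without essentially reintroducing $\SnmonadSum$, which is the natural tool for decomposing along coordinate grading. I would recommend abandoning $\Snlift$/$\Snscale{r}$ here and using the $\SnmonadSum$/$\Sninjz$ decomposition instead.
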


The proof of this statement is in \version{sec. C of the long 
version of this paper \cite{Walch25-vlong}}{\cref{appendix:taylor-expansion-unique}}.
Let $\un[i] f = \Sproj_i \comp \Un f \comp \Smn[i][n] \in \cat(\Sn[i] X, Y)$.
By \cref{prop:Un-first-coordinates}, $\Sproj_i \comp \Un f = \un[i] f \comp \Snm[n][i]$, so 
$\un[i] f$ is to $\Un$ what $\tn[i] f$ is to $\Tn$. As such, the naturality 
equations on $\Un$ imply equations on $\un$ that are very similar to 
the equations of \cref{prop:naturality-tn}.

\begin{proposition} \label{prop:naturality-un}
    For all Taylor expansion $\Un$,
    \begin{equation} \label{eq:Un-projections}
        \Sproj_i \compl \Sproj_j \comp \Un \Un f = 
            \un[j] \un[i] f \comp (\Snm[n][j] \hcomp \Snm[n][i]) 
    \end{equation}
    Thus, the naturality equations on $\Un$ are equivalent to the following 
    equations: \begin{enumerate}
        \item $\un[0] f = f$ and $\un[i] f \comp \Sninjz[i] = 0$ for $i \geq 1$;
        \label{eq:un-monad-unit}
        \item $\un[i] f \comp \SnmonadSum[i] = \sum_{k=0}^i \un[i-k] \un[k] f 
        \comp (\Snm[i][i-k] \hcomp \Snm[i][k])$; \label{eq:un-monad-sum}
        \item $\un[i] \un[j] f \comp \Snmswap[i][j] = \un[j] \un[i] f$;
        \item $\un[i] f \comp \Snscale[i] r = r^i \un[i] f$;
        \item $\un[j] \un[i] f \comp (\Snm[n][j] \hcomp \Snm[n][i]) \comp \Snlift 
        = \begin{cases} \un[i] f \comp \Snm[n][i] \text{ if } i = j \\ 
            0 \text{ otherwise}
            \end{cases}$.
        \end{enumerate} 
    and $\un[1]$ is a derivative, in the sense of \cref{def:cdc}.
\end{proposition}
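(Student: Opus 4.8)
The plan is to transcribe the proof of \cref{prop:naturality-tn} almost verbatim, replacing the concrete operator $\tn$ by the abstract $\un$ and invoking only the hypotheses of \cref{def:taylor-expansion-axiom}; the one new ingredient is \cref{prop:Un-first-coordinates}, whose identity $\Sproj_i \comp \Un f = \un[i] f \comp \Snm[n][i]$ plays the role that the definition of $\Sproj_i \comp \Tn f$ played there. First I would prove \cref{eq:Un-projections}: starting from $\Sproj_i \compl \Sproj_j \comp \Un \Un f$, I use that $\Sproj_i$ is $\d$-linear --- here the hypothesis $\Un \Sproj_i = \Sn \Sproj_i$ of \cref{def:taylor-expansion-axiom} --- to pull $\Sproj_i$ through the outer $\Un$, then functoriality of $\Un$ to rewrite $\Un(\Sproj_i \comp \Un f)$, and finally \cref{prop:Un-first-coordinates} together with the $\d$-linearity of $\Snm[n][i]$ to reach $\un[j] \un[i] f \comp (\Snm[n][j] \hcomp \Snm[n][i])$.

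Next I would read off the five equations as the degreewise projections of the six naturality squares. Composing each square for $\Sninjz$, $\SnmonadSum$, $\Snmswap[n][n]$, $\Snscale r$ and $\Snlift$ on the left with $\Sproj_i$ and substituting $\Sproj_i \comp \Un f = \un[i] f \comp \Snm[n][i]$ --- together with \cref{eq:Un-projections} to expand $\Un \Un f$ in the multiplication and lift cases --- turns it into the corresponding stated equation at degree $i$, exactly as in \cref{prop:naturality-tn} and \cref{prop:Slift-natural}; the identities $\un[0] f = f$ and $\un[i] f \comp \Sninjz[i] = 0$ are the $\Sproj_0$- and $\Sproj_i$-projections of the naturality of $\Sproj_0$ and $\Sninjz$. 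Since a morphism into $\Sn Y$ is determined by its projections $\Sproj_i$, the conjunction of these degreewise identities is equivalent to the naturality of the transformation, which establishes the stated equivalence in both directions.

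Finally, to prove that $\un[1]$ is a derivative I would set $\d := \un[1]$ and invoke the bijection of \cref{thm:tangent-bundle-natural}. The unit equation $\un[0] f = f$ makes the candidate $\T f = \prodPair{f \comp \Sproj_0}{\un[1] f}$ well defined, and one checks $\Snm[n][1] \comp \Un f \comp \Smn[1][n] = \T f$. Functoriality of $\T$, equivalently \ref{def:cdc-chain}, follows by projecting $\Un(g \comp f) = \Un g \comp \Un f$ and applying \cref{prop:Un-first-coordinates} to get $\un[1](g \comp f) = \un[1] g \comp \T f$; the agreements $\T \proj_i = \S \proj_i$, $\T \Ssum = \S \Ssum$ and $\T \homothety r = \S \homothety r$ descend from the linearity conditions of \cref{def:taylor-expansion-axiom} and, via item~1 of \cref{thm:tangent-bundle-natural}, already account for \ref{def:cdc-additive} (and $\T \proj_i = \S \proj_i$ gives \ref{def:cdc-projections}). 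Naturality of $\Sproj_0$ with respect to $\T$ is immediate from the shape of $\T$, while naturality of $\Sinjz$ and $\Sscale r$ is the degree-$1$ instance of the unit and scaling equations. For $\SmonadSum$, $\Slift$ and $\Sswap$, which are natural with respect to the iterated functor $\T^2$, I would use \cref{prop:D-pairing} (available once $\d$ satisfies \ref{def:cdc-projections} and \ref{def:cdc-chain}) to identify the component $\Sproj_1 \compl \Sproj_1 \comp \T^2 f$ with $\un[1] \un[1] f$, so that the remaining content reduces to the $(1,1)$-instances of the sum, lift and swap equations, every other component of the squares holding by the definition of $\T$. By items~2--4 of \cref{thm:tangent-bundle-natural} this yields \ref{def:cdc-left-additive}, \ref{def:cdc-linear} and \ref{def:cdc-schwarz}, so $\d = \un[1]$ is a derivative.

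I expect this last point --- establishing the second-order axioms \ref{def:cdc-linear} and \ref{def:cdc-schwarz} --- to be the main obstacle. Because $\SmonadSum$, $\Slift$ and $\Sswap$ are natural with respect to $\T^2$ rather than $\T$, one must carefully match the order-$1$ double application $\T^2 f$ against the order-$n$ data $\Un \Un f$ through the section--retraction pair $(\Smn[1][n], \Snm[n][1])$ and keep track of exactly which projections survive; the first-order steps and the passage through \cref{eq:Un-projections} are, by contrast, a routine transcription of \cref{prop:naturality-tn} and of the argument behind \cref{thm:tangent-bundle-natural}.
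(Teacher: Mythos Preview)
Your proposal is correct and follows the same route as the paper: prove \cref{eq:Un-projections} exactly as in \cref{prop:naturality-tn}, read off items (1)--(5) as the degreewise projections of the naturality squares, and then recover the axioms of \cref{def:cdc} for $\un[1]$ from the low-degree instances together with the linearity hypotheses of \cref{def:taylor-expansion-axiom}. The only difference is that the paper dispatches the last step in one line (``taking those equations on $i=j=1$''), whereas you route it through \cref{thm:tangent-bundle-natural}; your anticipated obstacle of matching $\T^2 f$ against $\Un\Un f$ via $(\Smn[1][n],\Snm[n][1])$ dissolves once you note that equations (3) and (5) at $i=j=1$ are already literally $\d\d f \comp \Sswap = \d\d f$ and $\d\d f \comp \Slift = \d f$ after identifying $\Snmswap[1][1]=\Sswap$ and $(\Snm[n][1]\hcomp\Snm[n][1])\comp\Snlift\comp\Smn[1][n]=\Slift$.
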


\begin{proof}
The proof of \cref{eq:Un-projections} is the same as the proof 
of \cref{eq:Tn-projections} in~\cref{prop:naturality-tn}. 
The other equations are simply the projections of the naturality
equations on $\Un$, we also refer to the proof of \cref{prop:naturality-tn}.
We conclude that $\un[1]$ is a derivative, 
taking those equations on $i = j =1$, and
by projections of the other 
equations of \cref{def:taylor-expansion-axiom} to recover 
\ref{def:cdc-additive}, \ref{def:cdc-projections} and \ref{def:cdc-chain}.
\end{proof}

Thus, there are two maps $\phi : \Un \mapsto \un[1]$ and 
$\psi : \d \mapsto \Tn$ between differentials and Taylor expansions, 
and by definition $\phi \comp \psi = \id$.
We want to prove that $\psi \comp \phi = \id$. 
First, we prove that this is the case when restricting $\phi$ on 
Taylor expansions that are "functorializations" of an actual Taylor expansion.

\begin{proposition} \label{prop:taylor-is-taylor-partial}
    The maps $\phi$ and $\psi$ induce a bijection between differentials
    and order $n$ Taylor expansions $\U$ such that 
    \begin{equation} \label{eq:taylor-is-taylor}
        \Un \comp \prodtuple{x, u, 0, \ldots} = 
    \prodtuple{\frac{1}{\factorial k} \hod{f}{k} \comp \prodtuple{x, u, \ldots, u}}_{k=0}^n
    \end{equation}
    where $\hod{f}{k}$ is the higher order derivative defined from the derivative 
    $\un[1] = \phi(\U)$. 
\end{proposition}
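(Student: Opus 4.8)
The plan is to lean on the identity $\phi \comp \psi = \id$ already observed, so that only two things remain: that $\psi$ maps into the class of Taylor expansions obeying \cref{eq:taylor-is-taylor}, and that $\psi \comp \phi$ is the identity on that class. For the first point, fix a derivative $\d$ and set $\Tn = \psi(\d)$. By \cref{eq:Tn-definition}, $\Sproj_i \comp \Tn f = \tn[i] f \comp \Snm[n][i]$, and $\Snm[n][i] \comp \prodtuple{x, u, 0, \ldots} = \prodtuple{x, u, 0, \ldots, 0} \in \Sn[i] X$, so I only need to evaluate $\tn[i] f$ at the vector $\prodtuple{x, u, 0, \ldots, 0}$. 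Feeding this into the explicit expression of \cref{thm:tn-justification}, every factor $\Sproj_{i_l}$ with $i_l \geq 2$ annihilates the input; hence the sole surviving composition is $i_1 = \cdots = i_k = 1$, which forces $k = i$ and yields $\frac{1}{\factorial{i}} \hod f i \comp \prodtuple{x, u, \ldots, u}$. Thus $\Tn$ satisfies \cref{eq:taylor-is-taylor} with the higher derivatives of its own $\tn[1] = \d$, so $\psi$ indeed lands in the right class.

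For the converse, let $\U$ satisfy \cref{eq:taylor-is-taylor}, put $\d = \un[1] = \phi(\U)$ and let $\Tn = \psi(\d)$. Because $\Sproj_i \comp \Un f = \un[i] f \comp \Snm[n][i]$ by \cref{prop:Un-first-coordinates} and $\Sproj_i \comp \Tn f = \tn[i] f \comp \Snm[n][i]$, proving $\Un = \Tn$ reduces to proving $\un[i] f = \tn[i] f$ for every $i$ and every $f$. I will do this by strong induction on $i$, the cases $i = 0, 1$ being immediate from $\un[0] f = f = \tn[0] f$ and $\un[1] = \d = \tn[1]$.

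For the inductive step, note first that by the paragraph above both $\Un$ and $\Tn$ obey \cref{eq:taylor-is-taylor} with the same $\hod f k$ (they share $\un[1] = \tn[1] = \d$), so $\un[i] f$ and $\tn[i] f$ already agree on every axis vector $\prodtuple{x, u, 0, \ldots, 0}$. To promote this to equality on all of $\Sn[i] X$, I exploit that $\un$ and $\tn$ satisfy the very same equations, namely \cref{prop:naturality-un} on one side and \cref{prop:naturality-tn} together with \cref{prop:Slift-natural} on the other. Taking the monad-sum identity of \cref{prop:naturality-un} (equivalently, the $\Sproj_i$-projection of the naturality of $\SnmonadSum$) at order $i$ for both operators and subtracting, the induction hypothesis makes all interior terms $\un[i-k] \un[k] f$ with $0 < k < i$ coincide with their $\tn$-counterparts and cancel through \cref{eq:Un-projections}; what survives is a self-referential identity in which the difference $\delta = \un[i] f - \tn[i] f$ occurs only precomposed with $\SnmonadSum[i]$ and with the two boundary composites arising from the $k = 0$ and $k = i$ summands.

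To close, I precompose this identity with the unit section $\Sninjz \Sn[i]$, which collapses $\SnmonadSum[i]$ to the identity by the monad unit law, while the scaling equation of \cref{prop:naturality-un} pins $\delta$ to weighted-homogeneous degree $i$; reading the resulting relation bidegree by bidegree through \cref{eq:Un-projections} expresses every coordinate of $\delta$ other than the pure first-order one in terms of strictly lower-order differences, which vanish by the induction hypothesis, whereas the pure first-order coordinate vanishes by the axis agreement already established. Hence $\delta = 0$, completing the induction and the proof. I expect this final step to be the main obstacle: the naturality equations never isolate $\un[i]$ from lower-order data, since the monad multiplication always carries $\un[i]$ on both sides of each relation, so reconstructing the whole morphism $\un[i] f$ from its restriction to the one-parameter axis is exactly where the combinatorics of Taylor's formula re-enters, and \cref{eq:taylor-is-taylor} is precisely the hypothesis that makes this reconstruction determinate.
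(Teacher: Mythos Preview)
Your forward direction---showing that $\Tn = \psi(\d)$ satisfies \cref{eq:taylor-is-taylor}---is correct and matches the paper's implicit use of \cref{thm:tn-justification}. The setup of the converse direction (reduce to $\un[i] f = \tn[i] f$ by strong induction on $i$, base cases $i=0,1$) is also the same as the paper's.

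The inductive step, however, has a genuine gap. First, writing $\delta = \un[i] f - \tn[i] f$ already presupposes subtraction, which is not available in a cartesian left $\rig$-additive category; the paper works without ever forming such a difference. More seriously, your chosen section does not give information: precomposing the identity
\[
\delta \comp \SnmonadSum[i] \;=\; \delta \comp (\text{inner }\Sproj_0) \;+\; \delta \comp (\text{outer }\Sproj_0)
\]
with $\Sninjz \Sn[i]$ yields $\delta = \delta \comp \Sninjz[i] \comp \Sproj_0 + \delta = 0 + \delta$, a tautology (the first term vanishes by \cref{prop:naturality-un}\,\cref{eq:un-monad-unit}); the other unit section $\Sn[i]\Sninjz$ gives the symmetric tautology. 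The subsequent appeal to ``bidegree by bidegree'' and the scaling equation is not a concrete argument, and you yourself flag this as ``the main obstacle''.

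What the paper does instead is choose a \emph{non-canonical} section of $\SnmonadSum[i]$, namely $g = \prodtuple{\Serase[i][1], h_i, 0, \ldots, 0}$ with $h_i = \prodtuple{0, \Sproj_2, \ldots, \Sproj_i, 0}$. Precomposing \cref{prop:naturality-un}\,\cref{eq:un-monad-sum} with $g$, the $k=0$ boundary term becomes $\un[i] f \comp \Sninjz[i] \comp \Sproj_0 = 0$ by the unit equation, and the $k=i$ boundary term becomes $\un[i] f \comp \prodtuple{\Sproj_0,\Sproj_1,0,\ldots,0}$, the axis restriction. One obtains, \emph{without any subtraction},
\[
\un[i] f \;=\; \un[i] f \comp \prodtuple{\Sproj_0,\Sproj_1,0,\ldots,0} \;+\; R_i,
\]
where $R_i$ involves only $\un[k]$ with $1 \le k \le i-1$. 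The same section gives the identical decomposition for $\tn[i] f$ with a remainder $R_i'$; by the induction hypothesis $R_i = R_i'$, and the axis terms agree by \cref{eq:taylor-is-taylor}, so $\un[i] f = \tn[i] f$. The missing idea in your attempt is precisely this choice of section, which pushes all coordinates beyond the first into the interior of the sum.
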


The proof is in \version{sec C. of the long version of 
this paper \cite{Walch25-vlong}}{\cref{appendix:taylor-expansion-unique}}.
We now prove that \cref{eq:taylor-is-taylor} always hold when the 
additive structure of $\cat$ is \emph{cancellative},
in the sense that $f + g = f + h$ implies $g = h$. We do not have a proof 
in the general case. 

\begin{remark}
 Unfortunately, categories in which the additive monoid is not cancellative
 include the categories with total sums that we will see in \cref{sec:infinitary-sums}.
 Those include the weighted relational model of \cref{ex:wrel}. Indeed, the completeness of a semiring $\rig$ 
induces the existence of an infinite coefficient 
$\infty = \sum_{x \in \rig} x$, and for all element $x$ $\infty + x = \infty$. 
Still, \cref{prop:taylor-is-taylor-partial} is already a satisfactory 
result, because it shows that the "functorialization" of a Taylor expansion is unique.;
\end{remark}

\begin{theorem} \label{thm:taylor-is-taylor} 
    Assume that the additive structure of $\cat$ is cancellative. 
    For all $n \in \Ninf$ such that $n \geq 1$, 
    there is a bijection between differentials $\d$ and order $n$ Taylor 
    expansions $\Un$, given by the maps 
    $\phi : \Un \mapsto \un[1]$ and 
    $\psi : \d \mapsto \Tn$.
\end{theorem}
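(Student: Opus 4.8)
The plan is to reduce everything to \cref{prop:taylor-is-taylor-partial}: since that proposition already gives a bijection between differentials and the order $n$ Taylor expansions that satisfy \eqref{eq:taylor-is-taylor}, and since $\phi \comp \psi = \id$ holds by construction, it suffices to prove that \emph{every} order $n$ Taylor expansion $\Un$ satisfies \eqref{eq:taylor-is-taylor} once the additive structure of $\cat$ is cancellative. Indeed, this yields $\Un = \psi(\phi(\Un)) = \Tn$, hence $\psi \comp \phi = \id$. Throughout I write $\d = \un[1] = \phi(\Un)$ (a derivative by \cref{prop:naturality-un}) and $\Tn = \psi(\d)$; by \cref{prop:taylor-is-taylor-partial} the functor $\Tn$ does satisfy \eqref{eq:taylor-is-taylor}, and it obeys exactly the same naturality equations as $\Un$ by \cref{prop:naturality-tn}.

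I would argue by induction on the order $n$, the $\omega$ case following from the finite ones since a morphism into $\Sinf$ is determined by its projections $\Sproj_k$. For $n = 1$ the equality $\Un f = \prodPair{f \comp \Sproj_0}{\d f} = \Tn[1] f$ is immediate. For the inductive step, the truncation $\Un[m]$ of $\Un$ (the order $m$ functor carrying the same components $\un[i]$, $i \le m$, which is again a Taylor expansion since its axioms are the low-degree projections of those of $\Un$) satisfies the theorem for every $m < n$, so the induction hypothesis gives $\un[m] = \tn[m]$ as operators for all $m < n$. It remains to prove the top-degree instance of \eqref{eq:taylor-is-taylor}, namely $\un[n] f \comp \iota = \frac{1}{\factorial{n}} \hod{f}{n} \comp \prodtuple{\Sproj_0, \Sproj_1, \ldots, \Sproj_1}$, where $\iota = \prodtuple{\Sproj_0, \Sproj_1, 0, \ldots, 0} \in \catLin(\S X, \Sn[n] X)$; all lower degrees already match by the induction hypothesis together with \eqref{eq:taylor-is-taylor} for $\Tn$.

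The key device is to evaluate the monad multiplication equation (\cref{prop:naturality-un}(2), and its twin \cref{prop:naturality-tn}(2) for $\Tn$) at a carefully chosen $\d$-linear input. Let $\gamma \in \catLin(\S X, \Sn[n] \Sn[n] X)$ be characterised by $\Sproj_p \compl \Sproj_q \compl \gamma = \Sproj_0$ when $(p,q) = (0,0)$, by $\Sproj_p \compl \Sproj_q \compl \gamma = \Sproj_1$ when $(p,q) \in \{(0,1),(1,0)\}$, and by $\Sproj_p \compl \Sproj_q \compl \gamma = 0$ otherwise. A projectionwise computation gives $\SnmonadSum[n] \comp \gamma = \Snscale[n]{2} \comp \iota$, so by the homogeneity equation \cref{prop:naturality-un}(4) the left-hand side becomes $\un[n] f \comp \SnmonadSum[n] \comp \gamma = 2^n\, v$ with $v := \un[n] f \comp \iota$. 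On the right-hand side, the two extreme terms $j \in \{0, n\}$ each reduce to $v$ (precomposing $\gamma$ with the inner, resp. outer, degree-zero projection returns $\iota$), while for $1 \le j \le n-1$ the induction hypothesis $\un[m] = \tn[m]$ ($m < n$) makes the middle term $\un[n-j] \un[j] f \comp (\Snm[n][n-j] \hcomp \Snm[n][j]) \comp \gamma$ equal to its $\Tn$-counterpart. Writing $M$ for the common sum of these middle terms and $v' := \tn[n] f \comp \iota$, the two equations read
\[ 2^n\, v = 2\,v + M, \qquad 2^n\, v' = 2\,v' + M. \]

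Finally I would invoke cancellativity. Expanding $2^n\, v = 2\,v + (2^n - 2)\,v$ in the $\semiring$-semimodule $\cat(\S X, Y)$ and cancelling the common summand $2\,v$ yields $(2^n-2)\,v = M$, and likewise $(2^n-2)\,v' = M$; multiplying by $\frac{1}{2^n-2}$, which exists in $\rig$ for $n \ge 2$ by \cref{def:multiplicative-inverse}, gives $v = v'$. Since $\Tn$ satisfies \eqref{eq:taylor-is-taylor}, $v' = \frac{1}{\factorial{n}}\hod{f}{n} \comp \prodtuple{\Sproj_0, \Sproj_1, \ldots, \Sproj_1}$, which is exactly the top-degree instance sought; hence \eqref{eq:taylor-is-taylor} holds for $\Un$ and \cref{prop:taylor-is-taylor-partial} gives $\Un = \Tn$. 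The two places where the hypotheses are genuinely used are intertwined, and constitute the main obstacle: the middle terms involve the higher derivatives of $\un[j]f$, so they can only be identified with those of $\Tn$ if lower orders agree as \emph{full} operators, which forces the induction to be run on the order $n$ (with the lower-order passage from the restricted identity \eqref{eq:taylor-is-taylor} to full equality going through \cref{prop:taylor-is-taylor-partial}); and the cancellation of $2\,v$ is only legitimate in a cancellative monoid, precisely the phenomenon highlighted in the Remark preceding the theorem.
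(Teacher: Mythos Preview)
Your proof is correct and follows essentially the same route as the paper: both evaluate the monad-multiplication identity of \cref{prop:naturality-un}\eqref{eq:un-monad-sum} at an input whose inner and outer degree-$0$ slices equal $\iota$ and whose $\SnmonadSum$ equals $\Snscale{2}\comp\iota$, obtain $2^{i}v = 2v + M$ with $M$ depending only on lower-degree operators, cancel, and divide by $2^{i}-2$. Your $\gamma$ is exactly the paper's $\prodtuple{\Serase[i][1],\,g,\,0,\ldots}$ written in projections.

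The one organisational difference is worth noting. The paper inducts on the degree $i$ inside a fixed order $n$, but for the step ``$Q_i=Q_i'$'' it tacitly needs $\un[m]=\tn[m]$ as \emph{full} operators for $m<i$, not merely the restricted equality at $\iota$; this is obtained by interleaving with the induction in the proof of \cref{prop:taylor-is-taylor-partial}. You make this explicit by inducting on the order $n$ and invoking the induction hypothesis on the truncated functor $\Un[m]$, which is indeed an order-$m$ Taylor expansion (its functoriality and naturality axioms are the low-degree projections of those of $\Un$, using \cref{prop:Un-first-coordinates} to see that the truncated functoriality equation is a restriction of the original one). Your packaging is a little cleaner, at the cost of the small extra check that truncation preserves the axioms of \cref{def:taylor-expansion-axiom}.
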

The proof is in \version{sec. C of the long version of this 
paper~\cite{Walch25-vlong}}{\cref{appendix:taylor-expansion-unique}}.

\section{Taylor expansion and countable sums}

\label{sec:infinitary-sums}

We now assume that $\cat$ admits countable sums. 
Formally, this means that $\rig$ is a complete semiring,
that  $\cat(X, Y)$ is a complete monoid, 
and that the multiplicative action 
$\_ \cdot \_ : \rig \times \cat(X, Y) \arrow \cat(X, Y)$ respects 
countable sums in each variable.
We call such semimodules \emph{complete semimodules}.
A map between complete $\rig$-semimodules is $\rig$-additive if 
it preserves all countable sums, and the multiplicative action. 
The notions of cartesian left $\rig$-additive categories 
and cartesian differential categories directly carry to this 
new setting. We refer the reader to Sec. 6 of~\cite{Lemay24} 
for a more detailed definition.

\begin{definition}[Prop. 6.3 of \cite{Lemay24}] \label{def:taylor-category}
A cartesian differential category with countable sums is \emph{Taylor}
if $\rig$ has multiplicative inverses for integers and if 
for all $f \in \cat(X, Y)$ and $x, u \in \cat(Z, X)$,
\begin{equation} \label{eq:is-taylor}
f \comp (x + u) = f(x) + \sum_{k=1}^{\infty} \frac{1}{\factorial{k}} 
\hod f k \comp \prodtuple{x, u, \cdots, u}
\end{equation}
\end{definition}

One main examples of such Taylor categories is the
weighted relational model of \cref{ex:wrel}
over semirings with multiplicative 
inverses for integers. 
Typically, those categories are closed, and the differential is assumed to be compatible 
with the closed structure, see def. 4.4 of ~\cite{Bucciarelli10}.
As observed in~\cite{Lemay24}, a cartesian differential category with countable sums 
which is closed is Taylor if and only if it \emph{models Taylor expansion}, 
in the sense of def. 5.19 of~\cite{Manzonetto12}.
As such, those categories serve as models of the differential 
$\lambda$-calculus and its associated Taylor expansion~\cite{Ehrhard08,Manzonetto12}.


\begin{definition} \label{def:analytic}
    Let $\Ssum : \Sinf \naturalTrans \idfun$ be the natural transformation in $\catAdd$ defined 
    as $\Ssum = \sum_{k=0}^\infty \Sproj_k$.
    An order $\omega$ Taylor expansion $\Uinf$ is analytic if 
    $\Ssum$ is a natural transformation $\Uinf \naturalTrans \idfun$.
\end{definition}

The analytic Taylor expansions of def. 117 of~\cite{EhrhardWalch25}
coincide, in cartesian differential categories that admit countable sums,
with the analytic Taylor expansion of \cref{def:analytic}, hence the 
following result.

\begin{theorem} \label{thm:coherent-taylor-generalization}
A cartesian differential category is Taylor if and only if 
$\Tinf$ is an analytic Taylor expansion. Thus, any model of 
\cite{Manzonetto12} is a model of \cite{EhrhardWalch25}.
\end{theorem}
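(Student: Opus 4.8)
The plan is to peel off the parts of the statement that hold automatically and isolate the one nontrivial equation. Since the definition of \enquote{Taylor} already demands that $\rig$ have multiplicative inverses for integers, and since $\Tinf$ (through the coefficients in $\tn$) is only defined under that same hypothesis, both sides of the equivalence live in a category where $\rig$ has multiplicative inverses; I would state this at the outset and work under it. In that setting \cref{prop:naturality-Tinf} already guarantees that $\Tinf$ is an order $\omega$ Taylor expansion in the sense of \cref{def:taylor-expansion-axiom}. Hence, by \cref{def:analytic}, \enquote{$\Tinf$ is analytic} is equivalent to the single naturality equation $f \comp \Ssum_X = \Ssum_Y \comp \Tinf f$ for every $f \in \cat(X,Y)$, where $\Ssum = \sum_{k=0}^\infty \Sproj_k$. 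So the whole theorem reduces to showing that this naturality equation holds for all $f$ if and only if $\cat$ satisfies the Taylor identity \cref{eq:is-taylor}.

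For the forward direction I would assume naturality and precompose $f \comp \Ssum_X = \Ssum_Y \comp \Tinf f$ with $\prodtuple{x, u, 0, 0, \ldots} \in \cat(Z, \Tinf X)$ for arbitrary $x, u \in \cat(Z, X)$. On the left, $\Ssum_X \comp \prodtuple{x, u, 0, \ldots} = x + u$. On the right, $\Sproj_k \comp \Tinf f = \tn[k] f \comp \Snm[\omega][k]$, and precomposition annihilates every coordinate of index $\geq 2$; by the explicit formula of \cref{thm:tn-justification} together with the additivity of $\hod{f}{k}$ in its last arguments, the only composition of $k$ that survives is $(1,\ldots,1)$, so $\tn[k] f \comp \prodtuple{x, u, 0, \ldots, 0} = \frac{1}{\factorial k}\hod{f}{k}\comp\prodtuple{x, u, \ldots, u}$. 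Summing over $k$ then reproduces exactly the right-hand side of \cref{eq:is-taylor} (with the $k=0$ term giving $f \comp x$), which is the Taylor identity.

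For the converse I would verify the naturality equation directly on the \enquote{generic point}, taking the coordinate morphisms $\Sproj_k : \Tinf X \arrow X$. Setting $x = \Sproj_0$ and $u = \sum_{k \geq 1}\Sproj_k$, we have $\Ssum_X = x + u$, and \cref{eq:is-taylor} gives $f \comp \Ssum_X = f \comp x + \sum_{l\geq 1}\frac{1}{\factorial l}\hod{f}{l}\comp\prodtuple{x, u, \ldots, u}$. Expanding each $\hod{f}{l}\comp\prodtuple{x,u,\ldots,u}$ by multilinearity of $\hod{f}{l}$ in its last $l$ variables (which distributes over the countable sum defining $u$) yields
\[ f \comp \Ssum_X = f \comp \Sproj_0 + \sum_{l=1}^\infty \frac{1}{\factorial l} \sum_{k_1, \ldots, k_l \geq 1} \hod{f}{l} \comp \prodtuple{\Sproj_0, \Sproj_{k_1}, \ldots, \Sproj_{k_l}}. \]
On the other hand $\Ssum_Y \comp \Tinf f = \sum_{k\geq 0}\tn[k] f \comp \Snm[\omega][k]$, and \cref{thm:tn-justification} expands each summand as a sum over compositions $(i_1,\ldots,i_l)$ of $k$ into positive parts. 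Since every tuple $(k_1,\ldots,k_l)$ of positive integers is the unique composition of $k = k_1+\cdots+k_l$, reindexing the double sum produces the identical expression, so the naturality equation holds. Both directions therefore collapse to the same term-by-term identification, with the explicit formula of \cref{thm:tn-justification} doing all the bookkeeping.

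I expect the main obstacle to be analytic rather than combinatorial: one must check that every infinite sum here is well defined and may be reindexed freely, and that $\hod{f}{l}$ genuinely distributes over the countable sum $u = \sum_{k\geq 1}\Sproj_k$. This is exactly what the completeness hypotheses of \cref{sec:infinitary-sums} supply — the hom-sets are complete monoids, the action of $\rig$ respects countable sums, and the $\d$-linear components of $\hod{f}{l}$ preserve them — so the rearrangements are legitimate. Finally, the closing sentence follows formally: a model of \cite{Manzonetto12} is, by the remark preceding \cref{def:analytic}, precisely a closed Taylor cartesian differential category, so $\Tinf$ is an analytic Taylor expansion; and since analytic Taylor expansions coincide with those of \cite{EhrhardWalch25} in this setting, such a model is a model of \cite{EhrhardWalch25}.
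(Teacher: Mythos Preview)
Your proposal is correct and follows essentially the same route as the paper: in both directions you precompose the naturality equation with $\prodtuple{x,u,0,\ldots}$ (taking $x=\Sproj_0$, $u=\sum_{k\geq 1}\Sproj_k$ for the converse), expand via multilinearity of $\hod{f}{l}$, and match terms using the explicit formula of \cref{thm:tn-justification}. Your added remarks on why the countable sums may be rearranged and on the final sentence about models of \cite{Manzonetto12} are welcome elaborations but do not change the argument.
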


\begin{proof} Assume that $\Tinf$ is analytic. 
Then, 
$\Ssum \comp \Tinf \comp \prodtuple{x, u, 0, \ldots} 
= f \comp \Ssum \comp \prodtuple{x, u, 0, \ldots} 
= f \comp (x+u)$,
this is exactly \cref{eq:is-taylor} upon observing that 
$\Tinf f \comp \prodtuple{x,u, 0, \ldots} 
= \prodtuple{\frac{1}{\factorial{n}} \hod{f}{n} \comp 
\prodtuple{x, u, \ldots, u}}_{n=0}^{\infty}$.
Conversely, assume that $\cat$ is Taylor.
Then, 
\begin{align*}
f \comp \Ssum &= f \comp (\Sproj_0 + \Sproj_1) \comp \prodtuple{\Sproj_0, \sum_{k=1}^{\infty}
\Sproj_k} \\
&= \sum_{k=0}^{\infty} \frac{1}{\factorial{k}} \hod{f}{k} \comp 
\prodtuple{\Sproj_0, \sum_{k=1}^{\infty} \Sproj_k, \ldots, \sum_{k=1}^{\infty} \Sproj_k} 
\tag*{by \cref{eq:is-taylor}}
\end{align*}
By multilinearity of the derivative, we get that 
\[ f \comp \Ssum = \sum_{k=0}^{\infty} \sum_{a_1, \ldots, a_k \in \N^*} \frac{1}{\factorial{k}} 
\hod{f}{k} \comp \prodtuple{\Sproj_0, \Sproj_{a_1}, \ldots, \Sproj_{a_k}}. \]
By a reordering of the sum and \cref{thm:tn-justification}, this sum 
is equal to $\sum_{n=0}^{\infty} \tn f \comp \Snm[\omega][n] = \Ssum \comp \Tn f$.
So $\Ssum$ is natural and $\Tinf$ is analytic.
\end{proof}


%
%
%

\section{Conclusion}

This paper defined a functor $\Tn$ that performs an order $n$ Taylor expansion 
in a compositional way,
in any cartesian differential category (with multiplicative inverse for integers).
The fundamental properties of Taylor expansion then boil down to naturality 
equations that turns $\Tn$ into a monad. This monad provides a categorical
approach to higher order dual numbers~\cite{Szirmay20} and the jet bundle 
construction~\cite{Betancourt18,Huot22} 
recently developed in automated differentiation. 
Of particular interest is
the canonical monad $\Tn[i_1] \cdots \Tn[i_n]$, that freely
combines various Taylor expansion over different 
dual numbers, thanks to the theory of distributive laws.

A syntactical approach to this monad should be developed, with applications 
both in AD and in the differential lambda calculus. In particular, 
the coherent Taylor expansion of \cite{EhrhardWalch25} 
suggest that the coherent differential PCF of~\cite{Ehrhard22-pcf}
based on the tangent bundle $\T$ can be generalized to the infinitary
Taylor expansion $\Tinf$. 
This calculus would provide an alternative to the Taylor expansion of programs
that accounts for the determinism of computation, and
\cref{thm:coherent-taylor-generalization} ensures that this calculus 
would be a generalization of the previous one.

It should also be possible to unify this article with~\cite{EhrhardWalch25,Walch23}
in a single framework, by providing a more general theory of Taylor 
expansion with partial sums that encapsulate both the finite total sums of this 
article, and the partial \emph{positive} sums of~\cite{EhrhardWalch25}. A good candidate 
for the sums would be the notion of \emph{partial commutative monoids}~\cite{Hines13}.

This article also opens up a lot of perspective on the study of the categorical 
properties of Taylor expansion.
A generalization of the Taylor expansion functor to tangent categories~\cite{Cockett14}
should provide a categorical axiomatization of the jet construction. 
A generalization to reverse derivative~\cite{Cockett20} and reverse tangent 
categories~\cite{Crutwell24} would allow 
a computation of higher order derivatives and 
Taylor expansion in a reverse mode, which is very well suited for 
AD because of its efficiency. This should be similar to the notion 
of covelocity (the dual of jet bundles) mentionned in \cite{Betancourt18}.

\bibliographystyle{IEEEtran}
\bibliography{IEEEabrv, biblio.bib}
\appendices

\section{Proofs of \cref{sec:tangent-bundle}}
\label{appendix:tangent-bundle}

The goal of this appendix is to prove \cref{thm:naturality-Tn}, that 
we recall below. This proof was left 
as an appendix, because it mostly consists of an adaptation of the proof of 
Thm. 7 of~\cite{Walch23}.

\begin{theorem*} 
    Let $\d : \cat(X, Y) \arrow \cat(\S X, Y)$ be an operator
    that satisfies \ref{def:cdc-projections} and \ref{def:cdc-chain}.
    Then: 
    \begin{enumerate}
        \item $\d$ satisfies \ref{def:cdc-additive} iff $\T \Ssum = \S \Ssum$ and 
        $\T (\homothety r) = \S (\homothety r)$ for all $r \in \rig$ (that is, $\homothety r$ and $\Ssum$ are 
        $\d$-linear);
        \item $\d$ satisfies \ref{def:cdc-left-additive} iif 
        $\Sinjz : \idfun \naturalTrans \T$, $\SmonadSum : \T^2 \naturalTrans \T$
        and $\Sscale{r} : \T \naturalTrans \T$ (for all 
        $r \in \semiring$) are natural in 
        $\cat$\label{thm:tangent-bundle-natural-additive-annex};
        \item Assuming that $\Sinjz : \id \naturalTrans \T$ is natural in $\cat$, 
        $\d$ satisfies \ref{def:cdc-linear} iff $\Slift : \T \naturalTrans \T^2$
        is natural in $\cat$;
        \item $\d$ satisfies \ref{def:cdc-schwarz} iff $\Sswap : \T^2 \naturalTrans \T^2$ is natural
        in $\cat$.
    \end{enumerate}
    Thus, there is a bijection between derivative operators $\d$ and functors 
$\T$ such that $\T X = \S X$, $\T \proj_i = \S \proj_i$, 
$\T \Ssum = \S \Ssum$, $\T \homothety r = \S \homothety r$ and such that 
$\Sproj_0$, $\Sinjz$, $\SmonadSum$, $\Sscale{r}$ (for all $r \in \semiring$), 
$\Slift$ and $\Sswap$ are natural in $\cat$ with 
regard to $\T$.
\end{theorem*}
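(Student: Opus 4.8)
The plan is to exploit the bijection of \cref{eq:tangent-bundle-cdc} between $\d$ and $\T$, namely $\T f = \prodPair{f \comp \Sproj_0}{\d f}$ with inverse $\d f = \Sproj_1 \comp \T f$, and to argue exactly as in \cref{prop:tangent-bundle-functor}: each equation imposed on $\T$ is equivalent to the conjunction of its projections $\Sproj_0 \comp (-)$ and $\Sproj_1 \comp (-)$ onto the two factors of the codomain (and, for the transformations between iterates of $\T$, onto the four factors $\Sproj_i \comp \Sproj_j \comp (-)$). In every case the lower projections will hold automatically from the defining constraint $\Sproj_0 \comp \T f = f \comp \Sproj_0$ --- equivalently, from naturality of $\Sproj_0 : \T \naturalTrans \idfun$ --- while the top projection is precisely the corresponding axiom of \cref{def:cdc}. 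The one computational ingredient I would prepare first is a componentwise description of $\T^2 f = \T(\T f) = \prodPair{\T f \comp \Sproj_0}{\d(\T f)}$: using \cref{prop:D-pairing} one has $\d(\T f) = \prodPair{\d(f \comp \Sproj_0)}{\d \d f}$, and then \ref{def:cdc-chain} together with the $\d$-linearity of $\Sproj_0$ (so $\T \Sproj_0 = \S \Sproj_0$) expresses the mixed components $\Sproj_0 \comp \Sproj_1 \comp \T^2 f$ and $\Sproj_1 \comp \Sproj_0 \comp \T^2 f$ through $\d f$, while $\Sproj_1 \comp \Sproj_1 \comp \T^2 f = \d \d f$. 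This is the recurring computation behind items (2)--(4).

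For item (1) I would first restate the linearity of $\Ssum$ and $\homothety{r}$ via \cref{prop:T-linear}: $\T \Ssum = \S \Ssum$ means $\d \Ssum = \Ssum \comp \Sproj_1$, and similarly for $\homothety{r}$. If $\d$ is additive then, since the projections are $\d$-linear by \ref{def:cdc-projections}, $\d \Ssum = \d \Sproj_0 + \d \Sproj_1 = \Ssum \comp \Sproj_1$, and likewise $\d \homothety{r} = \homothety{r} \comp \Sproj_1$. Conversely, writing $f + g = \Ssum \comp \prodPair{f}{g}$ and applying \ref{def:cdc-chain} with \cref{prop:D-pairing} gives $\d(f+g) = \d \Ssum \comp \T \prodPair{f}{g} = \Ssum \comp \prodPair{\d f}{\d g} = \d f + \d g$, and the analogous computation with $r \cdot f = \homothety{r} \comp f$ recovers homogeneity; this is how $\d$-linearity of $\Ssum$ and $\homothety{r}$ forces \ref{def:cdc-additive}.

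For items (2)--(4) I would write each naturality square, project, and match the top component. For $\Sinjz : \idfun \naturalTrans \T$ it reads $\d f \comp \prodPair{\id}{0} = 0$; for $\Sscale{r} : \T \naturalTrans \T$ it reads $\d f \comp \Sscale{r} = r \cdot \d f$ (homogeneity); and for $\SmonadSum : \T^2 \naturalTrans \T$, using the componentwise description above, it reduces to additivity of $\d f$ in its second argument. Together these three are exactly \ref{def:cdc-left-additive}. For $\Slift : \T \naturalTrans \T^2$, under the standing assumption that $\Sinjz$ is natural, the top projection becomes $\d \d f \comp \prodtuple{x, 0, 0, u} = \d f \comp \prodtuple{x, u}$, which is \ref{def:cdc-linear}. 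For $\Sswap : \T^2 \naturalTrans \T^2$, since $\Sswap$ merely transposes the two middle coordinates, the top projection becomes $\d \d f \comp \prodtuple{x, u, v, w} = \d \d f \comp \prodtuple{x, v, u, w}$, which is \ref{def:cdc-schwarz}, and the remaining three projections collapse to identities. In each item the forward and backward directions are the two readings of the same projected equality.

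Finally I would assemble the bijection. Starting from a derivative $\d$, \cref{prop:tangent-bundle-functor} makes $\T$ a functor with $\T X = \S X$ and $\T \proj_i = \S \proj_i$, and items (1)--(4) supply the listed linearity and naturality properties. Conversely, a functor $\T$ with those properties has $\Sproj_0 : \T \naturalTrans \idfun$ natural, so $\Sproj_0 \comp \T f = f \comp \Sproj_0$ and the inverse assignment $\d f = \Sproj_1 \comp \T f$ of \cref{eq:tangent-bundle-cdc} applies; functoriality yields \ref{def:cdc-chain} through \cref{prop:tangent-bundle-functor}, and items (1)--(4) read backwards yield the remaining axioms, so $\d$ is a derivative. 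The two assignments are mutually inverse by \cref{eq:tangent-bundle-cdc}. The main obstacle is purely the bookkeeping for the $\T^2$ transformations in items (2) and (4): one must verify that every non-top projection of those naturality squares reduces to an identity, which is exactly where \cref{prop:D-pairing} and the $\d$-linearity of $\Sproj_0$ do the work.
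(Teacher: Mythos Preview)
Your proposal is correct and follows essentially the same approach as the paper: you prepare the componentwise description of $\T^2 f$ (the paper's \cref{lemma:T-double-explicit}), then for each item you project the relevant equation/naturality square onto its coordinates, observing that the non-top projections hold automatically from $\Sproj_0 \comp \T f = f \comp \Sproj_0$ (and, for $\Slift$, from the assumed naturality of $\Sinjz$) while the top projection is exactly the corresponding axiom of \cref{def:cdc}. The only minor imprecision is your phrase ``collapse to identities'' for the non-top projections of $\Sswap$: they are not literal identities but equalities between $\d f \comp \Sproj_0$ and $\d f \comp \T \Sproj_0$ on the two sides, which match via the componentwise description --- the paper verifies these three cases explicitly.
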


Let $\d : \cat(X, Y) \arrow \cat(X, \S Y)$ be an operator for all $X, Y$
that satisfies \ref{def:cdc-projections} and \ref{def:cdc-chain}.

\begin{proposition}
$\d$ satisfies \ref{def:cdc-additive} iff $\T \Ssum = \S \Ssum$ and 
$\T (\homothety r) = \S (\homothety r)$ for all $r \in \rig$ (that is, $\homothety r$ and $\Ssum$ are 
$\d$-linear).
\end{proposition}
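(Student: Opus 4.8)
The plan is to first use \cref{prop:T-linear} to unpack the right-hand side: the equations $\T \Ssum = \S \Ssum$ and $\T (\homothety r) = \S (\homothety r)$ say exactly that $\Ssum$ and $\homothety r$ are $\d$-linear, i.e. $\d \Ssum = \Ssum \comp \Sproj_1$ and $\d (\homothety r) = (\homothety r) \comp \Sproj_1$. So the task reduces to an equivalence between \ref{def:cdc-additive} and these two $\d$-linearity equations. The two algebraic identities that drive the argument are $f + g = \Ssum_Y \comp \prodPair{f}{g}$ and $r \cdot f = (\homothety r)_Y \comp f$, which trade the semimodule operations for postcomposition by $\Ssum$ and $\homothety r$, so that the chain rule \ref{def:cdc-chain} can be applied. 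I will also use the two definitional facts $\d(g \comp f) = \d g \comp \T f$ and $\Sproj_1 \comp \T h = \d h$ (the latter from \cref{eq:tangent-bundle-cdc}).

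For the implication from $\d$-linearity to additivity, I would compute directly. Substituting the hypothesis $\d \Ssum_Y = \Ssum_Y \comp \Sproj_1$ and then using $\Sproj_1 \comp \T \prodPair{f}{g} = \d \prodPair{f}{g} = \prodPair{\d f}{\d g}$ (the last equality by \cref{prop:D-pairing}) gives
\[ \d(f+g) = \d \Ssum_Y \comp \T \prodPair{f}{g} = \Ssum_Y \comp \Sproj_1 \comp \T \prodPair{f}{g} = \Ssum_Y \comp \prodPair{\d f}{\d g} = \d f + \d g. \]
The scalar case is shorter: $\d(r \cdot f) = \d((\homothety r)_Y \comp f) = \d (\homothety r)_Y \comp \T f = (\homothety r)_Y \comp \Sproj_1 \comp \T f = (\homothety r)_Y \comp \d f = r \cdot \d f$. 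Together these yield \ref{def:cdc-additive} (the case $r=0$ giving $\d 0 = 0$).

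For the converse, assuming \ref{def:cdc-additive}, I would unfold $\Ssum_X = \Sproj_0 + \Sproj_1$ and combine $\semiring$-additivity of $\d$ with the projection axiom \ref{def:cdc-projections}:
\[ \d \Ssum_X = \d \Sproj_0 + \d \Sproj_1 = \Sproj_0 \compl \Sproj_1 + \Sproj_1 \compl \Sproj_1 = \Ssum_X \compl \Sproj_1, \]
the last step being $\semiring$-additivity of precomposition. Likewise $\d (\homothety r)_X = r \cdot \d \id_X = r \cdot \Sproj_1 = (\homothety r)_X \comp \Sproj_1$, using $\d \id = \Sproj_1$ from \ref{def:cdc-chain}. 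This is precisely $\d$-linearity of $\Ssum$ and $\homothety r$, hence $\T \Ssum = \S \Ssum$ and $\T (\homothety r) = \S (\homothety r)$ by \cref{prop:T-linear}.

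The computations are routine; the only thing requiring care is the bookkeeping of the two distinct roles of the index-$1$ projection — the internal product projection $\Sproj_1 : \S X \arrow X$ appearing inside $\Ssum_X$, versus the tangent projection $\Sproj_1 : \S\S Y \arrow \S Y$ arising from the chain rule — together with the repeated appeals to the fact that precomposition is $\semiring$-additive in a left $\semiring$-additive category. I expect no genuine obstacle beyond this notational care.
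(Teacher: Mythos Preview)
Your proposal is correct and follows essentially the same approach as the paper's proof: both reduce via \cref{prop:T-linear} to the $\d$-linearity equations $\d \Ssum = \Ssum \compl \Sproj_1$ and $\d(\homothety r) = \homothety r \compl \Sproj_1$, and then carry out the same two computations (using \ref{def:cdc-chain} with \cref{prop:D-pairing} in one direction, and \ref{def:cdc-additive} with \ref{def:cdc-projections} and $\d\id = \Sproj_1$ in the other). The only cosmetic difference is that you present the two implications in the opposite order.
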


\begin{proof}
Assume that $\d$ satisfies \ref{def:cdc-additive}. 
By \cref{prop:T-linear} it suffices to prove that 
$\d \Ssum = \Ssum \comp \Sproj_1$ and 
$\d \homothety r = \homothety r \comp \Sproj_1$ to conclude 
that $\T \Ssum = \S \Ssum$ and 
$\T (\homothety r) = \S (\homothety r)$.
By \ref{def:cdc-linear},
$\d \proj_i = \proj_i \comp \Sproj_1$.
Thus, by \ref{def:cdc-additive},
\[ \d \Ssum =  
\d \Sproj_0 + \d \Sproj_1
= \Sproj_0 \comp \Sproj_1 + \Sproj_1 \comp \Sproj_1 
= (\Sproj_0 + \Sproj_1) \comp \Sproj_1 = \Ssum \comp \Sproj_1. \]
\[ \d \homothety r = \d (r \cdot \id) = r \cdot \d \id = r \cdot \Sproj_1 
= \homothety r \comp \Sproj_1 \]
and it concludes the proof of the forward implication. 
Conversely, assume that $\Ssum$ and $\homothety r$ are linear.
Then,
\begin{alignat*}{3} 
    \d (f + g) &= \d (\Ssum \comp \prodPair{f}{g}) 
&&= \d \Ssum \comp \T \prodPair{f}{g} \tag*{by \ref{def:cdc-chain}}\\
&= \Ssum \comp \Sproj_1 \comp \T \prodPair{f}{g} 
&&= \Ssum \comp \d \prodPair{f}{g} \tag*{by assumption} \\
&= \Ssum \comp \prodPair{\d f}{\d g}
&&= \d f + \d g \tag*{by \cref{prop:D-pairing}} \\
    \d (r \cdot f) 
&= \d (\homothety r \comp f)  
&&= \d \homothety r \comp \T f \tag*{by \ref{def:cdc-chain}}\\
&= \homothety r \comp \Sproj_1 \comp \T f \tag*{by assumption}
&&= \homothety r \comp \d f  \\
&= r \cdot \d f 
\end{alignat*}
so $\d$ is $\rig$-additive.
\end{proof}

Before moving on to the other results, we describe a bit more precisely the 
components of $\T^2 f$.
\begin{lemma} \label{lemma:T-double-explicit}
    For all $f \in \cat(X, Y)$, 
        \[ \T^2 f = \prodPair{\prodPair{f \comp \Sproj_0 \compl \Sproj_0}{\d f \comp \Sproj_0}}
        {\prodPair{\d f \comp \T \Sproj_0}{\d \d f}} \]
\end{lemma}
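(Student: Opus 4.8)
The plan is to unfold $\T^2 f = \T(\T f)$ directly from the defining equation $\T g = \prodPair{g \comp \Sproj_0}{\d g}$ of \cref{eq:tangent-bundle-cdc}, and then simplify the two resulting components separately. Taking $g = \T f$ gives $\T^2 f = \prodPair{\T f \comp \Sproj_0}{\d(\T f)}$, where the outer $\Sproj_0$ is the projection $\S^2 X \arrow \S X$. The whole proof is then a matter of rewriting each of these two components; the only point requiring attention is keeping this outer $\Sproj_0$ distinct from the inner one $\S X \arrow X$ that appears when $\T f$ is unfolded again.

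For the first component, I would unfold $\T f = \prodPair{f \comp \Sproj_0}{\d f}$ once more and use that pairing commutes with precomposition, $\prodPair{a}{b} \comp h = \prodPair{a \comp h}{b \comp h}$, to obtain $\T f \comp \Sproj_0 = \prodPair{f \comp \Sproj_0 \compl \Sproj_0}{\d f \comp \Sproj_0}$. Since the projections are $\d$-linear, the composite of the two projections may be written with $\compl$, exactly as in the statement.

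For the second component, I would compute $\d(\T f) = \d \prodPair{f \comp \Sproj_0}{\d f}$ and apply \cref{prop:D-pairing} to distribute $\d$ over the pairing, yielding $\prodPair{\d(f \comp \Sproj_0)}{\d \d f}$. Since $\prodPair{f \comp \Sproj_0}{\d f} = \T f$, the chain rule \ref{def:cdc-chain} in its functorial form $\d(g \comp f) = \d g \comp \T f$ gives $\d(f \comp \Sproj_0) = \d f \comp \T \Sproj_0$, so the second component equals $\prodPair{\d f \comp \T \Sproj_0}{\d \d f}$. Assembling the two components reproduces the claimed formula. Since every step is a direct substitution of a previously established identity, there is no genuine obstacle — the calculation merely has to keep the two levels of the $\S$-structure straight.
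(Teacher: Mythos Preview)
Your proof is correct. The paper's argument computes each of the four projections $\Sproj_i \compl \Sproj_j \comp \T^2 f$ separately: the two with $j=0$ are immediate from the definition of $\T$, while for the two with $j=1$ the paper pushes $\Sproj_i$ across $\T^2 f$ using naturality of $\Sproj_i$ in $\catLin$ together with $\S\Sproj_i = \T\Sproj_i$, and then applies the functorial chain rule. Your route is a mild reorganisation: you unfold $\T^2 f$ as $\prodPair{\T f \comp \Sproj_0}{\d(\T f)}$ and then handle the second component in one stroke via \cref{prop:D-pairing} and \ref{def:cdc-chain}. Since \cref{prop:D-pairing} is itself derived from exactly \ref{def:cdc-projections} and \ref{def:cdc-chain}, the two arguments rest on the same ingredients; yours is slightly more packaged, the paper's makes the role of each projection more explicit, which is convenient for the later lemmas that only need one coordinate at a time.
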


\begin{proof} Two of the coordinates are given by a straightforward computation, 
    by definition of $\T$.
\[ \Sproj_0 \compl \Sproj_0 \comp \T^2 f = 
\Sproj_0 \comp \T f \comp \Sproj_0 = f \comp \Sproj_0 \compl \Sproj_0 \]
\[ \Sproj_1 \compl \Sproj_0 \comp \T^2 f 
= \Sproj_1 \comp \T f \comp \Sproj_0 = \d f \comp \Sproj_0 \] 
The other two relies on \ref{def:cdc-chain} and \ref{def:cdc-projections}.
\begin{align*} 
    \Sproj_1 \compl \Sproj_1 \comp \T^2 f 
&= \Sproj_1 \compl \S \Sproj_1 \comp \T^2 f  \tag*{naturality of $\Sproj_1$ in $\catLin$} \\
&= \Sproj_1 \comp \T \Sproj_1 \comp \T^2 f \tag*{by \ref{def:cdc-projections} and \cref{prop:T-linear}} \\
&= \Sproj_1 \comp \T(\Sproj_1 \comp \T f) \tag*{by \ref{def:cdc-chain}} \\
&= \Sproj_1 \comp \T(\d f) = \d \d f \tag*{by definition of $\T$} \\
\Sproj_0 \compl \Sproj_1 \comp \T^2 f 
&= \Sproj_1 \compl \S \Sproj_0 \comp \T^2 f  \tag*{naturality of $\Sproj_1$ in $\catAdd$} \\
&= \Sproj_1 \comp \T \Sproj_0 \comp \T^2 f \tag*{by \ref{def:cdc-projections} and \cref{prop:T-linear}} \\
&= \Sproj_1 \comp \T(\Sproj_0 \comp \T f) \tag*{by \ref{def:cdc-chain}} \\
&= \Sproj_1 \comp \T(f \comp \Sproj_0) \tag*{by definition of $\T$}\\ 
&= \d f \comp \T \Sproj_0  \tag*{by \ref{def:cdc-chain} and definition of $\T$} 
\end{align*}
\end{proof}

We now decompose the proof of \cref{thm:tangent-bundle-natural-additive-annex} into 
$3$ different subgoals.

\begin{lemma} \label{lemma:Sscale-natural}
   Let $r \in \rig$. The following assertions are equivalent: \begin{enumerate}
    \item $\d f \comp \prodPair{x}{r \cdot u} = r \cdot \d f \comp \prodPair{x}{u}$
for all $f \in \cat(X, Y)$ and $x, u \in \cat(Z, X)$
\label{lemma:Sscale-natural-equation};
\item $\d f \comp \Sscale r = r \cdot \d f$; 
    \item $\Sscale{r} : \T \naturalTrans \T$ is a natural transformation in $\cat$.
    \label{lemma:Sscale-natural-natural}
\end{enumerate}
\end{lemma}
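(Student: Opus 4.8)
The plan is to establish the two equivalences $(1)\Leftrightarrow(2)$ and $(2)\Leftrightarrow(3)$ separately. Both are elementary once one unfolds the definition $\Sscale r = \prodPair{\Sproj_0}{r \cdot \Sproj_1}$ from \eqref{eq:cdc-natural-equations} and uses that $\_ \comp h$ is $\rig$-additive in a left $\rig$-additive category together with the $\rig$-additivity of the projections.

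For $(2)\Rightarrow(1)$, I would precompose the equation $\d f \comp \Sscale r = r \cdot \d f$ with an arbitrary pair $\prodPair{x}{u}$ and compute $\Sscale r \comp \prodPair{x}{u} = \prodPair{\Sproj_0 \comp \prodPair{x}{u}}{(r \cdot \Sproj_1)\comp\prodPair{x}{u}} = \prodPair{x}{r \cdot u}$, where the second component uses $\rig$-additivity of $\_ \comp \prodPair{x}{u}$ to pull $r$ out. For the converse $(1)\Rightarrow(2)$, I would specialize $(1)$ to the generic point $Z = \S X$, $x = \Sproj_0$, $u = \Sproj_1$, so that $\prodPair{x}{u} = \id_{\S X}$ and $\prodPair{x}{r \cdot u} = \Sscale r$; statement $(1)$ then reads precisely $\d f \comp \Sscale r = r \cdot \d f$, which is $(2)$.

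For $(2)\Leftrightarrow(3)$, I would use the standard projection argument that recurs throughout the paper: naturality of $\Sscale r$ with respect to $\T$ is the commutation $\Sscale r \comp \T f = \T f \comp \Sscale r$, and since both sides land in $\S Y$ it suffices to compare their two projections. Using $\Sproj_0 \comp \Sscale r = \Sproj_0$, $\Sproj_1 \comp \Sscale r = r \cdot \Sproj_1$, $\Sproj_0 \comp \T f = f \comp \Sproj_0$ and $\Sproj_1 \comp \T f = \d f$, the $\Sproj_0$-projection of both sides equals $f \comp \Sproj_0$ and so holds automatically, while the $\Sproj_1$-projection of the left side is $r \cdot \d f$ and of the right side is $\d f \comp \Sscale r$. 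Hence the square commutes exactly when $\d f \comp \Sscale r = r \cdot \d f$, which is $(2)$.

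No serious obstacle is expected: the proof is a direct unfolding of definitions. The only points demanding care are keeping track of the $\rig$-action as it passes through a composition (legitimate precisely because $\cat$ is left $\rig$-additive and the projections are $\rig$-additive), and observing that the $\Sproj_0$-component of the naturality square is automatic, so that the entire content of naturality sits in the top component $\Sproj_1$.
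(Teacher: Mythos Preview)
Your proposal is correct and follows essentially the same approach as the paper: the equivalence $(1)\Leftrightarrow(2)$ is obtained by specializing to $x=\Sproj_0$, $u=\Sproj_1$ in one direction and precomposing by $\prodPair{x}{u}$ in the other, and $(2)\Leftrightarrow(3)$ is obtained by observing that the $\Sproj_0$-component of the naturality square holds automatically while the $\Sproj_1$-component is exactly the equation in $(2)$.
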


\begin{proof} 
    $(1) \iff (2)$ Observe that $(2)$ is a special instance of $(1)$ in which 
    $x = \Sproj_0$ and $u = \Sproj_1$. Conversely, the composition of the equation of 
    $(2)$ by $\prodPair{x}{u}$ on the right yields the equation of $(1)$.

    $(2) \iff (3)$ Observe that
    \begin{align*}
\T f \comp \Sscale r &= \prodPair{f \comp \Sproj_0}{\d f \comp \Sscale r} \\
\Sscale r \comp \T f &= \prodPair{f \comp \Sproj_0}{r \cdot \d f} 
\end{align*} 
so $\Sscale r : \T \naturalTrans \T$ is natural if and only if 
$\d f \comp \Sscale r = r \cdot \d f$.
\end{proof}

\begin{lemma} \label{lemma:Sinjz-natural}
The following assertions are equivalent: \begin{enumerate}
    \item $\d f \comp \prodPair{x}{0} = 0$ for all $f \in \cat(X, Y), x \in \cat(Z, X)$;
    \item $\d f \comp \Sinjz = 0$;
    \item $\Sinjz$ is a natural transformation $\idfun \naturalTrans \T$.
\end{enumerate}
\end{lemma}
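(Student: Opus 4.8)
The plan is to follow exactly the template of the immediately preceding \cref{lemma:Sscale-natural}, proving the two equivalences $(1) \iff (2)$ and $(2) \iff (3)$ separately. Throughout I will use that $\Sinjz = \prodPair{\id}{0}$, that $\Sproj_0 \comp \Sinjz = \id$, and the pairing identity $\prodPair{a}{b} \comp c = \prodPair{a \comp c}{b \comp c}$.

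For $(1) \iff (2)$, I first observe that $(2)$ is nothing but the special instance of $(1)$ obtained by taking $Z = X$ and $x = \id_X$, since then $\prodPair{x}{0} = \prodPair{\id}{0} = \Sinjz$. Conversely, to recover $(1)$ from $(2)$ I precompose by an arbitrary $x \in \cat(Z, X)$ and use $\Sinjz \comp x = \prodPair{\id}{0} \comp x = \prodPair{x}{0}$, so that $\d f \comp \prodPair{x}{0} = \d f \comp \Sinjz \comp x = 0 \comp x = 0$.

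For $(2) \iff (3)$, I compute the two legs of the naturality square of $\Sinjz : \idfun \naturalTrans \T$. Using $\T f = \prodPair{f \comp \Sproj_0}{\d f}$ together with $\Sproj_0 \comp \Sinjz = \id$,
\[ \T f \comp \Sinjz = \prodPair{f \comp \Sproj_0 \comp \Sinjz}{\d f \comp \Sinjz} = \prodPair{f}{\d f \comp \Sinjz}, \qquad \Sinjz \comp f = \prodPair{f}{0}. \]
The first components already coincide, so the naturality equation $\T f \comp \Sinjz = \Sinjz \comp f$ holds if and only if the second components agree, i.e. if and only if $\d f \comp \Sinjz = 0$, which is precisely $(2)$.

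I expect no genuine obstacle here: the argument is a direct manipulation of the pairing/projection calculus, entirely parallel to \cref{lemma:Sscale-natural}. The only point worth emphasizing is that the first (the $\Sproj_0$) component of the naturality square is automatically satisfied thanks to $\Sproj_0 \comp \Sinjz = \id$, so that the entire content of naturality is concentrated in the rightmost projection --- exactly the general phenomenon described after \cref{thm:tangent-bundle-natural}, namely that the axioms of \cref{def:cdc} correspond to the rightmost projections of the naturality equations on $\T$.
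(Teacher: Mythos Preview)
Your proof is correct and follows essentially the same approach as the paper. The only cosmetic difference is that the paper instantiates $(1)$ with $x = \Sproj_0$ rather than $x = \id_X$ to obtain $(2)$, and phrases the converse as precomposing by $\prodPair{x}{0}$; your choice of $x = \id_X$ is if anything slightly cleaner, but the argument is the same.
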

\begin{proof} $(1) \iff (2)$ Observe that $(2)$ is a special instance of $(1)$ in which 
    $x = \Sproj_0$. Conversely, the composition of the equation of 
    $(2)$ by $\prodPair{x}{0}$ on the right yields the equation of $(1)$.
    
    $(2) \iff (3)$ Observe that 
    \begin{align*}
        \T f \comp \Sinjz 
        &= \prodPair{f}{\d f \comp \Sinjz} \\
        \Sinjz \comp f &= \prodPair{f}{0}
        \end{align*}
so $\Sinjz : \idfun \naturalTrans \T$ is natural if and only if $\d f \comp \Sinjz = 0$.
\end{proof}

\begin{lemma} \label{lemma:SmonadSum-natural}
The following assertions are equivalent: \begin{enumerate}
    \item $\d f \comp \prodPair{x}{u + v} = \d f \comp \prodPair{x}{u} 
    + \d f \comp \prodPair{x}{v}$ for all $f \in \cat(X, Y)$ and $x, u, v \in \cat(Z, X)$;
    \item $\d f \comp \SmonadSum = \d f \comp \T \Sproj_0 
    + \d f \comp \Sproj_0$ for all $f \in \cat(X, Y)$ ;
    \item $\SmonadSum$ is a natural transformation 
    $\T^2 \naturalTrans \T$.
\end{enumerate}
\end{lemma}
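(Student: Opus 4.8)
The plan is to follow the template established in \cref{lemma:Sscale-natural,lemma:Sinjz-natural}, proving the two implications $(1) \iff (2)$ and $(2) \iff (3)$ separately.

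For $(1) \iff (2)$, I would first observe that (2) is simply the instance of (1) in which $x$, $u$ and $v$ are replaced by the layered projections $\Sproj_0 \compl \Sproj_0$, $\Sproj_1 \compl \Sproj_0$ and $\Sproj_0 \compl \Sproj_1$ of $\S^2 X$: composing $\SmonadSum$ with $\Sproj_0$ exposes the base point $\Sproj_0 \compl \Sproj_0$ and the variation $\Sproj_0 \compl \Sproj_1 + \Sproj_1 \compl \Sproj_0$, while $\T \Sproj_0$ and $\Sproj_0$ expose that same base point paired respectively with $\Sproj_0 \compl \Sproj_1$ and $\Sproj_1 \compl \Sproj_0$. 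Conversely, to derive (1) from (2) I would precompose the equation of (2) by $\prodPair{\prodPair{x}{u}}{\prodPair{v}{0}} : Z \arrow \S^2 X$, which turns the left-hand side into $\d f \comp \prodPair{x}{u+v}$ and the right-hand side into $\d f \comp \prodPair{x}{v} + \d f \comp \prodPair{x}{u}$.

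For $(2) \iff (3)$, I would compare the two legs of the naturality square for $\SmonadSum$. From the definitions of $\T$ and $\SmonadSum$ one leg is $\T f \comp \SmonadSum = \prodPair{f \comp \Sproj_0 \compl \Sproj_0}{\d f \comp \SmonadSum}$, while reading off the four components of $\T^2 f$ from \cref{lemma:T-double-explicit} gives the other leg as $\SmonadSum \comp \T^2 f = \prodPair{f \comp \Sproj_0 \compl \Sproj_0}{\d f \comp \T \Sproj_0 + \d f \comp \Sproj_0}$. Since the base-point coordinates coincide unconditionally, the square commutes exactly when the two variation coordinates agree, which is precisely equation (2).

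I do not expect a genuine obstacle here: the whole argument is bookkeeping of the nested projections $\Sproj_i \compl \Sproj_j$ on $\S^2 X$. The only delicate point is the identification of the components of $\T^2 f$, and that is already isolated in \cref{lemma:T-double-explicit}; once those values are substituted, the lemma reduces to matching the second coordinates of two pairings.
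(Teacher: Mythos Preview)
Your proposal is correct and follows essentially the same route as the paper: the paper also reduces $(1)\iff(2)$ to the generic instance $x=\Sproj_0\compl\Sproj_0$, $u=\Sproj_0\compl\Sproj_1$, $v=\Sproj_1\compl\Sproj_0$ (your $u,v$ are swapped, which is harmless) with converse given by precomposition with $\prodPair{\prodPair{x}{u}}{\prodPair{v}{0}}$, and handles $(2)\iff(3)$ by comparing the two components of $\T f\comp\SmonadSum$ and $\SmonadSum\comp\T^2 f$ via \cref{lemma:T-double-explicit}.
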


\begin{proof}
    $(1) \iff (2)$ Observe that $(2)$ is an instance of $(1)$, taking 
    $x = \Sproj_0 \compl \Sproj_0$, $u = \Sproj_0 \compl \Sproj_1$ 
    and $v = \Sproj_1 \compl \Sproj_0$. Conversely, the composition of
    the equation of $(2)$ by $\prodPair{\prodPair{x}{u}}{\prodPair{v}{0}}$ 
    on the right yields the equation of $(1)$.

    $(2) \iff (3)$. Observe that 
\[ \Sproj_0 \comp \SmonadSum \comp \T^2 f 
= \Sproj_0 \compl \Sproj_0 \comp \T^2 f 
= f \comp \Sproj_0 \compl \Sproj_0 
= f \comp \Sproj_0 \comp \SmonadSum 
= \Sproj_0 \comp \T f \comp \SmonadSum \]
by \cref{lemma:T-double-explicit}. 
Thus, $\SmonadSum : \T^2 \naturalTrans \T f$ is natural if and only if 
\[ \Sproj_1 \comp \SmonadSum \comp \T^2 f 
= \Sproj_1 \comp \T f \comp \SmonadSum \] 
and we can check that this is equivalent to the equation of $(2)$. Indeed,
\begin{align*}
    \Sproj_1 \comp \SmonadSum \comp \T^2 f  
&= \Sproj_1 \compl \Sproj_0 \comp \T^2 f  
+ \Sproj_0 \compl \Sproj_1 \compl \T^2 f \\ 
&= \d f \comp \Sproj_0 + 
\d f \comp \T \Sproj_0 \tag*{by \cref{lemma:T-double-explicit}}  \\
\Sproj_1 \comp \T f \comp \SmonadSum &= \d f \comp \SmonadSum \tag*{by assumption}  
 \end{align*} 
so $(2)$ holds if and only if $\SmonadSum : \T^2 \naturalTrans \T$ is natural.
\end{proof}

\begin{proposition}
$\d$ satisfies \ref{def:cdc-left-additive} iif 
        $\Sinjz : \idfun \naturalTrans \T$, $\SmonadSum : \T^2 \naturalTrans \T$
        and $\Sscale{r} : \T \naturalTrans \T$ (for all 
        $r \in \semiring$) are natural in 
        $\cat$
\end{proposition}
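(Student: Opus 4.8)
The plan is to unfold the meaning of \ref{def:cdc-left-additive} and match each clause of $\semiring$-additivity to one of the three preparatory lemmas already established. Recall that \ref{def:cdc-left-additive} asserts that for every $f$, the map $u \mapsto \d f \comp \prodPair{\id}{u}$ is $\semiring$-additive, which by definition means it preserves sums and the scalar action (and consequently sends $0$ to $0$). Concretely, this is the conjunction of the two families of equations
\[ \d f \comp \prodPair{x}{u + v} = \d f \comp \prodPair{x}{u} + \d f \comp \prodPair{x}{v}, \qquad \d f \comp \prodPair{x}{r \cdot u} = r \cdot \d f \comp \prodPair{x}{u}, \]
ranging over all $f$, all $x, u, v$, and all $r \in \rig$. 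These are precisely item (1) of \cref{lemma:SmonadSum-natural} and item (1) of \cref{lemma:Sscale-natural}, respectively.

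For the forward implication I would assume \ref{def:cdc-left-additive}. The additivity clause is item (1) of \cref{lemma:SmonadSum-natural}, hence $\SmonadSum : \T^2 \naturalTrans \T$ is natural; the scalar clause is item (1) of \cref{lemma:Sscale-natural} for each $r$, hence $\Sscale{r} : \T \naturalTrans \T$ is natural for all $r$. Finally, since a $\semiring$-additive map sends $0$ to $0$, we have $\d f \comp \prodPair{x}{0} = 0$ (equivalently, one instantiates the scalar clause at $r = 0$, using $0 \cdot u = 0$), which is item (1) of \cref{lemma:Sinjz-natural}, so $\Sinjz : \idfun \naturalTrans \T$ is natural.

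For the backward implication I would assume that $\Sinjz$, $\SmonadSum$, and $\Sscale{r}$ (for all $r$) are natural. Running \cref{lemma:SmonadSum-natural} and \cref{lemma:Sscale-natural} in the reverse direction recovers the additivity and scalar equations displayed above, and their conjunction is exactly the statement that $u \mapsto \d f \comp \prodPair{\id}{u}$ is $\semiring$-additive, i.e.\ \ref{def:cdc-left-additive}. There is no genuine obstacle here, since the three lemmas carry all the computational content; the only point requiring care is the bookkeeping around the zero case. Because $\semiring$-additivity bundles together sum-preservation, scalar-preservation, and zero-preservation, I must check that the three lemmas collectively cover all of them without gaps, which holds once one observes that zero-preservation is just the $r = 0$ instance of the scalar clause and therefore adds no independent hypothesis. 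The naturality of $\Sinjz$ is thus already subsumed by that of the $\Sscale{r}$, and is listed separately only to make the resulting bijection with functorial structure explicit.
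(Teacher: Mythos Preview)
Your proof is correct and follows the same approach as the paper, which simply records the proposition as a direct consequence of \cref{lemma:Sscale-natural,lemma:Sinjz-natural,lemma:SmonadSum-natural}. Your extra remark that the naturality of $\Sinjz$ is already subsumed by that of $\Sscale{0}$ (since $\Sscale{0} = \Sinjz \compl \Sproj_0$ with $\Sproj_0$ split epi) is a correct and pleasant observation that the paper does not make explicit.
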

\begin{proof} Direct consequence of 
    \cref{lemma:Sscale-natural,lemma:Sinjz-natural,lemma:SmonadSum-natural}.
\end{proof}

\begin{proposition}
    Assuming that $\Sinjz : \id \naturalTrans \T$ is natural in $\cat$, the following assertions 
    are equivalent: \begin{enumerate}
        \item $\d$ satisfies \ref{def:cdc-linear}, that is, 
        $\d \d f \comp \prodtuple{x, 0, 0, u} = \d f \comp \prodtuple{x, u}$
        for all $f \in \cat(X, Y)$ and $x, u \in \cat(Z, X)$;
        \item $\d \d f \comp \Slift = \d f$;
        \item $\Slift : \T \naturalTrans \T^2$ is a natural transformation in $\cat$.
    \end{enumerate}
\end{proposition}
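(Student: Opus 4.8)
The plan is to follow the same three-step pattern as \cref{lemma:Sscale-natural,lemma:Sinjz-natural,lemma:SmonadSum-natural}, proving $(1) \iff (2)$ by instantiation and $(2) \iff (3)$ by reading off the naturality square coordinatewise. For $(1) \iff (2)$, I would first note that $\Slift \comp \prodPair{x}{u} = \prodtuple{x, 0, 0, u}$ by the very definition of $\Slift$, so that $(1)$ is literally the equation of $(2)$ precomposed with $\prodPair{x}{u}$. The backward implication is then the composition of the equation of $(2)$ on the right with $\prodPair{x}{u}$, and the forward implication is the instance of $(1)$ with $x = \Sproj_0$, $u = \Sproj_1$, using that $\prodtuple{\Sproj_0, 0, 0, \Sproj_1} = \Slift$ and that $\prodPair{\Sproj_0}{\Sproj_1} = \idfun$.

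The substance is in $(2) \iff (3)$. The naturality of $\Slift$ is the single equation $\T^2 f \comp \Slift = \Slift \comp \T f$, and I would verify it by projecting both sides onto the four components $\Sproj_i \compl \Sproj_j$ of $\S^2 Y$. On the right, the component description of $\Slift$ together with $\Sproj_1 \comp \T f = \d f$ gives immediately that the four projections of $\Slift \comp \T f$ are $f \comp \Sproj_0$, $0$, $0$ and $\d f$, for $(i,j)$ equal to $(0,0), (1,0), (0,1), (1,1)$ respectively. On the left I would expand $\T^2 f$ via \cref{lemma:T-double-explicit}. The $(0,0)$ projection is $f \comp \Sproj_0$, matching. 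The $(1,0)$ projection is $\d f \comp \Sproj_0 \comp \Slift$; since $\Sproj_0 \comp \Slift = \Sinjz \comp \Sproj_0$, this is $\d f \comp \Sinjz \comp \Sproj_0$, which vanishes by the standing hypothesis and \cref{lemma:Sinjz-natural}. The $(0,1)$ projection is $\d f \comp \T \Sproj_0 \comp \Slift$; here $\Sproj_0$ is $\d$-linear, so $\T \Sproj_0 = \S \Sproj_0$ by \cref{prop:T-linear}, and a direct computation gives $\S \Sproj_0 \comp \Slift = \Sinjz \comp \Sproj_0$, so this projection again reduces to $\d f \comp \Sinjz \comp \Sproj_0 = 0$. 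Finally the $(1,1)$ projection is $\d \d f \comp \Slift$, which equals the corresponding right-hand projection $\d f$ exactly when $(2)$ holds. Hence all four projections coincide iff $(2)$ holds, which is $(2) \iff (3)$.

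The one genuinely non-routine point is the $(0,1)$ projection: one has to recognise that $\S \Sproj_0 \comp \Slift$ collapses to $\Sinjz \comp \Sproj_0$ before the hypothesis on $\Sinjz$ can be applied. This is also the precise place where the assumption that $\Sinjz$ is natural is used — it is what makes the two ``mixed'' coordinates of $\T^2 f \comp \Slift$ vanish and lets the naturality square reduce to the single equation $(2)$. Every other component is either immediate from \cref{lemma:T-double-explicit} or a direct instance of \cref{lemma:Sinjz-natural}.
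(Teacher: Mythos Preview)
Your proof is correct and follows essentially the same route as the paper's: the equivalence $(1)\iff(2)$ by instantiation with $x=\Sproj_0$, $u=\Sproj_1$ and precomposition with $\prodPair{x}{u}$, and the equivalence $(2)\iff(3)$ by checking the four projections $\Sproj_i\compl\Sproj_j$ of the naturality square via \cref{lemma:T-double-explicit}, using $\Sproj_0\comp\Slift=\Sinjz\comp\Sproj_0$ and $\T\Sproj_0\comp\Slift=\Sinjz\comp\Sproj_0$ together with the assumed naturality of $\Sinjz$ to kill the mixed components. Your treatment of the $(0,1)$ projection, making explicit that $\T\Sproj_0=\S\Sproj_0$ by $\d$-linearity before computing $\S\Sproj_0\comp\Slift$, is slightly more detailed than the paper's but amounts to the same step.
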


\begin{proof}
$(1) \iff (2)$ Observe that $(2)$ is a special instance of $(1)$ in which 
$x = \Sproj_0$ and $u = \Sproj_1$. Conversely, the composition of the equation of 
$(2)$ by $\prodPair{x}{u}$ on the right yields the equation of $(1)$.

$(2) \iff (3)$ By \cref{lemma:T-double-explicit}
\[ \Sproj_1 \compl \Sproj_1 \comp \T^2 f \comp \Slift = \d \d f \comp \Slift 
\qquad \Sproj_1 \compl \Sproj_1 \comp \Slift \comp \T f = \Sproj_1 \compl \T f = \d f  \] 
so the equation of $(2)$ consists of the rightmost 
projection of the naturality equation on $\Slift$.
Thus, it suffices to show that 
\[ \Sproj_i \compl \Sproj_j \comp \T^2 f \comp \Slift = 
\Sproj_i \compl \Sproj_j \comp \Slift \comp \T f \]
for all $(i, j) \in \{0, 1\}^2 \setminus \{(1, 1)\}$ to conclude the equivalence.
We make use of \cref{lemma:T-double-explicit}.
\begin{itemize}
    \item Case $i = 0, j = 0$:
      $\Sproj_0 \comp \Sproj_0 \comp \Slift \comp \T f = \Sproj_0 \comp
      \T f = f \comp \Sproj_0$ and
      $\Sproj_0 \comp \Sproj_0 \comp \T^2 f \comp \Slift = f \comp
      \Sproj_0 \comp \Sproj_0 \comp \Slift = f \comp \Sproj_0$;
    \item Case $i = 1, j = 0$:
      $\Sproj_0 \comp \Sproj_1 \comp \Slift \comp \T f = 0 \comp \T f =
      0$ and
      $\Sproj_1 \comp \Sproj_0 \comp \T^2 f \comp \Slift
      = \d f \comp \Sproj_0 \comp \Slift 
      = \d f \comp \Sinjz \comp \Sproj_0 
      = 0$ by naturality of $\Sinjz$;
    \item Case $i = 0, j = 1$:
      $\Sproj_0 \comp \Sproj_1 \comp \Slift \comp \T f = 0 \comp \T f =
      0$ and
      $\Sproj_0 \comp \Sproj_1 \comp \T^2 f \comp \Slift = \d f
      \comp \T \Sproj_0 \comp \Slift = \d f \comp \Sinjz
      \comp \Sproj_0 =
      0$ by naturality of $\Sinjz$. 
  \end{itemize}
\end{proof}
For the last goal, we first prove useful equations on $\Sswap$.

\begin{lemma} \label{lemma:Sswap-explicit}
$\Sproj_i \compl \Sswap = \S \Sproj_i = \T \Sproj_i$ and 
$\T \Sproj_i \compl \Sswap = \S \Sproj_i \compl \Sswap = \Sproj_i$.
\end{lemma}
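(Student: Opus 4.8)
The plan is to reduce both equations to the swap characterization $\Sproj_i \compl \Sproj_j \compl \Sswap = \Sproj_j \compl \Sproj_i$, which is immediate from the definition of $\Sswap$ (equivalently, the case $n = m = 1$ of \cref{eq:iterated-monad}), together with the fact that every projection is $\d$-linear.

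For the first equation I would show $\Sproj_i \compl \Sswap = \S \Sproj_i$ by postcomposing with each inner projection $\Sproj_j$ and invoking joint monicity of $\Sproj_0, \Sproj_1$. On one side, the swap characterization gives $\Sproj_j \compl \Sproj_i \compl \Sswap = \Sproj_i \compl \Sproj_j$; on the other, since $\S \Sproj_i = \Sproj_i \times \Sproj_i = \prodPair{\Sproj_i \compl \Sproj_0}{\Sproj_i \compl \Sproj_1}$ by definition of the functor $\S$, we get $\Sproj_j \compl \S \Sproj_i = \Sproj_i \compl \Sproj_j$. As these agree for $j \in \{0, 1\}$, I conclude $\Sproj_i \compl \Sswap = \S \Sproj_i$. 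The remaining equality $\S \Sproj_i = \T \Sproj_i$ is then immediate from \cref{prop:T-linear}, because every projection is $\d$-linear by \ref{def:cdc-projections}.

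For the second equation, the identity $\T \Sproj_i = \S \Sproj_i$ just established reduces the claim to $\S \Sproj_i \compl \Sswap = \Sproj_i$. Again I would postcompose with $\Sproj_j$, computing $\Sproj_j \compl \S \Sproj_i \compl \Sswap = \Sproj_i \compl \Sproj_j \compl \Sswap = \Sproj_j \compl \Sproj_i$ by the swap characterization; the right-hand side is precisely the $j$-th inner projection of the outer projection $\Sproj_i \colon \S^2 X \arrow \S X$, so joint monicity yields $\S \Sproj_i \compl \Sswap = \Sproj_i$. There is no genuine obstacle here: the statement is pure bookkeeping on the explicit form of $\Sswap$ and the $\d$-linearity of projections, and the only point needing care is to keep track of which $\Sproj$ denotes an outer and which an inner projection under the index convention fixed by the definition of $\Sswap$.
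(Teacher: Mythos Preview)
Your proposal is correct and follows essentially the same approach as the paper: both arguments postcompose with the inner projections $\Sproj_j$, reduce everything to the swap identity $\Sproj_j \compl \Sproj_i \compl \Sswap = \Sproj_i \compl \Sproj_j$ together with $\Sproj_j \compl \S \Sproj_i = \Sproj_i \compl \Sproj_j$, and then invoke $\d$-linearity of the projections (via \cref{prop:T-linear}) to pass from $\S \Sproj_i$ to $\T \Sproj_i$. Your citation of \ref{def:cdc-projections} for the $\d$-linearity of projections is in fact the right one.
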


\begin{proof}
First, 
\[ \Sproj_j \compl \Sproj_i \compl \Sswap 
= \Sproj_i \compl \Sproj_j \quad 
\Sproj_j \compl \S \Sproj_j = \Sproj_j \compl \Sproj_i \]
so $\Sproj_i \compl \Sswap = \S \Sproj_i$.
Similarly, 
\[ \Sproj_j \compl \S \Sproj_i \compl \Sswap 
= \Sproj_i \compl \Sproj_j \compl \Sswap 
= \Sproj_j \compl \Sproj_i \]  
so $\S \Sproj_i \comp \Sswap = \Sproj_i$.
We conclude the proof by \cref{prop:T-linear} and \ref{def:cdc-linear}.
\end{proof}
\begin{proposition}
    The following assertions are equivalent: \begin{enumerate}
        \item $\d$ satisfies \ref{def:cdc-schwarz}, that is, 
        $\d \d f \comp \prodtuple{x, u, v, w} =
        \d \d f \comp \prodtuple{x, v, u, w}$ for all $f \in \cat(X, Y)$ and 
        $x, u, v, w \in \cat(Z, X)$;
        \item $\d \d f \comp \Sswap = \d \d f$;
        \item $\Sswap : \T^2 \naturalTrans \T^2$ is a natural transformation.
    \end{enumerate}
\end{proposition}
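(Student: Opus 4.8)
The plan is to reproduce the two-stage pattern already used above for $\Sscale r$, $\Sinjz$, $\SmonadSum$ and $\Slift$: first prove $(1) \iff (2)$ by specialization and re-composition with a generic tuple, and then prove $(2) \iff (3)$ by projecting the naturality square of $\Sswap$ onto its four coordinates and checking that only the top one carries content.

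For $(1) \iff (2)$, the key observation is that $\Sswap$ transposes the two inner indices, i.e.\ $\Sproj_i \compl \Sproj_j \comp \Sswap = \Sproj_j \compl \Sproj_i$, so that $\Sswap \comp \prodtuple{x,u,v,w} = \prodtuple{x,v,u,w}$; this is read off directly from the definition of $\Sswap$ in \cref{eq:cdc-natural-equations}. Then $(2)$ is exactly the instance of $(1)$ obtained by taking $x = \Sproj_0 \compl \Sproj_0$, $u = \Sproj_1 \compl \Sproj_0$, $v = \Sproj_0 \compl \Sproj_1$ and $w = \Sproj_1 \compl \Sproj_1$, for which $\prodtuple{x,u,v,w} = \id$ and $\prodtuple{x,v,u,w} = \Sswap$; conversely, composing the equation $\d\d f \comp \Sswap = \d\d f$ of $(2)$ on the right with a generic $\prodtuple{x,u,v,w}$ and using the transposition rule recovers $(1)$.

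For $(2) \iff (3)$, naturality of $\Sswap$ is the square $\T^2 f \comp \Sswap = \Sswap \comp \T^2 f$, which I would test against each of the four projections $\Sproj_i \compl \Sproj_j$. Using the transposition rule, the right-hand side projects to $\Sproj_j \compl \Sproj_i \comp \T^2 f$, whose value is read directly from \cref{lemma:T-double-explicit}. The $(1,1)$-projection gives precisely $\d\d f \comp \Sswap = \d\d f$, which is equation $(2)$. The three remaining projections hold unconditionally: the $(0,0)$-case reduces to $f \comp \Sproj_0 \compl \Sproj_0 \comp \Sswap = f \comp \Sproj_0 \compl \Sproj_0$, and for the mixed cases I would invoke \cref{lemma:Sswap-explicit}, using $\Sproj_0 \comp \Sswap = \T \Sproj_0$ in the $(1,0)$-case and $\T \Sproj_0 \comp \Sswap = \Sproj_0$ in the $(0,1)$-case, so that both sides collapse to $\d f \comp \T \Sproj_0$ and $\d f \comp \Sproj_0$ respectively. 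Hence the square commutes iff $(2)$ holds.

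The only real care needed --- and thus the main (very mild) obstacle --- is bookkeeping: one must consistently apply the index transposition induced by $\Sswap$ and align it with the asymmetric formula for $\T^2 f$, in particular not confusing the $(1,0)$-component $\d f \comp \Sproj_0$ with the $(0,1)$-component $\d f \comp \T \Sproj_0$ of $\T^2 f$. With \cref{lemma:T-double-explicit,lemma:Sswap-explicit} in hand there are no genuine difficulties, and the argument is a direct transcription of the $\Slift$ proof above.
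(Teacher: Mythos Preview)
Your proposal is correct and follows essentially the same approach as the paper's own proof: the same specialization/recomposition argument for $(1)\iff(2)$, and the same coordinate-by-coordinate check of the naturality square for $(2)\iff(3)$ using \cref{lemma:T-double-explicit} and \cref{lemma:Sswap-explicit}. The only cosmetic difference is that you specialize to $\prodtuple{x,u,v,w}=\id$ whereas the paper swaps $u$ and $v$ so that $\prodtuple{x,u,v,w}=\Sswap$, but this is immaterial since the equation is symmetric.
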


\begin{proof}
    $(1) \iff (2)$ Observe that $(2)$ is a special instance of $(1)$ in which 
$x = \Sproj_0 \compl \Sproj_0$, $u = \Sproj_0 \compl \Sproj_1$,
$v = \Sproj_1 \compl \Sproj_0$ and $w = \Sproj_1 \compl \Sproj_1$. 
Conversely, the composition of the equation of 
$(2)$ by $\prodPair{\prodPair{x}{u}}{\prodPair{v}{w}}$ 
on the right yields the equation of $(1)$.

$(2) \iff (3)$ By \cref{lemma:T-double-explicit},
\[  \Sproj_1 \compl \Sproj_1 \comp \T^2 f \comp \Sswap 
= \d \d f \comp \Sswap \quad 
\Sproj_1 \compl \Sproj_1 \comp \Sswap \comp \T^2 f 
= \Sproj_1 \compl \Sproj_1 \comp \T^2 f = \d \d f \]
so the equation of $(2)$ consists of the rightmost 
projection of the naturality equation on $\Sswap$.
Thus, it suffices to show that 
\[ \Sproj_i \compl \Sproj_j \comp \T^2 f \comp \Sswap = 
\Sproj_i \compl \Sproj_j \comp \Sswap \comp \T^2 f \]
for all $(i, j) \in \{0, 1\}^2 \setminus \{(1, 1)\}$ to conclude the equivalence.
We make use of \cref{lemma:T-double-explicit} and \cref{lemma:Sswap-explicit}.
    \begin{itemize}
    \item $i = 0, j = 0$: the computation is immediate;
    \item $i=1, j=0$:
    $\Sproj_1 \compl \Sproj_0 \comp \Sswap \comp \T^2 f  
    = \Sproj_0 \compl \Sproj_1 \comp \T^2 f = \d f \comp \T \Sproj_0$ and 
    $\Sproj_1 \comp \Sproj_0 \comp \T^2 f \comp \Sswap = 
      \d f \comp \Sproj_0 \comp \Sswap = \d f \comp \T \Sproj_0$;
    \item $i=0, j=1$: $\Sproj_0 \comp \Sproj_1 \comp \T^2 f \comp \Sswap
    = \d f \comp \T \Sproj_0 \comp \Sswap = \d f \comp \Sproj_0$ and
    $\Sproj_0 \compl \Sproj_1 \comp \Sswap \comp \T^2 f 
    = \Sproj_1 \comp \Sproj_0 \comp \T^2 f = \d f \comp \Sproj_0$.
    \end{itemize}
\end{proof}

This concludes the proof of \cref{thm:naturality-Tn}.

\section{Proof of \cref{prop:Slift-natural}}

\label{sec:appendix:Slift-natural}

\begin{lemma} \label{prop:double-tn-explicit}
    For all $i \in \interval{0}{n}$ and $j \in \interval{0}{m}$, we have
    $\tn[j] \tn[i] f = \d^{i+j} f \comp (\Stree[j] \hcomp \Stree[i])$.
\end{lemma}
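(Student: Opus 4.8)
The plan is to reduce the statement to two facts already established: the direct formula $\tn f = \d^n f \comp \Stree$ of \cref{prop:tn-direct}, and the functorial chain rule $\d^n(g\comp h)=\d^n g\comp\T^n h$ of \eqref{eq:faa-di-bruno-functor}. First I would apply \cref{prop:tn-direct} at order $j$ to the morphism $\tn[i] f$, then substitute $\tn[i] f = \d^i f\comp\Stree[i]$ (again \cref{prop:tn-direct}) and expand the outer $\d^j$ with the chain rule:
\[
\tn[j]\tn[i] f = \d^j(\tn[i] f)\comp\Stree[j]
= \d^j(\d^i f\comp\Stree[i])\comp\Stree[j]
= \d^j\d^i f\comp\T^j\Stree[i]\comp\Stree[j].
\]
Since $\d^n$ is the $n$-fold composite of the operator $\d$ (as already used in the proof of \cref{prop:tn-direct}, where $\d^{n+1}=\d\,\d^n$), the leftmost factor is exactly $\d^j\d^i f=\d^{i+j}f$.

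The one step that needs justification beyond formal bookkeeping is the replacement of $\T^j\Stree[i]$ by $\S^j\Stree[i]$. Because $\Stree[i]$ is a natural transformation in $\catLin$, each of its components is $\d$-linear, so $\T\Stree[i]=\S\Stree[i]$ by \cref{prop:T-linear}. To push this through $j$ applications of $\T$ I would observe that $\S$ preserves $\d$-linearity: $\S h = h\times h$, and $\catLin$ is cartesian with the same projections and pairings as $\cat$, so a product of $\d$-linear maps is again $\d$-linear. An induction on $j$ then gives the claim: if $\S^j\Stree[i]$ is $\d$-linear, then $\T^{j+1}\Stree[i]=\T(\S^j\Stree[i])=\S(\S^j\Stree[i])=\S^{j+1}\Stree[i]$.

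Substituting back yields $\tn[j]\tn[i] f = \d^{i+j} f\comp\S^j\Stree[i]\comp\Stree[j]$, and the composite $\S^j\Stree[i]\comp\Stree[j]$ is precisely the Godement composite $\Stree[j]\hcomp\Stree[i]$ by the definition of horizontal composition recalled in \cref{sec:iterated-monad} (with $\Stree[j]$ the outer and $\Stree[i]$ the inner transformation). This gives $\tn[j]\tn[i] f = \d^{i+j} f\comp(\Stree[j]\hcomp\Stree[i])$, as required. I expect the main obstacle to be the clean handling of the iterated-tangent-through-a-$\d$-linear-map identity $\T^j\Stree[i]=\S^j\Stree[i]$; once this is in place everything else is a routine application of \cref{prop:tn-direct} and the chain rule.
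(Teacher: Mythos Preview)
Your proof is correct and follows exactly the same route as the paper: apply \cref{prop:tn-direct} twice, use the chain rule \eqref{eq:faa-di-bruno-functor}, and recognize the resulting composite as the horizontal composition. You are in fact more thorough than the paper, which leaves the identification $\T^j\Stree[i]=\S^j\Stree[i]$ implicit and simply stops at $\d^{i+j} f \comp \T^j \Stree[i] \comp \Stree[j]$.
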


\begin{proof} This is a direct consequence of \cref{prop:tn-direct} and 
    \cref{eq:faa-di-bruno-functor}: 
    $\tn[j] \tn[i] f = \d^j(\d^i f \comp \Stree[i]) \comp \Stree[j]
= \d^{i+j} f \comp \T^j \Stree[i] \comp \Stree[j]$.
\end{proof}

\begin{proposition}
    For all $n, i, j$ such that $n \geq i, n \geq j$, 
    \begin{equation} \label{eq:tn-Snlift-annex}
    \tn[j] \tn[i] f \comp (\Snm[n][j] \hcomp \Snm[n][i]) \comp \Snlift 
    = \begin{cases} \tn[i] f \comp \Snm[n][i] \text{ if } i = j \\ 
    0 \text{ otherwise}
    \end{cases}
    \end{equation}
    and $\Snlift$ is a natural transformation $\Tn \naturalTrans \Tn^2$.
\end{proposition}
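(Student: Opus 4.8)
The plan is to split the statement into two parts: deducing the naturality of $\Snlift$ from the pointwise equation \eqref{eq:tn-Snlift-annex}, and then establishing that equation by an explicit computation. The first part is routine and I would carry it out first. Since the double projections $\Sproj_i \compl \Sproj_j$ (for $0 \le i, j \le n$) are jointly monic on $\Tn^2$, it suffices to compare both sides of $\Tn^2 f \comp \Snlift = \Snlift \comp \Tn f$ after such a projection. Using \eqref{eq:Tn-projections} followed by \eqref{eq:tn-Snlift-annex} on the left, and the diagonal characterization $\Sproj_i \compl \Sproj_j \compl \Snlift = \kronecker i j \Sproj_i$ from \eqref{eq:Tn-natural} on the right, both sides reduce to $\kronecker i j \, \tn[i] f \comp \Snm[n][i]$, which gives naturality.

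For the equation itself, first I would rewrite the left-hand side using \cref{prop:double-tn-explicit} together with the interchange law, obtaining $\d^{i+j} f \comp N$ with
\[
N = \big[(\Stree[j] \comp \Snm[n][j]) \hcomp (\Stree[i] \comp \Snm[n][i])\big] \comp \Snlift \;:\; \Sn \naturalTrans \S^{i+j}.
\]
Then I would compute the components of $N$. Writing $z^{\mathrm{out}}$ and $z^{\mathrm{in}}$ for the sub-words of a length-$(i+j)$ word $z$ carried by the $j$ outer and $i$ inner coordinates, the explicit form of $\Stree$ (\cref{prop:Stree-explicit}), the fact that $\Snm[n][m]$ fixes projections of index at most $m$, and the diagonal form of $\Snlift$ combine to give
\[
\Sproj_z \compl N = \frac{\factorial{\weight{z^{\mathrm{out}}}} \, \factorial{\weight{z^{\mathrm{in}}}}}{\factorial{z^{\mathrm{out}}} \, \factorial{z^{\mathrm{in}}}} \, \kronecker{\weight{z^{\mathrm{out}}}}{\weight{z^{\mathrm{in}}}} \, \Sproj_{\weight{z^{\mathrm{out}}}},
\]
so that $\Sproj_z \compl N$ vanishes unless the outer and inner weights of $z$ agree.

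Finally I would expand $\d^{i+j} f$ over unordered partitions via \cref{prop:higher-order-derivative} and precompose with $N$, so that each summand reads $\hod f k \comp \prodtuple{\Sproj_{\zeroword} \compl N, \Sproj_{z^1} \compl N, \ldots, \Sproj_{z^k} \compl N}$ for a partition $\{z^1, \ldots, z^k\}$ of $\intsegment{i+j}$. Since $\hod f k$ is $\d$-linear in its last $k$ arguments, any summand containing a block $z^l$ with $\weight{(z^l)^{\mathrm{out}}} \ne \weight{(z^l)^{\mathrm{in}}}$ vanishes; as the outer weights of the blocks sum to $j$ and the inner weights to $i$, when $i \ne j$ every partition has such an unbalanced block and the whole expression is $0$. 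When $i = j$, only the balanced partitions survive, and the task becomes a coefficient computation: after factoring the scalars $\prod_l (\factorial{c_l})^2 / (\factorial{(z^l)^{\mathrm{out}}} \, \factorial{(z^l)^{\mathrm{in}}})$ out of each multilinear term and regrouping partitions by the multiset of block weights $\{c_1, \ldots, c_k\}$, the sum of coefficients must collapse to the factor $\tfrac{1}{\factorial k}$ of \eqref{eq:tn-justification}, recovering $\tn[i] f \comp \Snm[n][i]$. I expect this last coefficient-matching in the case $i = j$ to be the main obstacle: the vanishing argument for $i \ne j$ is immediate, whereas reconciling the block-wise products of reindexed factorials with the multinomial bookkeeping behind \eqref{eq:tn-justification} is the genuinely combinatorial step, for which \cref{fact:number-of-partitions} (or a variant counting balanced partitions with prescribed block weights) should be the key tool.
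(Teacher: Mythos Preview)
Your proposal is correct and follows essentially the same route as the paper: both reduce to $\d^{i+j} f$ composed with $(\Stree[j] \hcomp \Stree[i])$ via \cref{prop:double-tn-explicit}, read off the components through \cref{prop:Stree-explicit}, kill the unbalanced blocks by multilinearity of $\hod f k$, and finish the $i=j$ case with a partition count based on \cref{fact:number-of-partitions}; the derivation of naturality from \eqref{eq:tn-Snlift-annex} is also identical, only placed at the end in the paper rather than the beginning.

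The one practical difference worth noting is that the paper expands $\d^{i+j} f$ over \emph{ordered} partitions (item~(3) of \cref{prop:higher-order-derivative}) rather than unordered ones. This buys a clean product decomposition in the balanced case: once $\weight{u^l}=\weight{v^l}$ for all $l$, an ordered partition of $\intsegment{2i}$ is exactly a pair of ordered partitions of $\intsegment{i}$, so one can fix $(u^1,\ldots,u^k)$ and sum over the $(v^1,\ldots,v^k)$ with prescribed sizes, whose count is the multinomial $\factorial{i}/\prod_l \factorial{\weight{u^l}}$ and cancels the $v$-coefficients on the nose, leaving precisely the ordered form of $\tn[i] f$ from the proof of \cref{thm:tn-justification}. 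Your unordered formulation works too, but the final regrouping you flag as ``the main obstacle'' is genuinely tidier in the ordered setting, so you may want to switch there.
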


\begin{proof}
By \cref{prop:higher-order-derivative},
\[ \d^{i+j} f = \hspace{-3em} \sum_{(u^1 v^1, \ldots, u^k v^k) \in 
\opart{\intsegment{i+j}}} 
\hspace{-3em} \frac{1}{\factorial k} \hod f k \comp \prodtuple{\Sproj_{\zeroword[i+j]}, \Sproj_{v^1} 
\hcomp \Sproj_{u^1}, \ldots, \Sproj_{v^l} \hcomp \Sproj_{u^l}} \]
where $\length{u^l} = i$ and $\length{v^l} = j$ for all $l$.
Thus, by \cref{prop:double-tn-explicit,prop:Stree-explicit}, 
\begin{multline*}
    \tn[j] \tn[i] f \comp (\Snm[m][j] \hcomp \Snm[n][i]) \comp \Snlift = 
    \hspace{-3em} \sum_{(u^1 v^1, \ldots, u^k v^k) \in \opart{\intsegment{i+j}}}
    \hspace{-3em} \frac{\left( \prod_{l=1}^k \factorial{\weight{u^l}} \right)
    \left( \prod_{l=1}^k \factorial{\weight{v^l}} \right)
    }{\factorial{i} \ \factorial{j} \ \factorial{k}}
    \\ \hod f k \comp \prodtuple{\Sproj_0, 
    \Sproj_{\weight{u^1}} \compl \Sproj_{\weight{v^1}}, \ldots, 
    \Sproj_{\weight{u^k}} \compl \Sproj_{\weight{v^k}}} \comp \Snlift
\end{multline*}
By definition of $\Snlift$ and by additivity of $\hod f k$ in its 
$k$ last variables, the terms of this sum vanish as soon as there 
is $l \in \intsegment{k}$
such that $\weight{u^l} \neq \weight{v^l}$. This is always the case when 
$i \neq j$, because $\sum_{j=1}^l \weight{u^l} = i \neq j = \sum_{l=1}^k \weight{v^l}$. 
So we get that 
\[ \tn[j] \tn[i] f \comp (\Snm[n][j] \hcomp \Snm[n][i]) \comp \Snlift 
= 0 \] 
for $i \neq j$. 
We now assume that $i = j$. Observe that
\begin{multline*}  
    \opart{\intsegment {i+i}} 
= \{(u^1 v^1, \ldots, u^k v^k) \st (u^1, \ldots, u^l) \in \opart{\intsegment{i}} \\
\text{ and } (v^1, \ldots, v^l) \in \opart{\intsegment{i}} \} 
\end{multline*}
By \cref{fact:number-of-partitions},
for any $(u^1, \ldots, u^k) \in \opart{\intsegment{i}}$, 
there is 
$\frac{\factorial{i}}{\factorial{\weight{u^1}} \ \cdots \factorial{\weight{u^k}}}$
elements $(v^1, \ldots, v^k) \in \opart{\intsegment i}$ such that 
$\weight{v^l} = \weight{u^l}$ for all $l \in \intsegment{k}$.
So we can regroup terms of the sum above to get that 
$\tn[i] \tn[i] f \comp (\Snm[m][i] \hcomp \Snm[n][i]) \comp \Snlift$
is equal to
\[ \hspace{-1em} \sum_{(u^1, \ldots, u^k) \in \opart{\intsegment i}} \hspace{-1em}
\frac{\prod_{l=1}^k \factorial{\weight{u^l}}}{\factorial{k} \ \factorial{i}}
\hod{f}{k} \comp \prodtuple{\Sproj_{0}, \Sproj_{\weight{u^1}}, \ldots, 
\Sproj_{\weight{u^k}}}. \]
This quantity is equal to $\tn[i] f$, as shown in the proof 
of \cref{thm:tn-justification}. This concludes the proof of 
\cref{eq:tn-Snlift-annex}. Finally, we prove that 
$\Snlift$ is natural. Observe that
\[ \Sproj_i \compl \Sproj_j \comp \Tn^2 \comp \Snlift 
= \tn[j] \tn[i] f \comp (\Snm[n][j] \hcomp \Snm[n][i]) \comp \Snlift . \] 
By \cref{eq:tn-Snlift-annex}, this is equal to 
$\kronecker i j \tn f \comp \Snm[n][i] 
= \Sproj_i \compl \Sproj_j \comp \Snlift \comp \Tn f$,
so $\Tn^2 f \comp \Snlift = \Snlift \comp \Tn f$ and 
$\Snlift$ is natural.
\end{proof}

\section{Proofs of \cref{sec:taylor-expansion-sound}}

\label{appendix:taylor-expansion-unique}

\begin{lemma*}
    For all Taylor expansion $\Un$, 
$\Sproj_i \comp \Un f = \Sproj_i \comp \Un f \comp \Serase[n][i]$.
\end{lemma*}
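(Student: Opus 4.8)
The plan is to fix the index $i$ and prove that the morphism $g := \Sproj_i \comp \Un f \in \cat(\Sn X, Y)$ is unaffected by erasing the coordinates of index $>i$; this is exactly the statement, since $\Serase[n][i] = \Smn[i][n] \compl \Snm[n][i]$. The case $i=0$ is immediate from the one naturality we are explicitly handed at the bottom level: naturality of $\Sproj_0 : \Un \naturalTrans \idfun$ gives $\Sproj_0 \comp \Un f = f \comp \Sproj_0$, and since $\Sproj_0 \compl \Serase[n][0] = \Sproj_0$ we obtain $\Sproj_0 \comp \Un f \comp \Serase[n][0] = f \comp \Sproj_0 = \Sproj_0 \comp \Un f$. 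So the real content is the range $i \geq 1$.

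The engine for $i \geq 1$ is the naturality of the scaling $\Snscale{r} : \Un \naturalTrans \Un$, which is part of \cref{def:taylor-expansion-axiom}. Composing the naturality square $\Un f \comp \Snscale{r} = \Snscale{r} \comp \Un f$ with $\Sproj_i$ and using $\Sproj_i \compl \Snscale{r} = r^i \Sproj_i$ yields, for every $r \in \rig$,
\[ g \comp \Snscale{r} = r^i \, g . \]
Since $\Snscale{r}$ multiplies the $k$-th coordinate by $r^k$, this says precisely that $g$ is homogeneous of weighted degree $i$, with coordinate $k$ carrying weight $k$. Morally, a weight-$i$ quantity cannot depend on a coordinate of weight $k>i$, and this is the content we must extract.

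To turn this into a proof inside the cartesian differential category, I would differentiate the homogeneity identity. Applying $\d$ with the chain rule \ref{def:cdc-chain} (noting $\T \Snscale{r} = \S \Snscale{r}$, since $\Snscale{r}$ is $\d$-linear as a tupling of scalar multiples of projections) together with $\rig$-additivity of $\d$ (\ref{def:cdc-additive}) transports the homogeneity to $\d g$; restricting the total variation to a single slot $k>i$ produces the partial derivative $\d_k g$, in which the slot-$k$ variation enters scaled by $r^k$. Pulling that factor out by additivity of $\d_k g$ in its slot (\ref{def:cdc-left-additive}) confronts a factor $r^k$ against the global factor $r^i$ with $k>i$; comparing the two for varying $r$ (using that $\rig$ has multiplicative inverses for integers) should force the slot-$k$ dependence of $g$ to drop. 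Finally, writing $\Serase[n][i]$ as the composite of the single-coordinate erasures of indices $i+1,\dots,n$ and discarding each in turn gives $g = g \comp \Serase[n][i]$.

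The \textbf{main obstacle} is exactly this extraction step. Analytically, ``weighted degree $i$ implies independence of coordinates of weight $>i$'' is proved by differentiating in $r$, letting $r\to 0$, and integrating; none of these are available here, since $\rig$ is merely a semiring, with neither subtraction nor limits at hand. The genuine work of the appendix is therefore to realize the degree-counting argument using only the additive and multiplicative-inverse structure of $\rig$ together with the partial-derivative calculus of $\cat$, and this is where the combinatorial weight of the statement is concentrated.
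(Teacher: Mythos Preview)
Your approach has a genuine gap, and it is structural rather than merely technical. In \cref{sec:taylor-expansion-sound} the ambient category $\cat$ is only assumed to be a cartesian left $\rig$-additive category; it is \emph{not} a cartesian differential category. The whole point of the lemma is to set up the definition of the $\un[i]$, after which \cref{prop:naturality-un} shows that $\un[1]$ is a derivative. So when you write ``inside the cartesian differential category, I would differentiate the homogeneity identity'' and invoke \ref{def:cdc-chain}, \ref{def:cdc-additive}, \ref{def:cdc-left-additive}, there is no $\d$ available: you would be using the derivative whose very construction rests on the lemma you are proving.

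Even granting a derivative, the extraction step you flag as the ``main obstacle'' is not a residual difficulty but an actual failure in general. The only data your argument uses beyond the base case is the naturality of $\Snscale{r}$, and that is too weak: over the Boolean semiring (which does have multiplicative inverses for integers, since $\phi(n)=1$ for all $n\geq 1$) one has $r^k=r$ for every $k\geq 1$, so the homogeneity identity $g\comp\Snscale{r}=r^i g$ carries no degree information whatsoever for $i\geq 1$. The paper avoids this entirely by working with $\SnmonadSum$ and $\Sninjz$ instead of $\Snscale{r}$: it proceeds by induction on $i$, factors the identity as $\SnmonadSum\comp g_i$ where $g_i$ separates the first $i$ coordinates from a shifted tail, and then uses naturality of $\SnmonadSum$ together with the inductive hypothesis and naturality of $\Sninjz$ to kill the tail contributions. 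That argument uses only the axiomatic naturalities of \cref{def:taylor-expansion-axiom} and no differential calculus at all.
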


\begin{proof}
By induction on $i$, the case $i = 0$ is a direct consequence of the naturality of 
$\Sproj_0$. For the inductive hypothesis, assume that $i \geq 1$.
Then, let $h_i \in \cat(\Sn X, \Sn X)$ and $g_i \in \cat(\Sn X, \Sn^2 X)$ defined by 
\[ \Sproj_k \comp h_i = \begin{cases}
\Sproj_{k+1} \text{ if } k \in \interval{i}{n-1} \\ 
0 \text{ otherwise} 
\end{cases} g_i = \prodtuple{\Serase[n][i], h_i, 0 ,\ldots} \]
so that $\SnmonadSum \comp g_i = \id$.
Then, 
\[ \Sproj_i \comp \Un f = \Sproj_i \comp \Un f \comp \SnmonadSum \comp g_i 
= \Sproj_i \comp \SnmonadSum \comp \Un^2 f \comp g_i \]
by naturality of $\SnmonadSum$. Thus, 
\[ \Sproj_i \comp \Un f = \Sproj_i \compl \Sproj_0 \comp \Un^2 f \comp g_i
+ \sum_{k=1}^i \Sproj_{i-k} \compl \Sproj_k \comp \Un^2 f \comp g_i. \]
Observe now that $\Sproj_i \compl \Sproj_0 \comp \Un^2 f \comp g
= \Sproj_i \comp \Un f \comp \Serase[n][i]$ by naturality of 
$\Sproj_0$.
Furthermore, for all $k \in \interval{1}{n}$, 
\begin{align*} 
 &   \Sproj_{i-k} \compl \Sproj_k \comp \Un^2 f \comp g_i \\
&= \Sproj_k \comp \Un \Sproj_{i-k} \comp \Un^2 f \comp g_i \tag*{because $\Un \Sproj_{i-k} = \Sn \Sproj_{i-k}$} \\
&= \Sproj_k \comp \Un\Sproj_{i-k} \comp \Un^2 f \comp \Un \Serase[n][i-k] \comp g_i
\tag*{inductive hypothesis} \\
&= \Sproj_k \comp \Un\Sproj_{i-k} \comp \Un^2 f \comp  \Sn \Serase[n][i-k] \comp g_i 
\tag*{$\U \Serase[n][i-k] = \Sn \Serase[n][i-k] $} \\
&= \Sproj_k \comp \Un\Sproj_{i-k} \comp \Un^2 f \comp \Sninjz \comp \Serase[n][i-k] 
 \tag*{$\Serase[n][i-k] \comp h_i = 0$ ($k \geq 1$)} \\
&= \Sproj_k \comp \Sninjz \comp \Sproj_{i-k} \comp \Un f \comp \Serase[n][i-k] 
\tag*{naturality of $\Sninjz$} \\
&= 0 \tag*{$k \geq 1$}
\end{align*}    
using the fact that $\Serase[n][i-k] \comp h_i = 0$, since $k \geq 1$.
This conclude the inductive step: $\Sproj_i \comp \Un f = 
\Sproj_i \comp \Un f \comp \Serase[n][i]$.
\end{proof}

\begin{proposition*}
    The maps $\phi : \Un \mapsto \un[1]$ and 
    $\psi : \d \mapsto \Tn$ induce a bijection between differentials
    and order $n$ Taylor expansions $\U$ such that 
    \begin{equation} \label{eq:taylor-is-taylor-annex}
        \Un \comp \prodtuple{x, u, 0, \ldots} = 
    \prodtuple{\frac{1}{\factorial k} \hod{f}{k} \comp \prodtuple{x, u, \ldots, u}}_{k=0}^n
    \end{equation}
    where $\hod{f}{k}$ is the higher order derivative defined from the derivative 
    $\un[1] = \phi(\U)$. 
\end{proposition*}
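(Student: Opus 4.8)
The plan is to prove the two composites are identities on their respective domains. Recall that $\phi\comp\psi=\id$ holds by definition, that $\un[1]=\phi(\U)$ is a derivative by \cref{prop:naturality-un} (this needs no hypothesis on $\U$), and that $\psi(\d)=\Tn$ is an order $n$ Taylor expansion by \cref{prop:taylor-canonical}. Since $\Sproj_i\comp\Un f=\un[i]f\comp\Snm[n][i]$ by \cref{prop:Un-first-coordinates}, and likewise $\Sproj_i\comp\Tn f=\tn[i]f\comp\Snm[n][i]$, the whole statement reduces to the identity of operators $\un[i]=\tn[i]$ for every $i$, where $\un[\cdot]$ and $\tn[\cdot]$ are built from the \emph{same} derivative $\d=\un[1]$. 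This single identity simultaneously yields $\psi(\phi(\U))=\Tn=\U$ and, taken on $\psi(\d)$, shows that $\psi(\d)$ lands in the prescribed class; together with $\phi\comp\psi=\id$ it gives the bijection.

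First I would check that $\Tn=\psi(\d)$ satisfies the membership condition \cref{eq:taylor-is-taylor}. Precomposing the explicit formula of \cref{thm:tn-justification} with $\prodtuple{x,u,0,\ldots,0}$, each summand $\frac{1}{\factorial k}\hod f k\comp\prodtuple{\Sproj_0,\Sproj_{i_1},\ldots,\Sproj_{i_k}}$ vanishes by additivity of $\hod f k$ (\cref{prop:higher-order-derivative}) unless every $i_j=1$, which forces $k=i$; the only surviving term is $\frac{1}{\factorial i}\hod f i\comp\prodtuple{x,u,\ldots,u}$, exactly \cref{eq:taylor-is-taylor}. In particular, writing $P=\prodtuple{\Sproj_0,\Sproj_1,0,\ldots,0}\in\cat(\Sn[i]X,\Sn[i]X)$, the membership condition forces, for every $\U$ in the class, $\un[i]f\comp P=\tn[i]f\comp P$, since both sides equal $\frac{1}{\factorial i}\hod f i\comp\prodtuple{\Sproj_0,\Sproj_1,\ldots,\Sproj_1}$.

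The core is then to prove $\un[i]=\tn[i]$ by strong induction on $i$, the cases $i=0$ and $i=1$ (where $\un[1]=\tn[1]=\d$) being immediate. For $i\geq 2$, set $\Delta=\un[i]f-\tn[i]f$ and subtract the two monad-sum equations \cref{eq:un-monad-sum} and \cref{eq:tn-monad-sum} taken at order $i$. Every term with $1\leq k\leq i-1$ involves only $\un[j]$ with $j<i$, hence cancels against its $\tn$ counterpart by the induction hypothesis; the two extremal terms $k=0$ and $k=i$ each reproduce $\Delta$, composed with the maps $q_0=\Snm[i][i]\hcomp\Snm[i][0]$ and $q_1=\Snm[i][0]\hcomp\Snm[i][i]$ that collapse respectively the inner and the outer copy of $\Sn[i]$ to degree $0$. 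This leaves $\Delta\comp\SnmonadSum[i]=\Delta\comp q_0+\Delta\comp q_1$. I would then precompose with the $\d$-linear section $\sigma:\Sn[i]\naturalTrans\Sn[i]\Sn[i]$ determined by $\Sproj_0\compl\Sproj_0\comp\sigma=\Sproj_0$, by $\Sproj_1\compl\Sproj_b\comp\sigma=\Sproj_{b+1}$ for $0\leq b\leq i-1$, and by $0$ on all remaining bidegrees. A direct check gives $\SnmonadSum[i]\comp\sigma=\id$, while $\{q_0\comp\sigma,\ q_1\comp\sigma\}=\{\Sninjz[i]\comp\Sproj_0,\ P\}$. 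Hence $\Delta=\Delta\comp\SnmonadSum[i]\comp\sigma=\Delta\comp\Sninjz[i]\comp\Sproj_0+\Delta\comp P$, and both summands vanish: the first by the unit laws (item~(1) of \cref{prop:naturality-un} and of \cref{prop:naturality-tn}), the second by the previous paragraph. Thus $\Delta=0$, closing the induction.

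The main obstacle is the index bookkeeping concentrated in the last step: identifying the two extremal terms of the subtracted monad-sum equations as the collapse maps $q_0,q_1$ through the horizontal-composition convention, and verifying that the single section $\sigma$ has the two prescribed marginals while still satisfying $\SnmonadSum[i]\comp\sigma=\id$. This is precisely where the combinatorics of Fa\`a di Bruno would otherwise reappear, and the monad-sum equation is exactly what lets one replace it by one clean reduction to the first-order slice $P$, whose value is pinned down by \cref{eq:taylor-is-taylor}.
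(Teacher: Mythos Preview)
Your approach is essentially the paper's: both prove $\un[i]=\tn[i]$ by induction on $i$, precomposing the monad-sum identity with a section of $\SnmonadSum[i]$ so that the two extremal summands become $\un[i]f\comp P$ (pinned by the hypothesis \cref{eq:taylor-is-taylor}) and $\un[i]f\comp\Sninjz[i]\comp\Sproj_0$ (vanishing by the unit law), while the middle terms agree by induction; your section $\sigma$ differs from the paper's $\prodtuple{\Serase[i][1],h_i,0,\ldots}$ but plays exactly the same role and yields the same two marginals.

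One genuine caveat: you set $\Delta=\un[i]f-\tn[i]f$ and ``subtract'' the two monad-sum equations, but the hom-sets here are only $\rig$-semimodules, so subtraction is not available (cancellativity is assumed only later, for \cref{thm:taylor-is-taylor}). The fix is immediate and is exactly what the paper does: do not form $\Delta$, but write $\un[i]f=\un[i]f\comp P+0+R_i$ and $\tn[i]f=\tn[i]f\comp P+0+R_i'$ after precomposition with the section, then conclude from $R_i=R_i'$ and $\un[i]f\comp P=\tn[i]f\comp P$.
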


\begin{proof}
    We have seen that $\phi \comp \psi = \id$. Let $\Un$ be an order $n$ Taylor expansion
    that satisfies \cref{eq:taylor-is-taylor-annex}, we prove that $\psi \comp \phi(\Un) = \Un$. 
    Let $\d = \un[1]$, and define $\tn$ and $\Tn$ from $\d$ as in \cref{sec:taylor-expansion}
    so that $\Tn = \psi(\phi(\Un))$.
 We want to prove that $\Un = \Tn $ so it suffices to prove that 
 $\un[i] f = \tn[i] f$ for all $f \in \cat(X, Y)$ and $i \leq n$. 
 We proceed by induction on $i$.
 
 The case $i = 0$ holds by naturality of $\Sproj_0$, and 
 the case $i = 1$ holds by definition, since $\tn[1] = \d = \un[1]$. Now, assume that 
 $i \geq 2$. Define
 $h_i \in \cat(\Sn[i] X, \Sn[i] X)$ as 
 $h_i = \prodtuple{0, \Sproj_2, \ldots, \Sproj_i, 0}$ 
 so that $\SnmonadSum[i] \prodtuple{\Serase[i][1], h_i, 0, \ldots, 0} 
 = \id$. Then, by \cref{eq:un-monad-sum,eq:un-monad-unit} of 
 \cref{prop:naturality-un},
 \begin{align*} 
     \un[i] f &= \un[i] f \comp \SnmonadSum[i] \comp \prodtuple{\Serase[i][1], h_i, 0, \ldots, 0} \\
     &= \un[i] f \comp \prodtuple{\Sproj_0, \Sproj_1, 0, \ldots, 0}
     + R_i
 \end{align*}
 with $R_i = \sum_{k=1}^{i-1} \un[k] \un[i-k] f \comp (\Snm[i][k] \hcomp \Snm[i][i-k]) 
 \comp \prodtuple{\Serase[i][1], h_i, 0, \ldots, 0}$.
 Similarly, by \cref{prop:naturality-tn}, 
 \[ \tn[i] f = \tn[i] f \comp \prodtuple{\Sproj_0, \Sproj_1, 0, \ldots, 0}
 + R_i' \] 
 with $R_i' = \sum_{k=1}^{i-1} \tn[k] \tn[i-k] f \comp (\Snm[i][k] \hcomp \Snm[i][i-k]) 
 \comp \prodtuple{\Serase[i][1], h_i, 0, \ldots, 0}$.
 By induction hypothesis, $R_i = R_i'$.  Furthermore,
 $\un[i] f \comp \prodtuple{\Sproj_0, \Sproj_1, 0, \ldots, 0} 
 = \tn[i] f \comp \prodtuple{\Sproj_0, \Sproj_1, 0, \ldots, 0}$ 
 by \cref{eq:taylor-is-taylor-annex}. So $\un[i] f = \tn[i] f$, this concludes the 
 inductive case. 
 \end{proof}

\begin{theorem*}
    Assume that the additive structure of $\cat$ is cancellative. 
    For all $n \in \Ninf$ such that $n \geq 1$, 
    there is a bijection between differentials $\d$ and order $n$ Taylor 
    expansions $\Un$, given by the maps 
    $\phi : \Un \mapsto \un[1]$ and 
    $\psi : \d \mapsto \Tn$.
\end{theorem*}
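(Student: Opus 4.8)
The plan is to reduce the theorem to \cref{prop:taylor-is-taylor-partial}. That proposition already gives a bijection between differentials and those order $n$ Taylor expansions satisfying \cref{eq:taylor-is-taylor}, and $\phi \comp \psi = \id$ holds unconditionally; so it suffices to prove that, when the additive structure of $\cat$ is cancellative, \emph{every} order $n$ Taylor expansion $\Un$ satisfies \cref{eq:taylor-is-taylor}. Writing $\d = \un[1]$ and building $\tn,\Tn$ from $\d$, I must show $\un[i]f \comp \prodtuple{x,u,0,\ldots} = \frac{1}{\factorial i}\hod{f}{i}\comp\prodtuple{x,u,\ldots,u}$ for all $f$ and all $i \le n$. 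Note the right-hand side equals $\tn[i]f\comp\prodtuple{x,u,0,\ldots}$: in the explicit formula of \cref{thm:tn-justification}, on such an input only the all-ones composition $i_1=\cdots=i_i=1$ survives.

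I would prove this by induction on $i$, simultaneously recording the full equality $\un[i]=\tn[i]$ (which is what \cref{prop:taylor-is-taylor-partial} consumes at lower orders). The cases $i=0,1$ are immediate, since $\un[0]f=f=\tn[0]f$ and $\un[1]=\d=\tn[1]$. For the inductive step I evaluate the multiplication equation \cref{prop:naturality-un}, $\un[i]f\comp\SnmonadSum[i] = \sum_{k=0}^i \un[i-k]\un[k]f\comp(\Snm[i][i-k]\hcomp\Snm[i][k])$, on the symmetric input $e \in \cat(Z,\Sn[i]\Sn[i]X)$ carrying $x$ in the doubly-constant slot and $u$ in both the inner and the outer linear slot (morally $e = x + u\delta + u\varepsilon$ with $\delta^{i+1}=\varepsilon^{i+1}=0$). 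A short projection computation gives $\SnmonadSum[i]\comp e = \prodtuple{x,2u,0,\ldots}$, so by the homogeneity equation \cref{prop:naturality-un} with $r=\phi(2)$ the left-hand side is $\phi(2^i)\,D_i$, where $D_i := \un[i]f\comp\prodtuple{x,u,0,\ldots}$.

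On the right-hand side the extreme terms $k=0$ and $k=i$ each reduce to $D_i$: using $\un[0]=f$ (and, to kill spurious contributions, the unit equation) they amount to feeding $\un[i]f$ the inner-degree-$0$, resp. outer-degree-$0$, part of $e$, which is $\prodtuple{x,u,0,\ldots}$ in the surviving variable. The middle terms $1\le k\le i-1$ involve only $\un[i-k]\un[k]$ with both indices $<i$, hence equal $\tn[i-k]\tn[k]$ by the induction hypothesis; by \cref{prop:double-tn-explicit} and multilinearity of $\hod{f}{i}$ each computes the mixed coefficient $\frac{1}{\factorial k \, \factorial{(i-k)}}\hod{f}{i}\comp\prodtuple{x,u,\ldots,u}$, and summing over $k$ yields $\phi(2^i-2)\,G$ with $G := \frac{1}{\factorial i}\hod{f}{i}\comp\prodtuple{x,u,\ldots,u}$. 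Thus $\phi(2^i)D_i = \phi(2)D_i + \phi(2^i-2)G$, which since $\phi$ is a semiring morphism rewrites as $\phi(2)D_i + \phi(2^i-2)D_i = \phi(2)D_i + \phi(2^i-2)G$. Cancellativity removes the common summand $\phi(2)D_i$, and multiplying by $\frac{1}{2^i-2}$ (available since $\rig$ has multiplicative inverses for integers and $2^i-2\ge 2$ for $i\ge2$) gives $D_i=G$, i.e. \cref{eq:taylor-is-taylor} at order $i$. Finally the decomposition in the proof of \cref{prop:taylor-is-taylor-partial} writes $\un[i]f$ and $\tn[i]f$ as the same diagonal term plus remainders that agree by the induction hypothesis, yielding $\un[i]=\tn[i]$ and closing the induction; with $\phi\comp\psi=\id$ this establishes the bijection.

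The hard part is precisely that the extreme terms $k=0,i$ are self-referential: they reproduce the unknown $D_i$ rather than lower-order data, so the equation determines $D_i$ only \emph{after} one cancels the repeated summand $\phi(2)D_i$ (this is exactly where cancellativity is indispensable, and why the general case is left open) and then inverts the integer scalar $\phi(2^i-2)$. A secondary nuisance is the bookkeeping of the inner/outer truncations $\Snm[i][i-k]\hcomp\Snm[i][k]$ and verifying carefully that the symmetric input indeed pushes forward to $\prodtuple{x,2u,0,\ldots}$ under $\SnmonadSum[i]$ and that the middle sum collapses to $\phi(2^i-2)G$.
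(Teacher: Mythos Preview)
Your proposal is correct and follows the same core idea as the paper: reduce to \cref{eq:taylor-is-taylor}, then induct on $i$ using the multiplication identity evaluated on the ``doubled linear'' input $x + u\delta + u\varepsilon$, so that homogeneity gives $2^i D_i$ on one side while the other side splits into the two self-referential extreme terms $2D_i$ plus lower-order middle terms, and cancellativity together with invertibility of $2^i-2$ finishes the step.

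The only substantive difference is that you carry the \emph{strong} induction hypothesis $\un[k]=\tn[k]$ for $k<i$ and bootstrap it at the end of each step via the decomposition from \cref{prop:taylor-is-taylor-partial}. The paper's write-up states only the diagonal hypothesis and then asserts ``by induction hypothesis, $Q_i=Q_i'$'', which really requires knowing $\un[k]\un[i-k]f=\tn[k]\tn[i-k]f$ on the relevant input, i.e.\ the strong hypothesis you make explicit. So your version is the more careful packaging of the same argument. A minor stylistic difference: you compute the middle sum explicitly to $(2^i-2)G$ via the binomial identity, whereas the paper avoids this by running the identical computation for $\tn$ in parallel and comparing; both are fine, and in fact the parallel computation is exactly what certifies your middle-sum value.
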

\begin{proof}
It suffices to prove that all order $n$ Taylor expansion $\Un$ satisfy 
\cref{eq:taylor-is-taylor-annex} in that setting. Let $\U$ be an order $n$ Taylor expansion.
Then, \cref{eq:taylor-is-taylor-annex} holds if and only if 
$\un[i] f \comp \prodtuple{\Sproj_0, \Sproj_1, 0, \ldots} 
= \tn[i] f \comp \prodtuple{\Sproj_0, \Sproj_1, 0, \ldots}$ for all 
$i \geq 1$, where $\tn[i]$ is defined 
from the derivative $\un[1]$ as in \cref{sec:taylor-expansion-def}.
We proceed by induction on $i$. The case $i = 1$ holds by definition 
of $\tn[1]$. 
By \cref{prop:naturality-un}, 
\[ \un[i] f \comp \prodtuple{\Sproj_0, 2 \Sproj_1, 0, \ldots} 
= 2^i \un[i] f \comp \prodtuple{\Sproj_0, \Sproj_1, 0, \ldots}\]
and, defining $g = \prodtuple{\Sproj_1, 0, \ldots}$, we 
have by \cref{eq:un-monad-sum}
\begin{align*} &\un[i] f \comp \prodtuple{\Sproj_0, 2 \Sproj_1, 0, \ldots}  \\
&= \un[i] f \comp \SnmonadSum[i] \comp
\prodtuple{\Serase[i][1], g, 0, \ldots} \\
&= 2 \un[i] f \comp \prodtuple{\Sproj_0, \Sproj_1, 0, \ldots} 
+ Q_i \end{align*} 
with $Q_i = \sum_{k=1}^{i-1} \un[k] \un[i-k] f \comp
(\Snm[i][k] \hcomp \Snm[i][i-k]) 
\comp \prodtuple{\Serase[i][1], g, 0, \ldots}$.
So, if the additive monoid is cancellative, it follows that 
\[ (2^n - 2) \un[i] f \comp \prodtuple{\Sproj_0, \Sproj_1, 0, \ldots, 0}
= Q_i . \]
Similarly, $(2^n - 2) \tn[i] f \comp \prodtuple{\Sproj_0, \Sproj_1, 0, \ldots, 0}
= Q_i'$
with $Q_i' = \sum_{k=1}^{i-1} \tn[k] \tn[i-k] f \comp
(\Snm[i][k] \hcomp \Snm[i][i-k]) 
\comp \prodtuple{\Serase[i][1], g, 0, \ldots}$.
By induction hypothesis, $Q_i = Q_i'$ so 
$\un[i] f \comp \prodtuple{\Sproj_0, \Sproj_1, 0, \ldots, 0} =
\tn[i] f \comp \prodtuple{\Sproj_0, \Sproj_1, 0, \ldots, 0}$
which concludes the proof.
\end{proof}

\end{document}